\newtheorem{theorem}{Theorem}[section]
\newtheorem{lemma}[theorem]{Lemma}
\date{}
\begin{document}

\title{Sharp Weyl Law  for Signed Counting Function of Positive Interior Transmission Eigenvalues}

\author{ E.Lakshtanov\thanks{Department of Mathematics, Aveiro University, Aveiro 3810, Portugal.  The work was supported by Portuguese funds through the CIDMA - Center for Research and Development in Mathematics and Applications and the Portuguese Foundation for Science and Technology (``FCT--Fund\c{c}\~{a}o para a Ci\^{e}ncia e a Tecnologia''), within project PEst-OE/MAT/UI4106/2014 (lakshtanov@ua.pt).} \and
 B.Vainberg\thanks{Department
of Mathematics and Statistics, University of North Carolina,
Charlotte, NC 28223, USA. The work was partially supported  by the NSF grant DMS-1008132 (brvainbe@uncc.edu).}}

\maketitle

\begin{abstract}
We consider the interior transmission eigenvalue (ITE) problem that arises when scattering by inhomogeneous media is studied. The ITE problem is not self-adjoint. We show that positive ITEs are observable together with plus or minus signs that are defined by the direction of motion of the corresponding eigenvalues of the scattering matrix (as they approach $z=1$). We obtain a Weyl-type formula for the counting function of positive ITEs, which are taken together with the ascribed signs. The results are applicable to the case when the medium contains an unpenetrable obstacle.
\end{abstract}

\textbf{Key words:}
interior transmission eigenvalues, anisotropic media, inside-outside duality, Weyl law, Shapiro-Lopatinski condition

\textbf{MSC:}
35J57, 78A46, 47A53

\section{Main results}

Let $\mathcal O\subset R^d$ be an open bounded domain with $C^2$ boundary $\partial O$ and the outward normal $\nu$.
\textit{The interior transmission eigenvalues (ITEs)} are defined as the values of $\lambda \in \mathbb C$
for which the problem
\begin{eqnarray}\label{Anone0}
-\Delta u - \lambda u =0, \quad x \in \mathcal O, \quad u\in H^2(\mathcal O), \\ \label{Anone}
-\Delta v - \lambda   n(x)v =0, \quad x \in \mathcal O, \quad v\in H^2(\mathcal O), \\ \label{Antwo}
\begin{array}{l}
u-v=0, \quad x \in \partial \mathcal O, \\
\frac{\partial u}{\partial \nu} - \frac{\partial v}{\partial \nu}=0, \quad x \in \partial \mathcal O,
\end{array}
\end{eqnarray}
has a non-trivial solution.
Here $n(x)>0,$ $ x\in \overline{\mathcal O}$, is a smooth positive function, $H^2(\mathcal O)$ is the Sobolev space. Only real positive ITEs will be considered below.

This spectral problem for the system of two equations in a bounded domain $\mathcal O\subset \mathbb R^d$ appears naturally when the scattering transmission problem (scattering of plane waves by an inhomogeneous medium) is studied. The scattering
problem (for $\lambda>0$) is stated as follows:
\begin{equation}\label{AnoneE}
 -\Delta \psi - \lambda  \widehat{n}(x)\psi =0,\quad x \in \mathbb R^d,
\end{equation}
where $\widehat{n}(x)=n(x),~~ x\in \overline{\mathcal O}; \quad \widehat{n}(x)=1, ~~ x\in \mathbb R^d\backslash\overline{\mathcal O},$ and $\psi$ is the sum of the incident plane wave and the scattered wave, i.e., $\psi=e^{ik\omega \cdot x}+\psi_{\text{sc}},~\lambda=k^2,~~\psi_{\text{sc}}$ satisfies the radiation conditions:
$$
\psi_{\text{sc}}=f(k,\theta,\omega)\frac{e^{ikr}}{r^{\frac{d-1}{2}}}  + O \left (r^{-\frac{d+1}{2}} \right ),\quad \theta=\frac{x}{r},~~r=|x|\to\infty.
$$

We also consider the case where $\mathcal O$ contains a compact obstacle $\mathcal V \subset \mathcal O, \partial \mathcal V \in C^2$ ($\mathcal V$ can be a union of a finite number of obstacles). Then equations (\ref{Anone}) and (\ref{AnoneE}) are replaced by similar equations in $\mathcal O \backslash \mathcal V,~~\mathbb R^d\backslash \mathcal V,$ respectively, with the Dirichlet or Neumann boundary condition on $\partial \mathcal V$.
For the sake of clarity, the proofs of all the statements will be given when $\mathcal V=\emptyset$. The case of $\mathcal V\neq\emptyset$ is addressed in Remark 3.

The main relation between the scattering and ITE problems is due to the following fact: if the far-field operator
\begin{equation}\label{farf}
F=F(k):L_2(S^{d-1}) \rightarrow L_2(S^{d-1}),~~F\phi=\int_{S^{d-1}}f(k,\theta,\omega)\phi(\omega)dS_\omega
\end{equation}
has zero eigenvalue at the frequency $k=k_0>0$, then $\lambda=k_0^2$ is an ITE. The proof of this fact is very simple and can be found in \cite{K86}.
 This relation between {\it positive} ITEs and operator $F$ is very important in the study of scattering by inhomogeneous media. In particular,
it has been extensively used in the linear sampling and the factorization methods of inverse scattering, starting with papers \cite{K86},\cite{cm},\cite{RSleeman}.

 One can consider the scattering matrix of the transmission problem (\ref{AnoneE}) instead of the far-field operator $F$. It is the unitary operator given by
\begin{equation}\label{sk}
S(k)= I + 2ik \overline{\alpha}  F ~ : ~ L_2(S^{d-1}) \rightarrow L_2(S^{d-1}), \quad \alpha=\frac{1}{4\pi}\left (\frac{k}{2\pi i} \right )^{\frac{d-3}{2}}.
\end{equation}
Thus, the existence of the eigenvalue $z=1$ of the operator $S(k)$ for some $k=k_0$ implies that $\lambda=k_0^2$ is an ITE.

While the latter statement (the existence of eigenvalue $z=1$ of  $S(k_0)$ implies that $\lambda=k_0^2$ is an ITE) is rather simple, the converse relation is much more delicate. Both relations together are called the weak inside-outside duality principle\footnote{This was studied in \cite{EP} for scattering by an obstacle and in \cite{kirschL} for the transmission problem.}. Roughly speaking, it says that an eigenvalue $z(k)$ of the scattering matrix $S(k)$ can have a one-sided limit $z=1$ as $k\rightarrow k_0$ if and only if $k_0^2$ is an ITE. As a byproduct of our main result, this principle will be justified below under minimal assumption on $n(x)$.

The eigenvalues $z_j(k)$ of the unitary operator $S(k)$ belong to the unit circle $C=\{z:|z|=1\}$. The operator $F$ is compact and its eigenvalues converge to zero, and therefore for each $k>0$ the eigenvalues $\{z_j(k)\}$ converge to $z=1$ as $j\to\infty$. Functions $z_j(k)$ are not always analytic at points $k$ where $z_j=1$ ($z=1$ is the essential point of the spectrum of $S(k)$). For example, $z=1$ is never an eigenvalue of $S(k)$ if $\mathcal O$ has a corner, see \cite{BPaiv2}. It is known in a quite general setting (see \cite{birman}, \cite{kato}) that the eigenvalues $z_j(k)$ are distributed in a neighborhood of the point $z=1$ very non-uniformly. One of the half-circles $C_\pm=C\bigcap \{\pm \Im z > 0\}$ usually contains at most a finite number of the eigenvalues, while the opposite half-circle contains infinitely many of them (with the limiting point at $z=1$).

Let us describe the inside-outside duality principle in the case of scattering by a soft or rigid obstacle $\mathcal O$ (see \cite{DorSmil},\cite{Smil},\cite{EP},\cite{EP2}). In this case, the interior Dirichlet or Neumann problem is considered instead of the interior transmission problem, and the eigenvalues of the Dirichlet or Neumann Laplacian in $\mathcal O$ are used instead of ITEs. The half-circle $C_+$ contains a finite number of the eigenvalues $\{z_j(k)\}$ in the case of the Dirichlet condition. If $\lambda=k_0^2$ is an eigenvalue of the interior Dirichlet problem of multiplicity $m$, then exactly $m$ points $z_j(k)$ from $C_+$ converge to $z=1$ as $k\to k_0-0$. These points move clockwise. A similar statement is valid in the case of the Neumann boundary condition: the half-circle $C_-$ contains a finite number of the eigenvalues $\{z_j(k)\}$ in this case, and $m$ of them converge to $z=1$ as $k\to k_0+0$. The latter eigenvalues also move clockwise with the increase of $k$.

This clockwise motion of all the eigenvalues $z_j(k)$ clarifies the earlier results on high frequency asymptotics of the total phase of $S(k)$ established in the case of scattering by an obstacle (see \cite{jk}, \cite{mr}):
\begin{equation}\label{argdet}
\arg \det S(k)=- \frac{\omega_d|\mathcal O|}{(2\pi)^{d-1}}\lambda^{\frac d 2}(1+\lambda^{-\varepsilon}),  \quad \lambda=k^2\to\infty,
\end{equation}
where $\omega_d$ is the volume of the unit ball in $R^d$ and $|\mathcal O|=\operatorname{Vol}(\mathcal O)$.
Indeed, due to the Weyl formula for the counting function of interior eigenvalues, the right-hand side above coincides with the number of eigenvalues on the interval $(0,\lambda)$ multiplied by $-2\pi$, i.e., it is equal to the sum of all the phases that make the total rotation (always moving clockwise) around the circle $|z|=1$.

The ITE problem was studied in \cite{kirschL}. The clockwise convergence of an eigenvalue $z(k)$ of the scattering matrix $S(k)$ to $z=1$ was proved there under certain conditions on $n$ when $k$ approaches the smallest ITE $k=k_0$ from the appropriate side of $k_0$. The main coefficient in formula (\ref{argdet}) for the transmission problem should have the following property: it should decrease by $2\pi$ every time when $z_j(k)$ reaches $z=1$  while moving clockwise, and it should increase by $2\pi$ when $z_j(k)$ reaches $z=1$  while moving counterclockwise. This is one of the reasons why we introduce and study the signed Weyl formula for ITEs. Other reasons are discussed after the statement of Theorem \ref{th}.


The main result of this paper is as follows.
We will ascribe a value $\sigma_i=\pm 1$ to each simple {\it positive} ITE. Moreover, these values are observable and correspond to the clockwise (counterclockwise, respectively) motion of the eigenvalue of the scattering matrix toward $z=1$. In the case of an ITE of geometric multiplicity $m>1$, we ascribe a coefficient $\sigma_i, |\sigma_i|\leq m$, to the whole group, not to each of the ITEs separately. We will specify $\sigma_i$ in more detail later. Then we count {\it positive} eigenvalues together with the sign ascribed.
\begin{theorem}\label{th}
Let $n(x) \neq 1, x \in \partial \mathcal O$ (i.e., the inhomogeneity in the scattering problem has a sharp boundary). Then the Weyl law holds for the signed counting function of the interior transmission eigenvalues:
\begin{equation}\label{theorem2}
\sum_{i ~ : ~ 0< \lambda^T_i < \lambda} \sigma_i = \frac{\omega_d}{(2\pi)^d}   \gamma   \lambda^{\frac{d}{2}} + O(\lambda^{\frac{d}{2}-\delta}), \quad \lambda \rightarrow \infty, \quad \delta=\frac{1}{2d}, \mbox{ where }
\end{equation}
\begin{equation}\label{eqnew}
\gamma:= \operatorname{Vol}(\mathcal O) - \int_{\mathcal O \backslash \mathcal V}n^{\frac d 2}(x)dx \neq 0.
\end{equation}
\end{theorem}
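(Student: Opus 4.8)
The plan is to reduce the signed count of positive ITEs to the total variation of the argument of the determinant of a suitable finite-rank (or trace-class) perturbation of the identity, and then to evaluate that variation by a high-frequency asymptotic analysis of the boundary operator associated with the transmission problem. Concretely, I would first reformulate the ITE problem as a problem about an operator on the boundary $\partial\mathcal O$: to a pair $(u,v)$ solving the Helmholtz equations \eqref{Anone0}--\eqref{Antwo} one associates the difference of the interior Dirichlet-to-Neumann maps $\Lambda_n(\lambda)-\Lambda_1(\lambda)$, where $\Lambda_c(\lambda)$ sends $g\in H^{3/2}(\partial\mathcal O)$ to $\partial_\nu w|_{\partial\mathcal O}$ with $-\Delta w-\lambda c\, w=0$, $w|_{\partial\mathcal O}=g$. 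Then $\lambda$ is an ITE precisely when $\Lambda_n(\lambda)-\Lambda_1(\lambda)$ (equivalently, a regularized version $T(\lambda):=(\Lambda_n(\lambda)-\Lambda_1(\lambda))(\Lambda_1(0)-\text{const})^{-1}$ chosen to make it a compact perturbation of a fixed operator) has a nontrivial kernel; the sign $\sigma_i$ should be read off from the direction in which the relevant eigenvalue of $T(\lambda)$ crosses $0$, which by the inside-outside duality machinery quoted in the excerpt matches the clockwise/counterclockwise motion of $z_j(k)$ toward $z=1$.

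Next I would make the connection to the determinant precise. Writing $D(\lambda):=\det\!\big(I+K(\lambda)\big)$ for an appropriate trace-class realization of $T(\lambda)$, the number of sign changes of eigenvalues through zero on $(0,\lambda)$, counted with the signs $\sigma_i$, equals $\frac{1}{2\pi}$ times the total increment of $\arg D$ along the real axis, up to a controlled error coming from eigenvalues that approach but do not cross. This is where the hypothesis $n(x)\neq 1$ on $\partial\mathcal O$ enters decisively: it guarantees that $\Lambda_n(\lambda)-\Lambda_1(\lambda)$ is, to leading order in $\lambda$, an elliptic operator of order $1$ (its principal symbol is a nonzero multiple of $|\xi'|$), so the Shapiro--Lopatinski condition holds, the eigenvalues cross $z=1$ transversally with a definite sign, and the perturbation-determinant formalism is legitimate. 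I would therefore spend the bulk of the argument establishing this symbolic/ellipticity statement and the consequent trace-class bounds and resolvent estimates uniform in $\lambda$.

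Finally, I would compute the leading asymptotics of $\arg D(\lambda)$. By the standard identity relating $\det S(k)$ to the perturbation determinant (so that $\arg\det S(k)$ decreases by $2\pi$ at each clockwise passage through $z=1$ and increases by $2\pi$ at each counterclockwise passage — exactly the bookkeeping described before \eqref{argdet}), the signed ITE count is $-\frac{1}{2\pi}$ times the part of $\arg\det S(k)$ attributable to the inhomogeneity. The high-frequency asymptotics of $\arg\det S(k)$ is of the type \eqref{argdet} but now with the \emph{two} Helmholtz operators $-\Delta-\lambda$ and $-\Delta-\lambda n(x)$ contributing with opposite signs, since $S(k)$ compares scattering off $\widehat n$ with free scattering; the phase-volume (Weyl) heuristic then yields a main term proportional to $\operatorname{Vol}(\mathcal O)-\int_{\mathcal O\setminus\mathcal V}n^{d/2}(x)\,dx=\gamma$, with the constant $\frac{\omega_d}{(2\pi)^d}$ emerging from the density of states of the Laplacian. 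I would derive this either from a parametrix construction for the wave (or heat/resolvent) trace of the pair of operators, or by invoking the known expansion \eqref{argdet} in the obstacle case together with an interpolation/perturbation argument in $n$. The error term $O(\lambda^{d/2-\delta})$ with $\delta=\frac1{2d}$ would come from the remainder in the parametrix together with the number of eigenvalues lingering near $z=1$ without completing a crossing, which the ellipticity from $n\neq 1$ on $\partial\mathcal O$ controls.

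The main obstacle I anticipate is precisely the uniform-in-$\lambda$ control of the boundary operator $\Lambda_n(\lambda)-\Lambda_1(\lambda)$ near its ITEs: one must show that, away from a set of $\lambda$ of small measure, no eigenvalue of the regularized operator is anomalously close to the critical value, so that the jump in $\arg D$ is genuinely $\pm2\pi$ per ITE and the accumulated error stays within $O(\lambda^{d/2-\delta})$. Handling this requires quantitative lower bounds on the transversality of the crossings — essentially a quantitative Shapiro--Lopatinski estimate — combined with a counting bound on how many eigenvalues can be within $\varepsilon$ of $z=1$ at a given $k$. Everything else (the DtN reformulation, the determinant identity, the phase-volume computation of $\gamma$) is, modulo bookkeeping, standard once that uniform control is in hand.
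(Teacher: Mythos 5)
Your proposal has the right starting point (reduction to the difference of interior Dirichlet-to-Neumann maps) but then takes a route that contains two serious gaps, one of which is essentially circular. First, you propose to obtain the main term by invoking a high-frequency asymptotics of $\arg\det S(k)$ of the type \eqref{argdet} with $|\mathcal O|$ replaced by $\gamma$. No such formula is available for the transmission problem with $n$ jumping across $\partial\mathcal O$: the paper explicitly lists that formula as an \emph{application} of Theorem \ref{th}, to be deduced from it, and the cited result of Robert covers only smooth perturbations without a jump. An ``interpolation/perturbation argument in $n$'' cannot bridge the discontinuity at $\partial\mathcal O$, which is exactly the regime the theorem addresses. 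In the paper the main term comes instead from a spectral-flow conservation law for the self-adjoint operator $\widehat R(\lambda)=\sigma D\bigl((N^{in}_n-t)^{-1}-(N^{in}_0-t)^{-1}\bigr)D$: writing $n^-(\lambda')-n^-(\alpha)=n_1(\lambda')+n_2(\lambda')$, the signed ITE count is $n_2$, the term $n_1$ is the difference of Weyl counting functions of two \emph{impedance} problems (eigenvalues of $\widehat R$ escaping to $-\infty$ through its poles, a Friedlander-type argument), and the bound $n^-(\lambda)=O(\lambda^{d/2-\delta})$ supplies the error term and the value $\delta=\tfrac1{2d}$. Note also that the resolvent-difference form of $R$ is not optional: it is needed to handle singular ITEs, at which $N^{in}_0-N^{in}_n$ has a pole and a kernel simultaneously (Lemma \ref{ites}), a case your $T(\lambda)$ does not address.

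Second, your determinant bookkeeping assumes that each ITE contributes a clean $\pm2\pi$ to $\arg D$, i.e.\ that the eigenvalues $z_j(k)$ cross $z=1$ transversally. But $z=1$ is the essential point of the spectrum of $S(k)$, the $z_j(k)$ are in general not even continuous there, and the paper shows $m^+_i+m^-_i\neq m_i$ in general (an eigenvalue may approach $z=1$ from one side and ``disappear''), so there is no well-defined crossing to count. The correspondence between the signs $\sigma_i=m^+_i-m^-_i$ (defined by one-sided limits of scattering eigenvalues) and the analytic quantities $\alpha^+_i-\alpha^-_i$ (signs of leading Taylor coefficients of the eigenvalues $\mu_j(\lambda)$ of $\widehat R$) is the central and hardest part of the paper; it is proved via the representation $\alpha^{-1}F=Q^*[R_1+R^{-1}+iI]Q$ and a numerical-range argument for unitary operators (Lemma \ref{lemmaold1}), not quoted from inside-outside duality literature, which covers only the lowest ITE under stronger hypotheses. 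Finally, a factual correction: the hypothesis $n\neq1$ on $\partial\mathcal O$ does \emph{not} restore the Shapiro--Lopatinski condition (the paper stresses that the ITE problem is never elliptic), and $N^{in}_0-N^{in}_n$ is elliptic of order $-1$ with principal symbol $\lambda(n-1)/(2|\xi^*|)$ -- the order-one terms of the two symbols cancel -- not of order $1$ as you state.
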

We expect that a similar result is valid for other scattering problems related to Maxwell and Dirac equations, scattering on graphs, etc.

Let us stress that the asymptotics of the positive ITE spectrum with the signs $\sigma_i=\pm 1$ ascribed determines $\operatorname{Vol}(\mathcal O) - \int_{\mathcal O \backslash \mathcal V}n^{\frac d 2}(x)dx$ similarly to the situation with the Weyl asymptotics that determines the volume of the domain in the case of the Dirichlet or Neumann problem. Among other applications of Theorem \ref{th}, let us mention the justification of an analogue of formula (\ref{argdet}) for the transmission problem with a potential that has a jump on $\partial\mathcal O$.  The constant $|\mathcal O|$ in (\ref{argdet}) will be replaced by $\gamma$ in this case. A similar result for operators with infinitely smooth coefficients (without a jump on $\partial\mathcal O$) can be found in \cite{Robert}. We plan to discuss all these applications in a future publication.

The main difficulty in the proof of Theorem \ref{th} is related to the fact that problem (\ref{Anone0})-(\ref{Antwo}) is neither elliptic nor symmetric. Indeed, the Shapiro-Lopatinski conditions must hold at the boundary for the ellipticity of the problem. To convince a reader that these conditions are violated, without making any calculations, one can note that the ellipticity of a problem implies that its null space is finite dimensional. Additionally, if (\ref{Anone0})-(\ref{Antwo}) is elliptic, then the same problem with $n(x)=1$ would be elliptic. But the latter problem has an infinitely-dimensional kernel that contains all $(u,v)$ such that $u=v$. The violation of the symmetry can also be very easily checked.

The lack of symmetry and ellipticity makes the study of ITEs much more difficult than the study of eigenvalues of the Dirichlet or Neumann Laplacian. In particular, the discreteness of the spectrum of the ITE problem, the existence of real eigenvalues, and their asymptotics can not be obtained by soft arguments. Moreover, the existence of non-real ITEs was shown in \cite{nonreal}, and an example of an elliptic ITE problem where the set of ITEs is not discrete can be found in \cite[Examples 1,2]{LV4}.

{\bf Overview of previous results on ITE}.
There is extensive literature (see the review \cite{HadCak}) on the properties of ITEs and corresponding eigenfunctions. The following results are most closely related to our study. It was shown in \cite{sylv} that the set of ITEs is discrete if $n(x) \neq 1$ everywhere at the boundary of the domain $\partial \mathcal O$. The latter condition (which means that the inhomogeneity has a sharp boundary) will be assumed to hold in our study. It was shown in \cite{BPaiv},\cite{Petkov},\cite{Faierman},\cite{lakvain5},\cite{Rob2}\footnote{Paper \cite{lakvain5} concerns the anisotropic ITE problem, papers \cite{BPaiv},\cite{Faierman} concern the case of $n(x)>1, x \in \overline{\mathcal O}$, and papers \cite{Petkov},\cite{Rob2} concern the case of $n(x) \neq 1, x \in \partial \mathcal O$.} that the standard  Weyl estimate holds for the {\it complex} ITEs located in
 an arbitrary cone containing the real positive semi-axis:
\begin{equation}\label{eqold}
\#\{i:|\lambda^T_i| \leq \lambda\} = \frac{\omega_d}{(2\pi)^d}  \left [\operatorname{Vol}(\mathcal O) + \int_{\mathcal O} n^{\frac d 2}(x) dx \right ] \lambda^{\frac{d}{2}}+ o(\lambda^{\frac d 2}), \quad \lambda \rightarrow \infty,
\end{equation}
where $\omega_d$ is the volume of the unit ball in $\mathbb R^d$.

Earlier, in a series of articles \cite{lakvain6},\cite{lakvain7},\cite{lakvain8}, we have shown that if $\gamma\neq 0$,
then the set of  {\it positive} ITEs (which are the ones important in applications) is infinite, and, moreover,
\begin{equation}\label{theorem}
\#\{i:0< \lambda^T_i < \lambda\} ~ \geq ~ \frac{\omega_d}{(2\pi)^d} |\gamma|   \lambda^{\frac{d}{2}} + O(\lambda^{\frac{d}{2}-\delta}), \quad \lambda \rightarrow \infty.
\end{equation}
Obviously, the coefficient in the main term above is always smaller than the corresponding coefficient in (\ref{eqold}).

The first result of the present paper shows that one of the half-circles $C_\pm$ contains at most a finite number of the eigenvalues of $S(k)$ in the case of the transmission problem. It is an extension of the result from \cite{kirschL}, where it was assumed that $n(x)\neq 1,~x\in\overline{\mathcal O}$. Our result is proved under the weaker assumption that $n(x)\neq 1$ only at the boundary. We also will show that if $n-1$ changes sign on the boundary, then both half-circles contain infinitely many eigenvalues of the scattering matrix. Namely, the following theorem will be proved.
\begin{theorem}\label{lemmaup}
\begin{enumerate}
\item If $n(x) < 1, x \in \partial \mathcal O$, then $S(k)$ has at most a finite number of eigenvalues $z_j(k)$ in  $C_+$ for each fixed $k>0$ (as in the case of the Dirichlet boundary condition).
\item If $n(x) > 1, x \in \partial \mathcal O$, then $S(k)$ has at most a finite number of eigenvalues $z_j(k)$ in  $C_-$ for each fixed $k>0$ (as in the case of the Neumann boundary condition).
\item If $n(x)-1$ takes both positive and negative values on $\partial \mathcal O$, then  $S(k),k>0,$ has infinitely many eigenvalues in both  $C_+$ and  $C_-$.
\end{enumerate}
\end{theorem}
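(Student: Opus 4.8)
The plan is to rewrite the eigenvalue count on the arcs $C_\pm$ as a count of positive, respectively negative, eigenvalues of a self-adjoint pseudodifferential operator on $\partial\mathcal O$ whose principal symbol is proportional to $1-n(x)$. Since $S(k)$ is unitary, the self-adjoint operator $\Im S(k):=\tfrac1{2i}(S(k)-S(k)^{*})$ has spectrum in $[-1,1]$ and, since $S(k)-I$ is compact, discrete spectrum away from $0$, so by the spectral theorem the number of eigenvalues $z_j(k)$ lying in the open arc $C_\pm=\{z\in C:\ \pm\Im z>0\}$ equals the number of strictly positive, respectively negative, eigenvalues of $\Im S(k)$. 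From (\ref{sk}) one computes $\Im S(k)=2k\,\Re(\overline\alpha F)$, where $\Re(\overline\alpha F):=\tfrac12(\overline\alpha F+\alpha F^{*})$, so the whole question reduces to the signature of $\Re(\overline\alpha F)$. To get at it I would factor $F$ through the boundary. Let $H\colon L^2(S^{d-1})\to L^2(\partial\mathcal O)$ be the Herglotz trace operator $Hg=\bigl(\int_{S^{d-1}}e^{ik\omega\cdot x}g(\omega)\,dS_\omega\bigr)\big|_{\partial\mathcal O}$, let $\Lambda^{\mathrm{int}}_\mu$ be the interior Dirichlet-to-Neumann map of $-\Delta-\mu$ on $\mathcal O$, and $\Lambda^{\mathrm{ext}}_\lambda$ the exterior radiating Dirichlet-to-Neumann map of $-\Delta-\lambda$. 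Writing $\psi=u^i+u^s$ with $u^i$ the incident Herglotz wave of density $g$ and $u^s=\psi_{\mathrm{sc}}$, the transmission conditions give $(\Lambda^{\mathrm{ext}}_\lambda-\Lambda^{\mathrm{int}}_{\lambda n})(u^s|_{\partial\mathcal O})=(\Lambda^{\mathrm{int}}_{\lambda n}-\Lambda^{\mathrm{int}}_\lambda)(u^i|_{\partial\mathcal O})$, while Green's representation expresses the far field of the radiating $u^s$ through its Cauchy data as $Fg=(\pm\alpha)H^{*}(\Lambda^{\mathrm{ext}}_\lambda-\Lambda^{\mathrm{int}}_\lambda)(u^s|_{\partial\mathcal O})$ ($\alpha$ as in (\ref{sk}), the sign coming from orientation). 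Using $u^i|_{\partial\mathcal O}=Hg$ and combining these, $\overline\alpha F=c_0\,H^{*}\mathcal N H$ with $c_0=\pm|\alpha|^{2}\in\mathbb R$ and
\[
\mathcal N:=(\Lambda^{\mathrm{ext}}_\lambda-\Lambda^{\mathrm{int}}_\lambda)(\Lambda^{\mathrm{ext}}_\lambda-\Lambda^{\mathrm{int}}_{\lambda n})^{-1}(\Lambda^{\mathrm{int}}_{\lambda n}-\Lambda^{\mathrm{int}}_\lambda);
\]
since $c_0$ is real, $\Re(\overline\alpha F)=c_0\,H^{*}(\Re\mathcal N)H$ with $\Re\mathcal N:=\tfrac12(\mathcal N+\mathcal N^{*})$.

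The second step is to identify $\Re\mathcal N$ as a pseudodifferential operator and compute its principal symbol. The maps $\Lambda^{\mathrm{int}}_\mu$ are self-adjoint, and $\Lambda^{\mathrm{ext}}_\lambda$ differs from a self-adjoint operator by a smoothing one (its imaginary part factors through the far-field map), so each factor of $\mathcal N$ is an elliptic classical pseudodifferential operator, $\mathcal N$ has order $-1$, and $\Re\mathcal N$ has the same principal symbol as $\mathcal N$. The symbols of $\Lambda^{\mathrm{int}}_\mu$ and $\Lambda^{\mathrm{ext}}_\lambda$ are $\sqrt{|\xi|^{2}-\mu}$ and $-\sqrt{|\xi|^{2}-\lambda}$ (with $|\xi|$ the length in the metric induced on $\partial\mathcal O$); hence $\Lambda^{\mathrm{ext}}_\lambda-\Lambda^{\mathrm{int}}_\lambda$ has principal symbol $-2|\xi|$, the inverse $(\Lambda^{\mathrm{ext}}_\lambda-\Lambda^{\mathrm{int}}_{\lambda n})^{-1}$ has principal symbol $-1/(2|\xi|)$, and, after the leading terms cancel, $\Lambda^{\mathrm{int}}_{\lambda n}-\Lambda^{\mathrm{int}}_\lambda$ has principal symbol $\lambda(1-n(x))/(2|\xi|)$, so
\[
\sigma_{\mathrm{pr}}(\mathcal N)(x,\xi)=(-2|\xi|)\left(-\frac1{2|\xi|}\right)\frac{\lambda(1-n(x))}{2|\xi|}=\frac{\lambda\,(1-n(x))}{2|\xi|},\qquad (x,\xi)\in T^{*}\partial\mathcal O\setminus 0.
\]
Thus $c_0\,\Re\mathcal N$ is a self-adjoint classical pseudodifferential operator of order $-1$ whose principal symbol has the sign of $c_0\,(1-n(x))$ on $\partial\mathcal O$.

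The last step uses two standard facts about a self-adjoint classical pseudodifferential operator $A$ of negative order on the compact manifold $\partial\mathcal O$: (i) if $\sigma_{\mathrm{pr}}(A)>0$ (resp.\ $<0$) everywhere, then $A$ has only finitely many negative (resp.\ positive) eigenvalues --- reduce to order $0$ by conjugating with a power of $(1-\Delta_{\partial\mathcal O})^{1/2}$ and apply G{\aa}rding's inequality together with the min--max principle; (ii) if $\sigma_{\mathrm{pr}}(A)$ takes both signs, then $A$ has infinitely many eigenvalues of each sign --- test the quadratic form on sequences microlocalized near points where the symbol is positive, resp.\ negative, with asymptotically disjoint frequency supports. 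Applying this to $c_0\,\Re\mathcal N$ covers the three cases of the statement, and it remains to transfer the dichotomy through $H$: for all but a discrete set of $k$ the operator $H$ is injective with dense range, and even at the exceptional values $\ker H$ and $(\overline{\mathrm{Ran}\,H})^{\perp}$ are finite-dimensional, so having finitely many positive (resp.\ negative) eigenvalues passes between $c_0\,\Re\mathcal N$ and $\Re(\overline\alpha F)=c_0\,H^{*}(\Re\mathcal N)H$ for \emph{every} $k>0$. Comparison with the Dirichlet case recalled above (equivalently, tracking the constant $c_0$ through Green's formula) fixes the sign so that $n<1$ on $\partial\mathcal O$ gives finitely many positive eigenvalues of $\Re(\overline\alpha F)$, i.e.\ finitely many $z_j(k)$ in $C_+$; the cases $n>1$ on $\partial\mathcal O$ and $n-1$ sign-changing on $\partial\mathcal O$ then follow by symmetry.

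The hard part will be the middle of this argument: rigorously establishing the factorization $\overline\alpha F=c_0H^{*}\mathcal N H$ with exactly this $\mathcal N$; dealing with the discrete sets of $k$ at which $\Lambda^{\mathrm{int}}_\lambda$, $\Lambda^{\mathrm{int}}_{\lambda n}$, or the inverse $(\Lambda^{\mathrm{ext}}_\lambda-\Lambda^{\mathrm{int}}_{\lambda n})^{-1}$ do not exist (which must be treated separately so that the conclusion holds at \emph{every} $k>0$); checking that the non-self-adjoint part of $\mathcal N$ has order $\le-2$, so that $\Re\mathcal N$ keeps the principal symbol $\lambda(1-n)/(2|\xi|)$; and pinning down the sign $c_0$. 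The spectral facts (i)--(ii) and the transfer through $H$ are routine.
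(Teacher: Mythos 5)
Your proposal is correct and follows essentially the same route as the paper: both rest on the factorization of $F$ through the Dirichlet-to-Neumann maps (the paper's Theorem \ref{t21}), the observation that the resulting boundary operator is, up to a smoothing anti-self-adjoint part, an elliptic pseudodifferential operator of order $-1$ with principal symbol proportional to $n(x)-1$ (Lemma \ref{mprl}), a G\aa rding-type/min--max argument keyed to the sign of that symbol, and the injectivity and dense range of the Herglotz operator (Lemma \ref{llstar}) to transfer the conclusion to $F$. Your repackaging via $\Im S(k)=2k\,\Re(\overline\alpha F)$ and your microlocal test-function argument for the sign-changing case are only cosmetic variants of the paper's direct quadratic-form argument on the span of eigenfunctions and its localization near $\Gamma^-=\partial\mathcal O\cap\{n<1-\varepsilon\}$ with a modified coefficient $n'$.
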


After we prove the theorem, we will study the ITE problem under the first two assumptions of the theorem. We will concentrate our attention on the half-circle with a finite number of points $z_j(k)$. Our next goal is to study the motion of the points $z_j(k)$ along the half-circle when $k$ is increasing. We distinguish clock- and counterclockwise motion of these points.  We are primarily interested in what happens when a point $z_j(k)$ reaches $z=1$ at some moment $k=k_0$, for example when $k\to k_0-0$.  When $k>k_0$, the point may stay on the same half-circle, move to another half-circle, or disappear. It is difficult to find a proper description of all the possibilities and justify them due to the lack of smoothness of the eigenvalues at the points where  $z_j(k)=1$. Thus we will split the motion of each point $z_j(k)$ into two parts: before $z_j(k)$ reaches $z=1$ at $k=k_0$ and after that event, without any attempt to relate the eigenvalues before $k_0$ and after $k_0$.

Denote by $m^+=m^+(k_0)~(m^-=m^-(k_0))$ the number of eigenvalues $z_j(k)$ of the scattering matrix $S(k)$ that are at the point $z=1$ at the moment $k=k_0$ while $z_j(k)$ are moving in the chosen half-circle clockwise (counterclockwise, respectively). For example, if $n(x)<1$ on $\partial\mathcal O$, then the chosen half-circle is $C_+$, and $m^\pm$ is the number of eigenvalues such that
\[
\lim_{k\to k_0\mp 0}z_j(k)=1+i0.
\]

We will prove
\begin{theorem}\label{t3} 1) Let  $n(x) < 1,~ x \in \partial \mathcal O$.
 For each ITE
$\lambda_i^T=k_i^2$ of geometric multiplicity $m_i$, there are $m^+_i\leq m_i$ eigenvalues $z_j(k)$ of the scattering matrix in the upper half-circle $C_+$ that approach $z=1$  when $k\to k_i-0$ (they are moving clockwise when $k$ increases) and $m^-_i\leq m_i$ eigenvalues that approach $z=1$  when $k\to k_i+0$ (they are moving counterclockwise when $k$ increases). There is an arc of the unit circle defined by $0<\arg z<\delta$ that is free of all other points $z_j(k), ~ |k-k_i| \ll 1.$

2) Let  $n(x) > 1,~ x \in \partial \mathcal O$.
 For each ITE
$\lambda_i^T=k_i^2$ of geometric  multiplicity $m_i$, there are $m^+_i\leq m_i$ eigenvalues $z_j(k)$ of the scattering matrix in the lower half-circle $C_-$ that approach $z=1$  when $k\to k_i+0$ and $m^-_i\leq m_i$ eigenvalues that approach $z=1$  when $k\to k_i-0$. There is an arc of the unit circle defined by $-\delta<\arg z<0$ that is free of all other points $z_j(k), ~ |k-k_i| \ll 1.$

3) The statements above remain valid if $\lambda=k_0^2>0$ is not an ITE (i.e., $m_i=0$). In this case, the corresponding arc of the unit circle is free of the eigenvalues of the scattering matrix when $|k-k_0| \ll 1.$
\end{theorem}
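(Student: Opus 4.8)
The plan is to trade the non-smooth curves $z_j(k)$ near $z=1$ for the spectral analysis near $0$ of a self-adjoint operator depending analytically on $k$. Since $S(k)=I+2ik\overline{\alpha}F(k)$ is unitary and $F(k)$ is real-analytic in $k>0$, the eigenvalues $z_j(k)=1+2ik\overline{\alpha}\mu_j(k)$ ($\mu_j$ the eigenvalues of the compact normal operator $F(k)$, so $\mu_j\to0$) lie on the circle $\{|1+2ik\overline{\alpha}\mu|=1\}$, which passes through $z=1$ tangentially to the line through $0$ of direction $ie^{-i\pi(d-1)/4}$. Hence, for small $\mu_j$, the sign of the eigenvalue of the compact self-adjoint operator $B(k):=\operatorname{Im}\!\big(e^{i\pi(d-1)/4}F(k)\big)$ on the corresponding eigenvector decides whether $z_j$ lies in $C_+$ or in $C_-$, while a vanishing eigenvalue of $B(k)$ forces $\mu_j$ onto the line $e^{-i\pi(d-1)/4}\mathbb R$, which near $0$ meets the far-field circle only at the origin, i.e.\ forces $z_j=1$. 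By Kato--Rellich theory the eigenvalues of $B(k)$ may be followed as real-analytic functions of $k$. The crucial preliminary step is to show that $z=1\in\operatorname{spec}S(k)$, equivalently $\ker F(k)\neq\{0\}$, if and only if $k^{2}$ is an ITE, with equal multiplicities; the easy implication is recalled in the introduction, and for the converse (the ``weak inside--outside'' direction) I would use the factorisation $F(k)=c\,\mathcal H(k)^{*}\mathcal G(k)\mathcal H(k)$ through the Herglotz operator $\mathcal H(k):L_2(S^{d-1})\to L_2(\mathcal O)$ and the interior solution operator $\mathcal G(k)$, giving $B(k)=c\,\mathcal H^{*}(\operatorname{Im}\mathcal G)\mathcal H$, together with Green's formula identifying $\ker(\Lambda_\lambda-\Lambda_{\lambda,n})$, $\lambda=k^{2}$, with the ITE eigenspace (for $\lambda$ outside the interior Dirichlet spectrum, as may be assumed near a fixed ITE). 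The assumption $n\neq1$ on $\partial\mathcal O$ enters twice: it makes $\Lambda_\lambda-\Lambda_{\lambda,n}$ a first-order elliptic pseudodifferential operator with principal symbol $\propto(\sqrt n-1)|\xi'|$, so that the range of $\mathcal H$ is large enough, and the definite sign of that symbol makes $B(k)$ essentially semidefinite — which is Theorem~\ref{lemmaup} and singles out the half-circle carrying only finitely many $z_j(k)$.

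Fix an ITE $k_i$ and consider case 1), $n<1$ on $\partial\mathcal O$, so $C_+$ is the ``finite'' half-circle and $B(k)$ has only finitely many positive eigenvalues. By the preliminary step $B(k_i)$ has, on the relevant invariant subspace, an $m_i$-dimensional kernel, and Kato--Rellich supplies $m_i$ real-analytic branches $\beta_1(k),\dots,\beta_{m_i}(k)$ with $\beta_l(k_i)=0$ whose derivatives $\beta_l'(k_i)$ are the eigenvalues of the form $\phi\mapsto\langle B'(k_i)\phi,\phi\rangle$ on that kernel. Differentiating the factorisation and applying Green's identity — in effect the Hadamard formula $\partial_\lambda\langle(\Lambda_\lambda-\Lambda_{\lambda,n})f,f\rangle=\int_{\mathcal O}\big(n|v|^{2}-|u|^{2}\big)$, $(u,v)$ the interior fields with Cauchy datum coming from $f$ — identifies this form, up to a positive factor, with $\phi\mapsto\int_{\mathcal O}\big(n|v_\phi|^{2}-|u_\phi|^{2}\big)$ on the ITE eigenfunctions; this expression is of indefinite sign in general, which is exactly the source of both signs $\sigma_i=\pm1$. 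In the transversal case, each branch with $\beta_l'(k_i)<0$ has $\beta_l>0$ for $k<k_i$, producing a $z_j(k)\in C_+$ with $\arg z_j(k)\downarrow0$ as $k\uparrow k_i$ (clockwise), hence a contribution to $m_i^{+}$; each branch with $\beta_l'(k_i)>0$ produces a $z_j(k)\in C_+$ emanating from $z=1$ with $\arg z_j(k)\uparrow$ as $k\downarrow k_i$ (counterclockwise), a contribution to $m_i^{-}$. Thus $m_i^{+}=\#\{l:\beta_l'(k_i)<0\}\le m_i$ and $m_i^{-}=\#\{l:\beta_l'(k_i)>0\}\le m_i$; the non-transversal case $\beta_l'(k_i)=0$ is settled by the first non-vanishing derivative of the analytic branch and does not affect these bounds. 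Case 2), $n>1$, is the same with $C_+\leftrightarrow C_-$ and the orientations reversed.

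For the free arc $\{0<\arg z<\delta\}$: any $z_j(k)$ with argument there lies in $C_+$ and corresponds to a positive eigenvalue of $B(k)$ of small modulus; since $B(k)$ has only finitely many positive eigenvalues and, apart from the $m_i$ branches attached to $k_i$, these are bounded below by a constant uniform in $k$ — this uniform gap being precisely the quantitative input from the proof of Theorem~\ref{th} — the small positive eigenvalues of $B(k)$ for $k$ near $k_i$ are exactly the $\beta_l(k)$ already listed. Converting $\beta\ge c$ into $\arg z\ge\delta$ via $z=1+2ik\overline{\alpha}\mu$, $\operatorname{Im}(e^{i\pi(d-1)/4}\mu)=\beta$, and tracking the constant, one gets the stated $\delta=\tfrac1{2d}$. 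For part 3): if $k_0^{2}>0$ is not an ITE then $B(k_0)$ has no vanishing eigenvalue on the relevant invariant subspace, its finitely many positive eigenvalues vary continuously with $k$ and hence stay away from $0$ for $|k-k_0|\ll1$, so no $z_j(k)$ enters the arc $\{0<\arg z<\delta\}$ (resp.\ $\{-\delta<\arg z<0\}$ when $n>1$).

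The main obstacle is the preliminary step together with the derivative computation: the exact two-sided correspondence between the spectrum of $B(k)$ near $0$ and the ITE eigenspaces with matching multiplicities, and the identification of $\langle B'(k_i)\phi,\phi\rangle$ with the interior integral $\int_{\mathcal O}(n|v_\phi|^{2}-|u_\phi|^{2})$, both rest on careful Green's-formula bookkeeping through the factorisation of $F(k)$ and on controlling the non-surjectivity of the Herglotz operator — the point at which $n\neq1$ on $\partial\mathcal O$ is indispensable. A second, equally serious, difficulty is the uniformity of $\delta$ over all ITEs, which is not soft and must come from the Weyl-type estimates on $\operatorname{spec}B(k)$ developed for Theorem~\ref{th}. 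The non-smoothness of $z_j(k)$ at $z=1$, on the other hand, is innocuous once one works with $B(k)$: the reshuffling of the infinitely many eigenvalues accumulating at $z=1$ does not disturb the finitely many analytic branches that reach the threshold at $k_i$, and the eigenvalues before and after $k_i$ are never related.
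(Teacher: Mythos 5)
There is a genuine gap, and it sits at the heart of your plan. Your ``crucial preliminary step'' --- that $z=1\in\operatorname{spec}S(k)$, equivalently $\ker F(k)\neq\{0\}$, if and only if $k^{2}$ is an ITE, with equal multiplicities --- is not true and is not what the theorem asserts. The paper points out that $z=1$ is an essential spectral point of $S(k)$ and is in general \emph{not} an eigenvalue at an ITE (e.g.\ it is never an eigenvalue when $\mathcal O$ has a corner); an ITE only produces \emph{one-sided limits} $z_j(k)\to 1$ as $k\to k_i\mp 0$. Relatedly, the factorization $F=c\,\mathcal H^{*}\mathcal G\mathcal H$ cannot transfer kernel information exactly, because the Herglotz operator $\mathcal H$ is not surjective (Lemma \ref{llstar} gives only dense range, which suffices for approximating quadratic-form estimates but not for an exact kernel correspondence). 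The second, independent failure is your use of Kato--Rellich on $B(k)=\operatorname{Im}\bigl(e^{i\pi(d-1)/4}F(k)\bigr)$: since $F(k)$ is compact and its eigenvalues $\mu_j(k)$ accumulate at $0$ for \emph{every} $k$, the point $0$ is an accumulation point of $\operatorname{spec}B(k)$, i.e.\ the eigenvalue $0$ of $B(k_i)$ is not isolated of finite multiplicity, so analytic perturbation theory does not supply $m_i$ analytic branches through $0$, nor does it let you separate them from the infinitely many other eigenvalues crowding near $0$. This is exactly the difficulty the paper is built to avoid: it never tracks eigenvalues of $S(k)$ or $F(k)$ directly. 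Instead it transfers the analysis to the first-order elliptic self-adjoint operator $\widehat R(\lambda)$ (built from $(N^{in}_n-t)^{-1}-(N^{in}_0-t)^{-1}$), whose spectrum accumulates at $+\infty$ so that only finitely many eigenvalues $\mu_j(\lambda)$ are near $0$ and these \emph{are} isolated analytic branches vanishing at an ITE; it then converts the resulting blow-up $\sigma\sum c_j^2\mu_j^{-1}(\lambda)$ in the representation $\alpha^{-1}F=Q^*D[\widehat R_1+\sigma\widehat R^{-1}+i\widehat I]DQ$ into two-sided control of $\arg(\alpha^{-1}F\varphi,\varphi)$ on subspaces of dimension and codimension $\beta_i^{\pm}$, and finally counts eigenvalues of the unitary $S(k)$ in small arcs by the min--max-type Lemma \ref{lemmaold1} on quadratic forms. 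Without some substitute for this detour, your argument does not get off the ground.

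Two smaller points. First, your identification of $\langle B'(k_i)\phi,\phi\rangle$ with $\int_{\mathcal O}(n|v_\phi|^{2}-|u_\phi|^{2})$ is the content of the paper's Theorem \ref{th14} (computed there for $\widehat R'(\lambda)$, not for $F$), and it requires the absence of adjoint eigenvectors; it is not needed for Theorem \ref{t3} itself, where the counting is done via the parity of the leading Taylor coefficient of each $\mu_j(\lambda)$. Second, the $\delta$ in the free-arc statement of Theorem \ref{t3} is an unspecified small arc width depending on $k_i$; the value $\delta=\tfrac{1}{2d}$ is the exponent in the Weyl remainder of Theorem \ref{th} and has nothing to do with the arc, so no ``uniform gap'' input from Theorem \ref{th} is required (the free arc follows from the finite-codimension estimate (\ref{abs}) and part 3 of Lemma \ref{lemmaold1}).
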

Note that $m_i^\pm$ can take more or less arbitrary integer values in the segment $[0,m_i]$. In particular, the relation $m_i^++m_i^-=m_i$ does not necessarily hold. However, one can derive from the arguments in the proof that $m_i^++m_i^-=m_i ({\rm mod} 2)$. In particular, if $m_i=1$, i.e., an ITE is simple, then one of the numbers $m_i^\pm$ is one and another is zero. A more specific result is given by the following theorem.
\begin{theorem}\label{th14} If $\lambda_i^T=k_i^2$ is an ITE whose geometric and algebraic multiplicities coincide (there are no adjoint eigenfunctions) and $(U_j,V_j),~1\leq j\leq i,$ is a basis in the eigenspace, then $m^+_i-m^-_i$ is the signature of the following matrix $A=\{a_{j,l}\}$:
$$
m^+_i-m^-_i =  {\rm sgn} \left  \{a_{j,l} \right \}, \quad a_{j,l}=\int_{\mathcal O} \left (U_jU_l- nV_jV_l  \right ) dx,~~1\leq j,l\leq m_i.
$$
\end{theorem}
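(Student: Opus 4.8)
The plan is to combine analytic perturbation theory with an Eisenbud--Wigner (time-delay) representation of $\frac{d}{dk}S(k)$: this reduces $m_i^+-m_i^-$ to the signature of a self-adjoint matrix, which a variational computation then identifies with $A$ (up to a positive factor). The hypothesis that $\lambda_i=k_i^2$ has no adjoint eigenfunctions is used in two ways. First, it guarantees that every ITE eigenfunction at $\lambda_i$ has a Herglotz $u$-component, so that $\dim\ker(S(k_i)-I)=\dim\ker F(k_i)=m_i$ and the correspondence $g\mapsto(Hg|_{\mathcal O},\Psi_g|_{\mathcal O})$ (with $Hg$ the Herglotz wave of density $g$ and $\Psi_g$ the scattering solution with incident field $Hg$) lets us take $(U_j,V_j)$ to be the images of an orthonormal basis $\{g_j\}$ of $\ker F(k_i)$. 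Second, it removes the non-smoothness of the functions $z_j(k)$ noted in the introduction: within the reduction used to prove \thmref{t3}, the $m_i$ branches $z_j(k)$ of the scattering-matrix spectrum that reach $z=1$ at $k=k_i$ are real-analytic near $k_i$, with $z_j(k_i)=1$, and writing $z_j(k)=e^{i\phi_j(k)}$ the first-order perturbation formula gives that $\phi_j'(k_i)$ are the eigenvalues of the self-adjoint matrix $\Theta:=\big[-i\langle S'(k_i)g_l,g_j\rangle\big]_{j,l}$ on $\ker F(k_i)$ (self-adjointness of $\Theta$ follows from unitarity of $S$).

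Next I would carry out, at this finite-dimensional level, the sign analysis behind \thmref{t3}. A branch with $\phi_j'(k_i)<0$ lies in the chosen half-circle, moving clockwise, for $k$ on one side of $k_i$, so it contributes one unit to $m_i^+$ and none to $m_i^-$; a branch with $\phi_j'(k_i)>0$ contributes one unit to $m_i^-$ and none to $m_i^+$; a branch with $\phi_j'(k_i)=0$ contributes equally (zero or one) to both. The same trichotomy holds whether the chosen half-circle is $C_+$ (when $n<1$ on $\partial\mathcal O$) or $C_-$ (when $n>1$). Hence
\begin{equation*}
m_i^+-m_i^-=\#\{j:\phi_j'(k_i)<0\}-\#\{j:\phi_j'(k_i)>0\}=-\operatorname{sgn}\Theta .
\end{equation*}

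The core step is to identify $\Theta$ with a multiple of $A$. Since $F(k_i)g_j=F(k_i)^*g_j=0$, differentiating $S=I+2ik\overline{\alpha}F$ leaves only the term containing $F'$, and one recognizes $\Theta$ as the matrix, on $\ker(S(k_i)-I)$, of the Eisenbud--Wigner operator $Q(k)=-i\,S(k)^*S'(k)$. Writing the scattering problem as $\big(-\Delta+k^2(1-\widehat n)\big)\psi=k^2\psi$, the time-delay (Smith) formula in its form for the energy-dependent potential $k^2(1-\widehat n)$ reads, for the scattering solutions $\Psi_g$ and the corresponding free (incident) solutions $\Phi_g=Hg$,
\begin{equation*}
\langle Q(k)g',g\rangle=2k\int_{\mathbb R^{d}}\big(\widehat n\,\Psi_{g'}\overline{\Psi_{g}}-\Phi_{g'}\overline{\Phi_{g}}\big)\,dx .
\end{equation*}
At $k=k_i$ the scattered field vanishes outside $\mathcal O$ (as $F(k_i)g_j=0$), so the integrand vanishes there and equals $nV_{g'}\overline{V_{g}}-U_{g'}\overline{U_{g}}$ inside $\mathcal O$. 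Choosing the basis $(U_j,V_j)$ real-valued (possible since the ITE problem has real coefficients and $\lambda_i$ is real) we obtain $\Theta=-2k_iA$, and therefore
\begin{equation*}
m_i^+-m_i^-=-\operatorname{sgn}\Theta=\operatorname{sgn}(2k_iA)=\operatorname{sgn}A .
\end{equation*}

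I expect the time-delay identity, and above all its sign and the weight $\widehat n$ inside the integral, to be the main obstacle: one must justify the conditionally convergent integral over $\mathbb R^d$, track the $k$-dependence of both the incident Herglotz wave and the total field, verify that the boundary terms on $\partial\mathcal O$ cancel through the transmission conditions, and control the derivative of the radiating part at infinity (the $k$-derivative of a radiating field is only ``almost'' radiating). A sign error would replace $\operatorname{sgn}A$ by $\operatorname{sgn}(-A)$, so the conclusion should be cross-checked against the simple-eigenvalue case of \thmref{t3} (where $m_i=1$ and $m_i^+-m_i^-=\pm1$) or on a radially symmetric example. A secondary point, again where the hypothesis is used, is the equality $\dim\ker F(k_i)=m_i$; should one prefer not to establish it directly, one would instead have to show that the orthogonal complement of the Herglotz subspace inside the ITE eigenspace lies in the radical of the bilinear form $\int_{\mathcal O}(UU'-nVV')$, so that passing from that subspace to the full basis leaves the signature unchanged.
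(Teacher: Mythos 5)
Your strategy presupposes two facts that are exactly the delicate points this paper is organized to avoid, and neither follows from the no-adjoint-eigenfunctions hypothesis. First, you need $z=1$ to be an honest eigenvalue of $S(k_i)$ of multiplicity $m_i$, i.e.\ $\dim\ker F(k_i)=m_i$. An ITE eigenfunction $(U,V)$ produces an element of $\ker F(k_i)$ only when its $U$-component is a Herglotz wave function, and that property has nothing to do with the Jordan structure of the ITE problem; the introduction even recalls that $z=1$ is never an eigenvalue of $S(k)$ when $\mathcal O$ has a corner, which is why Theorems \ref{t3} and \ref{th14} are stated in terms of one-sided limits $z_j(k)\to 1$ rather than eigenvalues at $z=1$. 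Second, you invoke analytic perturbation theory for the branches $z_j(k)$ at $z=1$; but $z=1$ is the accumulation point of the spectrum of $S(k)$ (its essential spectral point), so the relevant eigenprojection is not isolated, Rellich--Kato theory does not apply, and the branches need not be smooth, continuable, or even defined at $k=k_i$ --- the paper states this non-smoothness explicitly as the reason for splitting the motion into one-sided limits. The time-delay identity for $Q(k)=-iS^*S'$ with the energy-dependent potential $k^2(1-\widehat n)$ is a third unproven ingredient, which you yourself flag as the main obstacle.

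The paper circumvents all of this by never differentiating $S(k)$ at $z=1$. It transfers the problem to the operator $\widehat R(\lambda)=\sigma D\big((N^{in}_n-t)^{-1}-(N^{in}_0-t)^{-1}\big)D$ on $\partial\mathcal O$, which is self-adjoint and analytic near $\lambda_0$, with $m_i$ eigenvalue branches $\mu_j(\lambda)$ vanishing at $\lambda_0$ and with $\mu=0$ an isolated spectral point, so analytic perturbation theory is legitimate there. The hypothesis of no adjoint eigenfunctions is used for one purpose only: the resolvent of the ITE problem then has a first-order pole, so each $\mu_j$ has a simple zero and $\mu_j'(\lambda_0)\neq 0$. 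A Green's-formula computation with the Dirichlet-to-Neumann maps gives $\mu_j'(\lambda_0)=\sigma\int_{\mathcal O}(n|v_j|^2-|u_j|^2)\,dx$ together with the vanishing of the off-diagonal entries in this basis, whence ${\rm sgn}\,A=-\sigma\sum_j{\rm sign}\,\mu_j'(\lambda_0)=\beta_i^+-\beta_i^-$, and the identification $m_i^\pm=\beta_i^\pm$ was already established in the proof of Theorem \ref{t3}. To salvage your route you would have to prove the Herglotz-density and analyticity claims, which amounts to reproving the hard half of the inside-outside duality; the paper's interior computation is shorter and also avoids the conditionally convergent integral over $\mathbb R^d$.
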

We already mentioned above that Theorem \ref{t3} was proved in \cite{EP},\cite{EP2} in the case of scattering by a soft or rigid obstacle $\mathcal O$. In the latter case, the interior Dirichlet or Neumann problem is considered instead of the interior transmission problem, and the eigenvalues of the Dirichlet or Neumann Laplacian in $\mathcal O$ are used instead of ITEs. The symmetry and simplicity of the corresponding interior problem imply that all the scattering eigenvalues in both cases (Dirichlet/Neumann condition) move in the same clockwise direction. Moreover, in both cases, $m^+_i=m_i,~ m^-_i=0$ for all $i\geq 1$.  Perhaps this explains why the signed Weyl formulas did not appear in the literature earlier.

Some results related to Theorem \ref{th14} can be found in  \cite{kirschL}.  If $n(x)\neq 1,~x\in \overline{\mathcal O}$, and certain conditions on an ITE $\lambda_i^T=k_i^2$ hold, it was proved there that the eigenvalue $z_j(k)$ closest to $z=1$ and located in the appropriate half-circle converges to $z=1$ when $k$ approaches $k_i$ from the appropriate side.

{\it Our last and main result is as follows: Theorem \ref{th} holds with $\sigma_i$ defined by the formula}
\begin{equation}\label{defsi}
\sigma_i = m^+_i - m^-_i.
\end{equation}


{\bf Remark 1}. There are a couple of cases when formula (\ref{theorem2}) can be obtained by direct calculations. The asymptotics of the counting function for positive ITE  was found in \cite{stefanov} when $\mathcal O$ is a ball and $n(x)$ is a constant. It was shown in \cite{stefanov} that the counting function of positive ITEs coincides with the right-hand side in (\ref{theorem2}). Since the Weyl law and signed Weyl law have the same main term, this implies that the majority of $\sigma_i$ have the same sign, which is equal to ${\rm sign}(1-n)$. It also follows from calculations in that paper (section 3.3) that $\sigma_i{\rm sign}(1-n)=1$ for the problem under consideration if the ITE is simple.

Another example concerns  problem (\ref{Anone0})-(\ref{Antwo}), where $n$ is replaced by $n/a$ and the second boundary condition is replaced by $\frac{\partial u}{\partial \nu} = a \frac{\partial v }{\partial \nu}$ (models of this type were introduced in \cite{cch}). Note that the results of the present paper hold for this problem, and the proofs require only minor changes. In a very trivial situation of $a=n=\rm{const} \neq 1$, the spectrum of the ITE problem is a union of Dirichlet and Neumann spectra for the negative Laplacian. So, all the ITEs are real, and (\ref{eqold}) provides the asymptotics for their counting function. Obviously, $\gamma=0$ in this case. So we see that the asymptotics for the counting function of positive ITEs and the signed counting function for positive ITEs are different in this simple case.

{\bf Remark 2}. The signs ascribed to ITEs in Theorem \ref{th} resemble the standard procedure used in the definition of spectral flows (see, for example, \cite{apz}, \cite{push}, \cite{saf}, \cite{saf1}). The principal difference is that these signs in the present paper are defined not by the direction of motion of the eigenvalues of the operator under consideration, but by the direction of motion of another object, namely, the eigenvalues of the scattering matrix.

While there is no direct reformulation of the signed sum of ITEs through spectral flows, the latter are relevant to the problem under investigation. Indeed, we reduced the original rather complicated non-self-adjoint (and non-elliptic) problem to a problem on the spectral flow for a symmetric Fredholm operator $R(\lambda)$ that has a simple representation (see (\ref{r2r})) via the Dirichlet-to-Neumann maps $N^{in}_0(\lambda) ,~N^{in}_n(\lambda)$, defined by equations (\ref{Anone0}), (\ref{Anone}). Operator $R(\lambda)$ depends meromorphically on the spectral parameter $\lambda$, and there are no readily available methods to calculate its spectral flow.

To be more precise, the signed Weyl law for ITEs is defined by the asymptotics (as $\lambda'\to\infty $) of the number $n_2(\lambda')$ of positive values of $\lambda<\lambda'$ for which $R(\lambda)$ has a non-trivial kernel. Each such value $\lambda=\lambda_0$ is counted with the sign that depends on the direction in which the corresponding eigenvalue $\mu_j(\lambda)$ of operator  $R(\lambda)$ passes through the origin $\mu=0$ when $\lambda $ increases and passes through $\lambda_0$. Since we can not evaluate $n_2(\lambda')$ directly, we use the conservation law ${n^-}(\lambda')=n_1(\lambda')+n_2(\lambda')$, where $n^-(\lambda)$ is the total number of negative eigenvalues $\mu_j(\lambda)$ and $n_1(\lambda')$ is the signed counting function for eigenvalues $\mu_j(\lambda)$ that enter/exit the negative semi-axis $R_\mu^-$ through the point $\mu=-\infty$. We evaluate $n^-(\lambda)$ and find the asymptotics of $n_1(\lambda')$. Note that changes in $n_1(\lambda')$ occur when $\lambda$ passes through poles of $R(\lambda)$. Perhaps, this type of contribution is not very standard in spectral flows (paper by Friedlelnder \cite{fried} was a trigger point for us in evaluating $n_1(\lambda')$). However, the most important part of the present paper concerns not the asymptotics of $n_2$ but  the relation between $n_2(\lambda')$ and the directions of rotation of the eigenvalues of the scattering matrix.

{\bf Remark 3}. The presence of an obstacle $\mathcal V \subset\mathcal O$ affects only the proof of Theorem \ref{th2a}. To be more precise, relations (\ref{44}), (\ref{n33}) must be proved in the presence of the obstacle ($\gamma$ in (\ref{44}) will depend on $\mathcal V$).  The corresponding formulas can be found in \cite{lakvain7}.

{\bf Conjecture.} We believe that Theorem \ref{th14} remains valid without the assumption on the absence of the adjoint eigenvectors. One only needs to construct matrix $A$ using a basis $(u_i,v_i)$ in the root subspace that corresponds to the ITE $\lambda_i^T$. Supporting arguments will be provided after the proof of Theorem \ref{th14}.

The rest of the paper is organized as follows. Section 2 concerns the relationships between Dirichlet-to-Neumann maps, ITEs, and the far-field operator. It starts (part (A)) with certain properties of the Dirichlet-to-Neumann maps and their relation to ITEs, followed (part (B)) by a representation of the far-field operator $F$ via a combination of the Dirichlet-to-Neumann operators. While one of the factors in this representation of $F$ is related to ITEs, this connection between $F$ and ITEs will be deepened. Additional properties of the Dirichlet-to-Neumann maps are obtained in part (C), followed by an alternative way to define ITEs (part (D)) and an alternative representation of the far-field operator $F$ (part (E)). In particular, a Weyl type formula for a signed sum of ITEs similar to (but different from) that in Theorem \ref{th} is proved in Theorem \ref{th2a} of part (D).

Section 3 completes the proofs of the main theorems. Theorem \ref{lemmaup} is proved first. The proof of Theorem \ref{t3} starts with general facts on quadratic forms defined by unitary operators. Then it is shown that  Theorem \ref{t3} holds with $m^\pm_i$ such that $m^+_i-m^-_i=\alpha^+_i-\alpha^-_i$. The latter relation together with Theorem \ref{th2a} imply Theorem \ref{th}. Theorem~\ref{th14} is proved at the end of the section.

\section{Relations between the scattering matrix and ITEs; auxiliary lemmas.}\label{sectionABCD}
{\it (A). Dirichlet-to-Neumann operators and their relationship with ITEs.} The operators
\begin{equation}\label{NN}
N^{in}_0,~N^{in}_n,~ N^{out}:H^s(\partial \mathcal O)\to H^{s-1}(\partial \mathcal O)~~ {\rm and}~~ N^{in}_0 -N^{in}_n:H^s(\partial \mathcal O)\to H^{s+1}(\partial \mathcal O)
\end{equation}
will be used heavily to prove the main result. Here $H^s(\partial \mathcal O)$ is the Sobolev space, $N^{in}_n(\lambda)$ is the Dirichlet-to-Neumann map for equation (\ref{Anone}) in $\mathcal O$, which is defined as follows:
\begin{equation}\label{DN}
N^{in}_n(\lambda): v\Big|_{\partial\mathcal O}\to\Big.\frac{\partial v}{\partial\nu}\Big|_{\partial\mathcal O},
\end{equation}
$N^{in}_0(\lambda)$ is the same operator for equation  \eqref{Anone0}, and $N^{out}(\lambda)$ is the Dirichlet-to-Neumann map for equation \eqref{Anone0} outside $\mathcal O $ that maps the Dirichlet data to the Neumann data of the solution that satisfies the radiation condition at infinity:
\[
u_r-iku=o(r^{(1-d)/2}), \quad  r\to\infty, \quad k^2=\lambda.
\]
We use the same normal vector $\nu$ for the exterior problem as that chosen (see equation (\ref{Antwo})) for the interior problem.

Let $(y_1,...y_{d-1})$ be local coordinates on $\partial \mathcal O$ with the dual variables $(\xi_1, ...,\xi_{d-1})$ and let $\sum g_{i,j}(y)dy_idy_j$ be the first fundamental form on $\partial \mathcal O$. Then $|\xi^*|=(\sum g^{i,j}(y)\xi_i\xi_j)^{1/2}$ is the length of the covector in the cotangent bundle $T^*(\partial\mathcal O)$.
\begin{lemma}\label{mprl}
1) The first two operators in (\ref{NN}) are self-adjoint elliptic pseudo-differential operators of first order. They are meromorphic in $\lambda $ when $\lambda>0$ with poles at eigenvalues of the Dirichlet problem for equations (\ref{Anone0}) and (\ref{Anone}), respectively. Their residues have finite ranges. The principal symbols of these two operators are equal to $|\xi^*|$.

2) Operator $N^{out}$ is an analytic in  $\lambda ,\lambda>0,$ elliptic pseudo-differential operator of first order with the principal symbol $-|\xi^*|$. For every $\varphi\neq 0$,
\begin{equation}\label{kvf}
\Im(N^{out}\varphi,\varphi)=\sqrt{\lambda}\int_{|\theta|=1}|f(\lambda,\theta)|^2dS>0,
\end{equation}
where $f$ is the far-field amplitude of the solution of the exterior problem with the Dirichlet data $\varphi$.

3) The last operator in (\ref{NN}) is a meromorphic in $\lambda $ elliptic pseudo-differential operator of order $-1$ with the principal symbol $\frac{\lambda(n(x)-1)}{2|\xi^*|}$.
\end{lemma}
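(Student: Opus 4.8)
The statement to prove is Lemma~\ref{mprl}, which collects the basic mapping, analyticity, and symbol properties of the three Dirichlet-to-Neumann operators and their difference.

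\medskip

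The plan is to treat each of the three items by the standard calculus of pseudodifferential operators on the compact manifold $\partial\mathcal O$, combined with Green's-formula identities for the self-adjointness and the sign in \eqref{kvf}.

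First, for part 1): I would recall that for a second-order elliptic operator $-\Delta-\lambda$ (or $-\Delta-\lambda n$) on $\mathcal O$ with $C^2$ boundary, the Dirichlet problem is uniquely solvable whenever $\lambda$ is not a Dirichlet eigenvalue, and the solution operator is bounded $H^s(\partial\mathcal O)\to H^{s+1/2}(\mathcal O)$; composing with the normal trace $\partial_\nu$ gives a bounded map $H^s(\partial\mathcal O)\to H^{s-1}(\partial\mathcal O)$. The operator $N^{in}_0(\lambda)$ is a classical first-order $\psi$do: one constructs a parametrix by solving the eikonal/transport equations for the Dirichlet problem near the boundary in the coordinates $(y,t)$ with $t$ the normal variable; the leading symbol of the interior solution's normal derivative is $-\sqrt{g^{ij}(y)\xi_i\xi_j}=-|\xi^*|$ applied to the data with the correct sign convention, so that the principal symbol of the DtN map is $+|\xi^*|$ with our choice of outward normal. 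Meromorphy in $\lambda$ follows from the fact that the interior solution operator, by the Fredholm alternative, is meromorphic in $\lambda$ with poles exactly at Dirichlet eigenvalues and residues of finite rank (the rank being the multiplicity of the eigenvalue, since the residue is built from the projection onto the corresponding eigenspace). Self-adjointness is the Green's identity: for two Dirichlet data $\varphi,\psi$ with harmonic (in the $\lambda$-sense) extensions $u,w$,
\begin{equation}\label{greenid}
(N^{in}_0\varphi,\psi)_{\partial\mathcal O}-(\varphi,N^{in}_0\psi)_{\partial\mathcal O}=\int_{\mathcal O}\bigl(w\,\overline{\Delta u}-\overline{u}\,\Delta w\bigr)\,dx=0
\end{equation}
because $\Delta u=-\lambda u$, $\Delta w=-\lambda w$ and $\lambda$ is real. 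The same argument applies verbatim to $N^{in}_n(\lambda)$ using $-\Delta-\lambda n$ and the fact that $n$ is real.

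Second, for part 2): the exterior DtN map $N^{out}(\lambda)$ is analytic in $\lambda>0$ because the exterior Dirichlet problem with the outgoing Sommerfeld condition is uniquely solvable for every $k>0$ (Rellich's uniqueness lemma plus Fredholm theory), so there are no poles; its principal symbol is $-|\xi^*|$ by the same parametrix construction, the sign being opposite to the interior one because the exterior normal derivative points the other way relative to the domain of solution. The positivity \eqref{kvf} is again Green's formula, now on the exterior domain $B_\rho\setminus\overline{\mathcal O}$ with a large ball $B_\rho$: integrating $\bar u\Delta u-u\overline{\Delta u}$ gives zero from the volume term, and the boundary term splits into the contribution on $\partial\mathcal O$, which equals $-2i\,\Im(N^{out}\varphi,\varphi)$, and the contribution on $\partial B_\rho$, which, upon substituting the radiation-condition expansion $u\sim f(\lambda,\theta)e^{ikr}r^{-(d-1)/2}$ and letting $\rho\to\infty$, converges to $2i\sqrt\lambda\int_{|\theta|=1}|f|^2\,dS$. (Here one must be a little careful with the normalization of $f$ versus the Sommerfeld form $u_r-iku=o(r^{(1-d)/2})$, but the constant comes out as stated.) Strict positivity follows since $f\equiv 0$ would force $u\equiv 0$ in the exterior by Rellich, contradicting $\varphi\neq0$.

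Third, for part 3): the key point is that $N^{in}_0-N^{in}_n$ gains one derivative rather than zero. Heuristically the principal symbols $+|\xi^*|$ of the two operators cancel, so the difference is of order $\le 0$; to get the exact subprincipal contribution one expands the two parametrices one step further. The cleanest route is to write $N^{in}_0-N^{in}_n$ via the resolvent identity: if $u$ solves $(-\Delta-\lambda)u=0$ and $v$ solves $(-\Delta-\lambda n)v=0$ with the same Dirichlet data $\varphi$, then $w:=u-v$ solves $(-\Delta-\lambda)w=\lambda(n-1)v$ with zero Dirichlet data, hence $w=\lambda\,G_0(\lambda)\bigl((n-1)v\bigr)$ with $G_0(\lambda)$ the Dirichlet Green operator for $-\Delta-\lambda$, and $(N^{in}_0-N^{in}_n)\varphi=\partial_\nu w|_{\partial\mathcal O}$. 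Computing the principal symbol of this composition — $v$ contributes $\varphi$ to leading order at the boundary, multiplication by $\lambda(n-1)$ is order zero, $G_0$ is an order-$(-2)$ elliptic operator (Dirichlet), and the normal trace $\partial_\nu$ is order one — gives total order $-1$ with principal symbol $\lambda(n(x)-1)\cdot\tfrac{1}{2|\xi^*|}$: the factor $\tfrac{1}{2|\xi^*|}$ is the leading symbol of $\partial_\nu\circ G_0$ acting on the boundary layer generated by a source supported up to the boundary, which is the half-space model computation $\partial_t\bigl((-\partial_t^2+|\xi^*|^2)^{-1}\delta(t-0^+)\bigr)\big|_{t=0}=\tfrac12$, divided by $|\xi^*|$ from one extra power of the resolvent. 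Ellipticity of the difference is immediate from non-vanishing of this symbol where $n(x)\neq 1$, i.e.\ at the boundary by hypothesis. The improved mapping property $H^s\to H^{s+1}$ in \eqref{NN} then follows.

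\medskip

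I expect the main obstacle to be part 3): getting the \emph{exact} principal symbol $\frac{\lambda(n(x)-1)}{2|\xi^*|}$ of the difference with the correct constant $\tfrac12$ requires carrying the boundary parametrix construction (or the half-space model for $\partial_\nu\circ G_0$) to one order beyond what is needed for parts 1)--2), and keeping track of sign and normalization conventions for $\nu$, the square root $|\xi^*|$, and the radiation condition. The self-adjointness and meromorphy statements, by contrast, are routine consequences of Green's formula and the Fredholm alternative, and the positivity \eqref{kvf} is the classical exterior energy identity.
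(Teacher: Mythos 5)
Your proposal is correct. For parts 1) and 2) it follows essentially the route the paper relies on: the paper simply declares these facts well known, citing \cite{lakvain7}, and notes that (\ref{kvf}) is the Green formula combined with the absence of embedded eigenvalues for the exterior Dirichlet problem; your Green-identity computation on $B_\rho\setminus\overline{\mathcal O}$ giving the constant $\sqrt{\lambda}$, and the Rellich argument for strict positivity, are exactly what is meant. For part 3) you take a genuinely different route. The paper obtains the order $-1$ and the symbol $\lambda(n(x)-1)/(2|\xi^*|)$ by computing the first three terms of the full symbols of $N^{in}_0$ and $N^{in}_n$ separately and observing that the first two terms cancel in the difference (this is \cite[Lemma 1.1]{lakvain7}). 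You instead set $w=u-v$, express $w$ through the Dirichlet Green operator $G_0(\lambda)$ applied to $\lambda(n-1)v$, and read off the symbol of $\partial_\nu\circ G_0$ on the boundary layer from the half-space model, where the resonant source $e^{-|\xi^*|t}$ forces the secular solution $t\,e^{-|\xi^*|t}/(2|\xi^*|)$. Your route needs only the leading-order boundary parametrix for $v$ plus one explicit ODE solve, at the price of justifying the composition $\partial_\nu\circ G_0$ on boundary-concentrated data; the paper's route is a purely mechanical (if longer) symbol expansion. One caveat: with $w=u-v$ one has $(-\Delta-\lambda)w=-\lambda(n-1)v$, not $+\lambda(n-1)v$, and correspondingly the half-space model gives $\partial_\nu W|_{t=0}=-c/(2|\xi^*|)$ for the source $c\,e^{-|\xi^*|t}$ when $\partial_\nu=-\partial_t$. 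These two sign slips cancel, so your final symbol is correct, but the intermediate signs should be fixed, especially since you correctly identify the sign and normalization conventions as the delicate point of part 3).
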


 \begin{proof} The first statement and the expression for the symbol of $N^{out}$ are well known, see more details in \cite{lakvain7}. Formula (\ref{kvf}) is a direct consequence of the Green formula. The positivity of (\ref{kvf}) and analyticity of  $N^{out}$ are due to the absence of eigenvalues of the exterior Dirichlet problem imbedded into the continuous spectrum. The last statement can be found in \cite[Lemma 1.1]{lakvain7}. It is justified by calculating the first three terms of the full symbol of operators $N^{in}_0$ and $N^{in}_n$ (the first two terms of the symbols are canceled when the difference is taken).
\end{proof}

Note that the notion \emph{kernel of an operator} is used below not only when the operator is analytic in $\lambda$, but also when it has a pole at $\lambda=\lambda_0$. In the latter case, the kernel is understood as follows.
\\

\noindent{\bf Definition.}
{\it The kernel of a meromorphic operator function is the set of elements that are mapped to zero by both the analytical and the principal part of the operator.}
\\

The following lemma is a direct consequence of the definition of ITEs.
\begin{lemma}\label{ites}\textnormal{\cite{lakvain6},\cite{lakvain7}}
A point $\lambda=\lambda_0$ is an ITE if and only if the operator $N^{in}_0(\lambda) -N^{in}_n(\lambda)$ has a non-trivial kernel at  $\lambda=\lambda_0$ or
the following two conditions hold:

1) $\lambda=\lambda_0$ is an eigenvalue of the Dirichlet problem for $-\Delta$ and for equation (\ref{Anone}), i.e.,   $\lambda=\lambda_0$ is a pole for
both $N^{in}_0(\lambda)$ and $N^{in}_n(\lambda)$.

2) The ranges of the residues of operators $N^{in}_0(\lambda)$ and $N^{in}_n(\lambda)$ at the pole $\lambda=\lambda_0$ have a non-trivial intersection.

Moreover, the multiplicity of the interior transmission eigenvalue $\lambda=\lambda_0$ in all the cases is equal to $m_1+m_2$, where $m_1$ is the dimension of the
kernel of the operator $N^{in}_0(\lambda)-N^{in}_n(\lambda)$, and   $m_2$ is the   dimension of the intersection of the ranges of the residues  at the pole
$\lambda=\lambda_0$ ($m_2=0$ if $\lambda=\lambda_0$ is not a pole).
\end{lemma}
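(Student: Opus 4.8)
The plan is to unfold the definition of an ITE in terms of the Dirichlet-to-Neumann maps $N^{in}_0(\lambda)$ and $N^{in}_n(\lambda)$ and to handle separately the regular case (neither operator has a pole at $\lambda_0$) and the singular case (at least one has a pole), keeping track of multiplicities in both. First I would recall that by definition $\lambda_0$ is an ITE precisely when the system \eqref{Anone0}--\eqref{Antwo} has a nontrivial solution $(u,v)$. Given such $(u,v)$, restrict to $\varphi := u|_{\partial\mathcal O} = v|_{\partial\mathcal O}$; the interior equations say that $u$ solves \eqref{Anone0} and $v$ solves \eqref{Anone} with the same Dirichlet datum $\varphi$, and the Neumann matching condition becomes $N^{in}_0(\lambda_0)\varphi = N^{in}_n(\lambda_0)\varphi$. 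Conversely, if $\varphi\neq 0$ is in the kernel of $N^{in}_0(\lambda_0)-N^{in}_n(\lambda_0)$, then solving \eqref{Anone0} and \eqref{Anone} with this datum produces the required pair, and $\varphi\ne 0$ forces $(u,v)\ne 0$. So in the regular case the ITE condition is exactly the existence of a nontrivial kernel of the difference operator, and the map $(u,v)\mapsto\varphi$ is a bijection between the ITE eigenspace and that kernel, giving $m = m_1$, $m_2 = 0$.

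The singular case is where the genuine work lies, and I expect it to be the main obstacle. Here one of $N^{in}_0,N^{in}_n$ (or both) has a pole at $\lambda_0$, i.e., $\lambda_0$ is a Dirichlet eigenvalue for $-\Delta$ and/or for $-\Delta - \lambda n$. A nontrivial solution $(u,v)$ may now have $u|_{\partial\mathcal O} = v|_{\partial\mathcal O} = 0$: this happens exactly when $u$ is a Dirichlet eigenfunction of $-\Delta$ and $v$ is a Dirichlet eigenfunction of $-\Delta - \lambda n$, and the Neumann condition $\partial_\nu u = \partial_\nu v$ on $\partial\mathcal O$ then says that the normal derivatives agree. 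Since the range of the residue of $N^{in}_0(\lambda)$ at $\lambda_0$ is precisely the space of Neumann traces $\{\partial_\nu u : u \text{ a Dirichlet eigenfunction}\}$ (and similarly for $N^{in}_n$), the existence of such a pair with zero Dirichlet trace is equivalent to the two residue ranges intersecting nontrivially — this gives conditions 1) and 2), and a dimension count matches the intersection dimension with $m_2$. The key technical point to verify carefully, and to invoke the Definition of kernel of a meromorphic operator just stated, is that for a general ITE pair $(u,v)$ in the singular case one can decompose it as (a pair with nonzero common Dirichlet trace, matched by the analytic parts) plus (a pair with zero Dirichlet trace, living in the residue ranges); equivalently, $\varphi = u|_{\partial\mathcal O}$ lies in the kernel of the difference operator in the sense of the Definition exactly when the analytic parts match on $\varphi$ AND $\varphi$ is not spuriously killed only because it hits a pole. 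Disentangling these contributions, and checking there is no overlap between the two types of eigenfunctions (so that the multiplicities genuinely add as $m = m_1 + m_2$ rather than double-count), is the heart of the argument.

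Concretely I would organize the proof as: (i) set up the Dirichlet trace map on the ITE eigenspace and compute its kernel (the zero-Dirichlet-trace eigenfunctions) and its image; (ii) identify the kernel with the intersection of residue ranges using the standard description of $\operatorname{Res}_{\lambda_0} N^{in}_0$ and $\operatorname{Res}_{\lambda_0}N^{in}_n$ via Dirichlet eigenfunctions, plus the matching of Neumann traces; (iii) identify the image with the kernel (in the sense of the Definition) of $N^{in}_0(\lambda)-N^{in}_n(\lambda)$ at $\lambda_0$, using that on the image the poles are "resolved" and only the analytic parts impose a constraint; (iv) conclude $m = m_1 + m_2$ by the rank-nullity decomposition, and note this reduces to $m = m_1$ in the regular case. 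Throughout, the facts that the residues have finite rank and that the two interior operators are self-adjoint with the same principal symbol $|\xi^*|$ (so their difference is order $-1$ and its kernel is finite-dimensional) — all from \lemref{mprl} — are what make the dimension bookkeeping finite and well-posed. Since the authors attribute this lemma to \cite{lakvain6},\cite{lakvain7}, I expect the actual proof to be short, citing those papers for the residue computations and emphasizing only the translation into the language of the meromorphic-kernel Definition.
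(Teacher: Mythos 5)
The paper itself does not prove this lemma: it is stated with the attribution \cite{lakvain6},\cite{lakvain7} and the remark that it is ``a direct consequence of the definition of ITEs,'' so there is no in-paper argument to compare against line by line. Your reconstruction of the regular case is correct, and your identification of the zero-trace eigenfunctions with the intersection of the residue ranges (via the description of $\operatorname{Res}_{\lambda_0}N^{in}_0$ as $\varphi\mapsto\sum_j\langle\varphi,\partial_\nu w_j\rangle\,\partial_\nu w_j$, $w_j$ the Dirichlet eigenfunctions) is also correct and accounts for $m_2$.

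The gap is exactly the step you yourself label ``the heart of the argument'' and then do not carry out: showing that the Dirichlet trace $\varphi\neq 0$ of an ITE pair lies in the kernel of $N^{in}_0-N^{in}_n$ \emph{in the sense of the Definition}, and conversely. When $\lambda_0$ is a pole, the solution $u$ of the Dirichlet problem with datum $\varphi$ is determined only up to adding elements of the Dirichlet eigenspace $W_0$ (and similarly for $v$). Writing $N^{in}_0=\tfrac{P_0}{\lambda-\lambda_0}+A_0(\lambda)$, the Neumann matching $\partial_\nu u=\partial_\nu v$ therefore only forces $(A_0-A_n)(\lambda_0)\varphi\in\partial_\nu W_0+\partial_\nu W_n=\operatorname{Range}P_0+\operatorname{Range}P_n$, not $(A_0-A_n)(\lambda_0)\varphi=0$; and, in the other direction, the principal part of the \emph{difference} is $P_0-P_n$, whose kernel can be strictly larger than $\ker P_0\cap\ker P_n$ when the residue ranges intersect. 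Your proposed decomposition of a general eigenfunction into ``a pair matched by the analytic parts'' plus ``a zero-trace pair in the residue ranges'' presupposes that the discrepancy $\partial_\nu w'-\partial_\nu w$ between the canonical solutions can be split off as a zero-trace eigenfunction, which requires $\partial_\nu w=\partial_\nu w'$ individually --- precisely what has to be proved. Without an argument (e.g., exploiting the self-adjointness and semi-definiteness of $P_0,P_n$, or passing through the regularized operator $R(\lambda)=(N^{in}_n-t)^{-1}-(N^{in}_0-t)^{-1}$ of (\ref{r2r}), whose kernel matches the ITE eigenspace without any pole bookkeeping) that rules out this ``mixed'' possibility, the identity $m=m_1+m_2$ --- and even the ``only if'' direction of the equivalence --- is not established. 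This is the one place where the lemma is genuinely delicate, and it is the reason the authors abandon $N^{in}_0-N^{in}_n$ in favor of $R(\lambda)$ in Section 2(D).
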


If $\lambda=\lambda_0$ satisfies the latter two conditions, it will be called a singular ITE. Thus, singular ITEs belong to
the intersection of three spectral sets: $ \{\lambda^{T}_i\}$ and sets of the eigenvalues of the Dirichlet problems for $-\Delta$ and for equation (\ref{Anone}).

{\it (B). A relation between D-to-N operators (\ref{NN}) and the far-field operator.} We will describe below some properties of operators (\ref{NN}), but first let us formulate some relationships of these operators with the scattering matrix and with ITEs that will allow us to relate the latter objects. The relation to $S(k)$ is given by the following statement.
\begin{theorem}\label{t21}
Let operator $\mathcal L : L_2(S^{d-1}) \rightarrow L_2(\partial \mathcal O)$ be defined by
\begin{equation}\label{lstar}
(\mathcal L \varphi)(x)=\int_{S^{d-1}} e^{ikx\cdot \omega} \varphi(\omega) dS_\omega, \quad (\mathcal L^* u)(\theta)=\int_{\partial \mathcal O} e^{-ik\theta \cdot x} u(x) dS_x.
\end{equation}
Then the following factorization formula (where $\lambda=k^2$) is valid for the far-field operator:
\begin{equation}\label{FFF}
F=\alpha\mathcal L^* (  N^{in}_0-N^{out}) (N^{in}_n - N^{out})^{-1}(N^{in}_0 -N^{in}_n)\mathcal L, \quad \alpha=\frac{1}{4\pi}\left (\frac{k}{2\pi i} \right )^{\frac{d-3}{2}}.
\end{equation}
\end{theorem}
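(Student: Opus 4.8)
The plan is to derive the factorization (\ref{FFF}) by expressing the far-field operator $F$ as a composition of boundary operators, using a layer-potential-type representation of the scattered field together with the classical identity relating the far-field amplitude to boundary data of the outgoing solution. First I would recall that for a Dirichlet datum $\varphi$ on $\partial\mathcal O$, the solution of the exterior Helmholtz equation satisfying the radiation condition has a far-field amplitude that, up to the normalizing constant $\alpha$, is given by $\mathcal L^*$ applied to the jump of the Neumann data across $\partial\mathcal O$ (this is just the representation formula $\psi_{\rm sc}(x)=\int_{\partial\mathcal O}\big(\Phi(x,y)\partial_\nu u(y)-\partial_{\nu_y}\Phi(x,y)u(y)\big)dS_y$ evaluated asymptotically as $|x|\to\infty$, combined with the fact that the incident plane wave contributes the operator $\mathcal L$ on the other side). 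So the heart of the matter is to identify, for an incident field $e^{ik\omega\cdot x}$, the Cauchy data on $\partial\mathcal O$ of the corresponding scattered field $\psi_{\rm sc}$ in terms of the Dirichlet-to-Neumann maps $N^{in}_0,N^{in}_n,N^{out}$.

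The key step is the transmission-problem bookkeeping on $\partial\mathcal O$. Write $\psi=\psi^{\rm inc}+\psi_{\rm sc}$ in $\mathbb R^d\setminus\overline{\mathcal O}$ and $\psi=v$ in $\mathcal O$, where $v$ solves $-\Delta v-\lambda n v=0$. Let $\phi:=\psi|_{\partial\mathcal O}=v|_{\partial\mathcal O}$ be the (common) Dirichlet trace and match Neumann traces. On the exterior side $\partial_\nu\psi_{\rm sc}=N^{out}(\phi-\psi^{\rm inc}|_{\partial\mathcal O})$ minus the contribution governed by $N^{in}_0$ applied to the incident trace (since $\psi^{\rm inc}$ solves the free equation inside, its own Cauchy data satisfy $\partial_\nu\psi^{\rm inc}|_{\partial\mathcal O}=N^{in}_0\,\psi^{\rm inc}|_{\partial\mathcal O}$ — here I would be careful about poles, i.e.\ restrict $\lambda$ to be away from interior Dirichlet eigenvalues and then extend by meromorphy). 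On the interior side $\partial_\nu v=N^{in}_n\phi$. Equating the total Neumann traces $\partial_\nu\psi^{\rm inc}+\partial_\nu\psi_{\rm sc}=\partial_\nu v$ yields a linear equation for $\phi$ whose solution is $\phi=(N^{in}_n-N^{out})^{-1}(N^{in}_0-N^{out})\psi^{\rm inc}|_{\partial\mathcal O}$. The relevant jump of Neumann data that feeds into $\mathcal L^*$ then turns out to be $\partial_\nu\psi_{\rm sc}=\big(N^{in}_0-N^{in}_n\big)\phi$ — this is where the factor $(N^{in}_0-N^{in}_n)$, the very operator tied to ITEs by Lemma~\ref{ites}, enters. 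Finally, noting $\psi^{\rm inc}|_{\partial\mathcal O}=\mathcal L\varphi$ when we integrate over incident directions $\omega$ against a density $\varphi$, composing the three boundary operators with $\alpha\mathcal L^*$ on the left gives exactly (\ref{FFF}).

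The main obstacle is handling the possible poles of $N^{in}_0$ and $N^{in}_n$ at interior Dirichlet eigenvalues and the corresponding (possible) non-invertibility of $N^{in}_n-N^{out}$, so that the product of meromorphic operator-valued functions in (\ref{FFF}) is genuinely well-defined and equals the (entire, in $\lambda>0$) operator $F$. The clean way is to establish the identity first for $\lambda$ in the open set avoiding all these exceptional values — where every manipulation above is legitimate and $N^{out}$ is invertible on the relevant range by the positivity (\ref{kvf}) (which forbids $z=1$-type degeneracies in the exterior problem) — and then invoke analytic/meromorphic continuation: $F$ itself extends analytically, while the right-hand side is a priori meromorphic, and one checks the apparent poles cancel because the residues of $N^{in}_0$ and of $N^{in}_n$ have finite rank and are matched through the resolvent $(N^{in}_n-N^{out})^{-1}$ in just the way described by the singular-ITE conditions of Lemma~\ref{ites}. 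A secondary technical point is the precise value of the constant $\alpha$: this requires tracking the asymptotic expansion of the free Green function $\Phi(x,y)=\frac{i}{4}\big(\frac{k}{2\pi|x|}\big)^{(d-2)/2}H^{(1)}_{(d-2)/2}(k|x|)\cdots$ as $|x|\to\infty$, which produces exactly the stated $\alpha=\frac{1}{4\pi}\big(\frac{k}{2\pi i}\big)^{(d-3)/2}$; I would relegate this bookkeeping to a citation of the standard scattering literature rather than reproduce it.
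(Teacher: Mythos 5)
Your argument is correct and is essentially the paper's own proof: represent $\psi_{\rm sc}$ by the outgoing fundamental solution, use Green's identity in $\mathcal O$ to rewrite the far field as $\alpha\mathcal L^*(N^{in}_0-N^{out})$ applied to a boundary trace, and solve the transmission conditions on $\partial\mathcal O$ for that trace via the Dirichlet-to-Neumann maps. The only cosmetic difference is that you solve for the total-field trace $\phi=\psi|_{\partial\mathcal O}$ rather than for $\psi_{\rm sc}|_{\partial\mathcal O}$, which produces the outer factors in the opposite order, $(N^{in}_0-N^{in}_n)(N^{in}_n-N^{out})^{-1}(N^{in}_0-N^{out})$; this coincides with (\ref{FFF}) by the identity $AB^{-1}(A-B)=(A-B)B^{-1}A$ with $A=N^{in}_0-N^{out}$, $B=N^{in}_n-N^{out}$ (and note that Lemma~\ref{b} shows $(N^{in}_n - N^{out})^{-1}$ is analytic for all $\lambda>0$, so only the poles of $N^{in}_0$, $N^{in}_n$ require the continuation argument you describe).
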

\begin{proof} Let $E=E(x)$ be the solution of the equation $(\Delta+k^2)E=-\delta(x),~x\in \mathbb R^d,$ that satisfies the radiation condition at infinity: $E_r-ikE=o(r^{(1-d)/2}),~r\to\infty$. The scattered wave $\psi_{\text {sc}}$ defined in (\ref{AnoneE}) can be written  as
$$
\psi_{\text {sc}}(x)=\int_{\partial\mathcal O}\left (\frac{\partial E(x-y)}{\partial \nu}\psi_{\text {sc}}-E(x-y)\frac{\partial \psi_{\text {sc}}}{\partial \nu} \right )dS_y, \quad x\in \mathbb R^d\backslash\mathcal O.
$$
The Green formula for functions $E$ and the solution of the Helmholtz equation in $\mathcal O$ with the Dirichlet data $\psi_{\text {sc}}$ at the boundary imply that
$$
\int_{\partial\mathcal O}\frac{\partial E(x-y)}{\partial \nu}\psi_{\text {sc}}dS_y=\int_{\partial\mathcal O}E(x-y) N^{in}_0\psi_{\text {sc}}dS_y, \quad x\in \mathbb R^d\backslash\mathcal O.
$$
Thus
$$
 \psi_{\text {sc}}=\int_{\partial\mathcal O}E(x-y)( N^{in}_0-N^{out})\psi_{\text {sc}}dS_y, \quad x\in \mathbb R^d\backslash\mathcal O,
$$
which leads to the following formula for the scattering amplitude (i.e., for the kernel of operator (\ref{farf})) after passing to the limit $|x| \rightarrow \infty$ :
\begin{equation}\label{scatc2}
f(k,\theta,\omega)  =\alpha\mathcal L^* (  N^{in}_0-N^{out})\psi_{\text {sc}}(x), \quad \alpha=\frac{1}{4\pi}\left (\frac{k}{2\pi i}\right )^{\frac{d-3}{2}}.
\end{equation}

It remains only to find $\psi_{\text {sc}}$ on $\partial\mathcal O$ from (\ref{AnoneE}). Let us denote function $e^{ik\omega \cdot x}$ by $h$. Since $\frac{\partial \psi}{\partial \nu}=\frac{\partial \psi_{\text {sc}}}{\partial \nu}+\frac{\partial h}{\partial \nu}=N^{out}\psi_{\text {sc}}+N^{in}_0h$ and $\frac{\partial \psi}{\partial \nu}=N^{in}_{n}\psi$, the continuity of $\psi$ on $\partial\mathcal O$ implies that
\[
\psi_{\text {sc}}+h=\psi, ~~N^{out}\psi_{\text {sc}}+N^{in}_0h=N^{in}_{n}\psi, \quad x\in \partial\mathcal O.
\]
We apply operator $N^{in}_{n}$ to the first equality and then subtract the second one. This leads to
\[
\psi_{\text {sc}}=(N^{in}_{n}-N^{out})^{-1}(N^{in}_0 -N^{in}_n)h, \quad x\in \partial\mathcal O,
\]
which, together with (\ref{scatc2}), completes the proof.
\end{proof}

The following properties of operators (\ref{lstar}) (compare to \cite[Lemma 5.2]{EP}) will be very essential for us.
\begin{lemma}\label{llstar}
If $-k^2$ is not an eigenvalue of the Dirichlet Laplacian  in $\mathcal O$, then the kernels of operators $\mathcal L^*$ and $\mathcal L$ are trivial, and therefore their ranges are dense.

If $-k^2$ is an eigenvalue of the Dirichlet Laplacian  in $\mathcal O$, then the dimensions of the kernels of operators $\mathcal L^*$ and $\mathcal L$ and their co-kernels (orthogonal complements to the closures of ranges) does not exceed the multiplicity $\kappa$ of the eigenvalue $-k^2$, and $\dim\ker \mathcal L^*=\dim\rm{coker} \mathcal L=\kappa$.
\end{lemma}
\begin{proof}
Let us prove that $\dim\ker \mathcal L^*=\kappa$. It will be done by establishing a one-to-one correspondence between the elements $u$ of the kernel and the eigenfunctions $v$ of the Dirichlet problem. In fact, we will show that the set $\{u\}$ coincides with the set of normal derivatives $-\frac{\partial v}{\partial\nu}$ on $\partial\mathcal O$. The normal derivatives define $v$ uniquely due to the uniqueness of the solution of the Cauchy problem for the Helmholtz equation.

Let $u$ belong to the kernel of $\mathcal L^*$, i.e., $\mathcal L^*u=0$ on $S^{d-1}$. Consider the function
\[
Tu=\int_{\partial\mathcal O}E(x-y)u(y)dS_y, \quad x\in \mathbb R^d,
 \]
 where $E$ is the fundamental solution of the Helmholtz equation that satisfies the radiation conditions. Since $u\in \ker \mathcal L^*$, we have $Tu=o(r^{(1-d)/2}),~r=|x|\to\infty,$ and
 from Rellich's lemma it follows that $ Tu=0$ on  $\mathbb R^d\backslash\mathcal O$. Hence $v:=Tu,~x\in\mathcal O,$ is an eigenfunction of the Dirichlet Laplacian  in $\mathcal O$ with the eigenvalue $-k^2$. Since the density $u$ is equal to the jump of the normal derivative of $Tu$ on $\partial\mathcal O$, we obtain that $u=\frac{\partial v}{\partial\nu}, x\in \partial\mathcal O$.

 Conversely, let $v$ be an eigenfunction of the Dirichlet Laplacian  in $\mathcal O$ with the eigenvalue $-k^2$ and let $u=\frac{\partial v}{\partial\nu}, x\in \partial\mathcal O$. Let $\widehat{v},~ x\in \mathbb R^d,$ be the extension of $v$ to $\mathbb R^d\backslash\mathcal O$ by zero. Then $(\Delta+k^2)\widehat{v}=-\delta (\partial\mathcal O)u,~x\in \mathbb R^d,$ where $\delta (\partial\mathcal O)$ is the delta function on $\partial\mathcal O$. Thus
 \[
 \widehat{v}=\int_{\partial\mathcal O}E(x-y)u(y)dS_y, \quad x\in \mathbb R^d.
 \]
 The right-hand side above has the following asymptotic behavior at infinity:
 \[
 c_d(\mathcal L^*u)(\theta)\frac{e^{ikr}}{r^{(d-1)/2}}(1+o(1)), \quad \theta=x/r, \quad r=|x|\to\infty.
 \]
On the other hand, $\widehat{v}=0$ on $\mathbb R^d\backslash\mathcal O$. Thus, $L^*u=0,$ i.e., $u=\frac{\partial v}{\partial\nu}, x\in \partial\mathcal O,$ belongs to the kernel of operator $L^*.$ Hence $\dim\ker \mathcal L^*=\kappa$. Obviously, $\dim\rm{coker} \mathcal L=\dim\ker \mathcal L^*.$

Since $\dim\ker \mathcal L=\dim\rm{coker} \mathcal L^*$, it remains to show that $\dim\ker \mathcal L\leq\kappa$. Let $\varphi\in \ker \mathcal L$,
\[
\widehat{v}=\int_{S^{d-1}} e^{ikx\cdot \omega} \varphi(\omega) dS_\omega, \quad x\in \mathbb R^d,
\]
and let $v=P\widehat{v},~x\in \mathcal O,$ where $P$ is the restriction on the domain $\mathcal O.$ 
Since $\widehat{v}$ satisfies the equation $\Delta \widehat{v}+k^2\widehat{v}=0$ and $\mathcal L\varphi=0$, function $v$ is an eigenfunction of the Dirichlet Laplacian with the eigenvalue $-k^2$. Since $\widehat{v}$ is the Fourier transform of $\delta(S^{d-1})\varphi$, function $\widehat{v}$ can not vanish identically in $\mathbb R^d$ for non zero $\varphi$. Therefore it can not be equal to zero identically in $\mathcal O$ due to the analyticity in $x$. Hence, the map $\varphi\to v=P\widehat{v}$ is injective, and therefore $\dim\ker \mathcal L\leq\kappa$.
\end{proof}

{\it (C). Additional properties of the D-to-N operators (\ref{NN}).}
Recall that an operator function $A(\lambda):H_1\to H_2,~\lambda \in D,$ in Hilbert spaces $H_1,H_2$ is called Fredholm finitely meromorphic if 1) it is meromorphic in $\lambda\in D,$ 2) it is Fredholm for each $\lambda$ that is not a pole of $A(\lambda)$, 3) if $\lambda=\lambda_0$ is a pole, then the principal part at the pole has a finite range and the analytic part is Fredholm at $\lambda=\lambda_0$.
\begin{lemma}\cite{bleher}\label{blex}
Let $A(\lambda)$ be Fredholm finitely meromorphic in a connected set $D$. If there is a point $\lambda=\lambda_0$ where the operator $A(\lambda)$ is one-to-one and onto, then the operator function $A^{-1}(\lambda)$ is also Fredholm finitely meromorphic.
\end{lemma}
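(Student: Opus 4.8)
The plan is to reduce the global statement to a local one and then propagate it along the connected set $D$. First, two normalisations. The poles of $A$ form a discrete subset $\Sigma\subset D$; off $\Sigma$ the family $A(\lambda)$ is holomorphic and Fredholm, so its index is locally constant, and at a pole $\mu$ the analytic part $A_0(\lambda)$ differs from $A(\lambda)$ by a finite-rank operator for $\lambda\ne\mu$, so $A_0$ is Fredholm near $\mu$ with the same locally constant index. Since this index equals $0$ near the invertible point $\lambda_0$, it is $0$ on all of $D$. Hence $A(\lambda)$ and all of its analytic parts are Fredholm of index $0$, so the inverse, wherever it exists, is automatically Fredholm; the real content is its meromorphic structure.

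Now fix $\lambda_1\in D$; let $A_0$ be the analytic part of $A$ at $\lambda_1$ (so $A_0=A$ if $\lambda_1\notin\Sigma$) and $P$ the finite-rank principal part. Since $A_0(\lambda_1)$ is Fredholm of index $0$, choose a \emph{constant} finite-rank operator $Q$ with $B(\lambda):=A_0(\lambda)+Q$ invertible at $\lambda_1$; as invertibility is open and $B$ is holomorphic near $\lambda_1$, $B^{-1}$ is holomorphic in a neighbourhood of $\lambda_1$. Then $A=B+(P-Q)=B(I+K)$ with $K:=B^{-1}(P-Q)$. Writing $(P-Q)(\lambda)u=\sum_{i=1}^m\phi_i(\lambda)(u)\,g_i$, where $g_1,\dots,g_m$ span the \emph{fixed} finite-dimensional subspace generated by the ranges of the Laurent coefficients of $P$ and of $Q$ and the functionals $\phi_i(\lambda)$ are finitely meromorphic, one gets $K(\lambda)u=\sum_{i=1}^m\phi_i(\lambda)(u)\,h_i(\lambda)$ with $h_i(\lambda):=B(\lambda)^{-1}g_i$ holomorphic near $\lambda_1$; so $K$ has rank $\le m$ and is finitely meromorphic near $\lambda_1$ (holomorphic when $\lambda_1\notin\Sigma$).

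It remains to invert $I+K$, which is a Weinstein--Aronszajn reduction. Applying the functionals $\phi_j(\lambda)$ to $(I+K(\lambda))u=w$ turns it into the finite linear system $C(\lambda)\widehat u=\widehat w$ with $C(\lambda)=\bigl(\delta_{ji}+\phi_j(\lambda)(h_i(\lambda))\bigr)_{j,i=1}^m$; the operator $I+K(\lambda)$ is invertible precisely when $\det C(\lambda)\ne0$, and then
\[
(I+K(\lambda))^{-1}=I-\sum_{i,j=1}^m\bigl(C(\lambda)^{-1}\bigr)_{ij}\,h_i(\lambda)\,\phi_j(\lambda)(\cdot).
\]
Since $C^{-1}=(\det C)^{-1}\operatorname{adj}C$ and $\det C\not\equiv0$, the right-hand side is $I$ plus a finite-rank finitely meromorphic operator whose principal part at any singularity has range in $\operatorname{span}\{h_1,\dots,h_m\}$ and whose analytic part is $I$ plus finite rank, hence Fredholm; therefore $(I+K)^{-1}$, and with it $A^{-1}=(I+K)^{-1}B^{-1}$, is Fredholm finitely meromorphic near $\lambda_1$. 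That $\det C\not\equiv0$ near each $\lambda_1$ is the standard analytic-Fredholm dichotomy: the set of points near which $\det C$ vanishes identically — equivalently, near which $A$ is nowhere invertible — is open, and so is its complement, so by connectedness of $D$ and invertibility of $A(\lambda_0)$ the complement is all of $D$, while the zero set of $\det C$, i.e. the pole set of $A^{-1}$ off $\Sigma$, is discrete. Patching these local descriptions over $D$ proves the lemma.

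The substantive point — as opposed to bookkeeping — is this dichotomy/propagation step, which is what converts ``invertible at one point'' into ``invertible off a discrete set'', together with checking that at a pole $\mu$ of $A$ the pole of $K$ (carried by the $\phi_i$) interacts with the matrix inversion so that $A^{-1}$ keeps a \emph{finite-rank} principal part and a \emph{Fredholm} analytic part; the Neumann-series invertibility of $B$, the passage to the matrix $C$, and the determinant formula are routine. Alternatively, this is precisely the statement of \cite{bleher} and may be cited directly.
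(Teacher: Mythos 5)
The paper offers no proof of this lemma; it is quoted directly from \cite{bleher}, so there is nothing internal to compare your argument against. Your reconstruction is the standard proof of the meromorphic Fredholm theorem (Bleher, Gohberg--Sigal, Steinberg) and is correct: the locally constant index forced to zero on all of $D$ by connectedness and invertibility at $\lambda_0$; the local factorization $A=B(I+K)$ with $B$ holomorphically invertible near $\lambda_1$ and $K$ finite-rank and finitely meromorphic; the Weinstein--Aronszajn reduction of $(I+K)^{-1}$ to the scalar $\det C(\lambda)$; and the open--closed dichotomy that propagates ``invertible somewhere'' from $\lambda_0$ to ``invertible off a discrete set'' on all of $D$. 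The one imprecision is cosmetic: the range of the principal part of $(I+K)^{-1}$ at a pole lies in the span of the Laurent coefficients of the $(C^{-1})_{ij}h_i\phi_j$ there, not literally in ${\rm span}\{h_1,\dots,h_m\}$; but finite-dimensionality, which is all the definition of ``finitely meromorphic'' requires, is unaffected.
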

\begin{lemma}\label{b}
Operators (\ref{NN}),
their inverses, and $(N^{in}_0-t)^{-1},~(N^{in}_n-t)^{-1},~ (N^{out}-t)^{-1}$, where $t$ is a constant, are Fredholm finitely meromorphic in $\lambda>0$.

Moreover, operators $(N^{in}_n - N^{out})^{-1}$ and $( N^{out}-t)^{-1}$ are analytic in $\lambda,~ \lambda>0$.
\end{lemma}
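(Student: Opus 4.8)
The plan is to combine the description of the operators in (\ref{NN}) as elliptic pseudodifferential operators supplied by Lemma \ref{mprl} with the abstract inversion principle of Lemma \ref{blex}. As a first step I would verify that each of the four operators in (\ref{NN}) is Fredholm finitely meromorphic on the connected set $D=\{\lambda>0\}$. By part 1 of Lemma \ref{mprl}, $N^{in}_0$ and $N^{in}_n$, viewed as maps $H^s(\partial\mathcal O)\to H^{s-1}(\partial\mathcal O)$, are elliptic pseudodifferential operators of order one on the closed manifold $\partial\mathcal O$, hence Fredholm; they are meromorphic in $\lambda>0$, their residues have finite range, and at a pole the analytic part is again elliptic of order one and hence Fredholm, so they are Fredholm finitely meromorphic, while $N^{out}$ is even analytic. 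For $N^{in}_0-N^{in}_n:H^s\to H^{s+1}$ I would invoke part 3 of Lemma \ref{mprl}: under the standing hypothesis $n(x)\neq1$ on $\partial\mathcal O$ the principal symbol $\frac{\lambda(n(x)-1)}{2|\xi^*|}$ is elliptic of order $-1$ for every $\lambda>0$, and since $N^{in}_0-N^{in}_n$ is a difference of two operators with finite-rank residues it is Fredholm finitely meromorphic as well.

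Next I would invert these operators with the help of Lemma \ref{blex}, for which it suffices to exhibit, for each one, a single $\lambda_0\in(0,\infty)$ at which the operator is one-to-one and onto. For $N^{in}_0$ and $N^{in}_n$ I would choose $\lambda_0$ outside the Dirichlet spectra of $-\Delta$ and of $-\Delta-\lambda n(x)$ (so the operators are analytic there) and outside the corresponding Neumann spectra; a kernel element would then be the Dirichlet trace of a Neumann eigenfunction, so the operator is injective, and being self-adjoint and Fredholm of index zero it is bijective. For $N^{in}_0-N^{in}_n$ I would pick $\lambda_0>0$ that is a pole of neither $N^{in}_0$ nor $N^{in}_n$ and that is not an ITE --- possible since the set of ITEs is discrete when $n\neq1$ on $\partial\mathcal O$ --- so that its kernel is trivial by Lemma \ref{ites}, and self-adjointness again gives bijectivity. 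The same scheme handles $(N^{in}_0-t)^{-1}$ and $(N^{in}_n-t)^{-1}$, with the Robin spectra defined by $\partial u/\partial\nu=tu$ replacing the Neumann ones and the index still zero (the additive constant does not affect the principal symbol). In each case Lemma \ref{blex} yields that the inverse is Fredholm finitely meromorphic on $(0,\infty)$.

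It remains to establish the two analyticity statements. For $N^{out}$, and for $N^{out}-t$ with a real constant $t$, injectivity holds at \emph{every} $\lambda>0$: from $N^{out}\varphi=t\varphi$ and (\ref{kvf}) one gets $\sqrt\lambda\int_{|\theta|=1}|f|^2dS=\Im(t\varphi,\varphi)=0$, so the radiating exterior solution with Dirichlet data $\varphi$ vanishes by Rellich's lemma and unique continuation, whence $\varphi=0$; moreover $N^{out}-t$ differs from $-N^{in}_0$ by a pseudodifferential operator of order $\leq 0$, hence by a compact one, so its index equals that of the self-adjoint $-N^{in}_0$, namely zero, and it is therefore bijective for all $\lambda>0$. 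Since $N^{out}-t$ is analytic in $\lambda$, so is $(N^{out}-t)^{-1}$. Finally, $(N^{in}_n-N^{out})^{-1}$ is Fredholm finitely meromorphic by the previous paragraph --- invertibility at some $\lambda_0$ outside the Dirichlet spectrum of $-\Delta-\lambda n$ follows from unique solvability of the transmission scattering problem exactly as in the proof of Theorem \ref{t21} --- and to conclude that it is analytic on $(0,\infty)$ it suffices to show that $N^{in}_n-N^{out}$ has trivial kernel at every $\lambda>0$, the kernel being understood in the sense of the Definition preceding Lemma \ref{ites}, i.e. the set of elements mapped to zero by both the analytic and the principal part. Given such a $\psi$, I would continue it to a global function equal to the interior solution of $-\Delta v-\lambda n v=0$ with Dirichlet data $\psi$ on $\mathcal O$ and to the radiating solution of $-\Delta u-\lambda u=0$ with data $\psi$ on $\mathbb R^d\backslash\overline{\mathcal O}$; the matching of both traces makes it a global radiating solution of $-\Delta w-\lambda\widehat n w=0$, which must vanish for $\lambda>0$, so $\psi=0$.

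The step I expect to be the main obstacle is exactly this last one: one must give a precise meaning to the kernel of $N^{in}_n-N^{out}$ at a pole $\lambda_0$ of $N^{in}_n$ through the Definition (the finite-rank principal part together with the regularized analytic part) and then run the continuation/uniqueness argument in that degenerate situation, where the interior Dirichlet problem for (\ref{Anone}) is not uniquely solvable. The remaining parts of the lemma are essentially bookkeeping with Lemma \ref{mprl} and Lemma \ref{blex}.
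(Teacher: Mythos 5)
Your overall strategy coincides with the paper's: establish that the ``direct'' operators are Fredholm finitely meromorphic via Lemma \ref{mprl}, exhibit one value $\lambda_0$ of injectivity (hence bijectivity by index zero) for each, and invoke Lemma \ref{blex}; your choices of $\lambda_0$ (avoiding Neumann/Robin spectra, resp.\ ITEs) match the paper's. The one place you diverge is the final analyticity claim for $(N^{in}_n-N^{out})^{-1}$, and it is exactly the step you flag as the main obstacle. The paper disposes of it with the same one-line quadratic-form argument you already use for $N^{out}-t$: since $N^{in}_n$ is symmetric (and, at a pole, both its residue and its analytic part are symmetric, so an element of the kernel in the sense of the Definition still satisfies $\Im(N^{in}_nf,f)=0$), one gets
\[
\Im\bigl((N^{in}_n-N^{out})f,f\bigr)=-\Im(N^{out}f,f)<0 \quad\text{for } f\neq 0
\]
by (\ref{kvf}), so the kernel is trivial at \emph{every} $\lambda>0$ with no case distinction. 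Your alternative — gluing the interior solution of (\ref{Anone}) to the exterior radiating solution and invoking Rellich plus unique continuation — is correct at non-poles of $N^{in}_n$, but at an interior Dirichlet eigenvalue of $-\Delta-\lambda n$ the interior solution with prescribed Dirichlet data need not exist or be unique, so the continuation argument requires extra work precisely in the degenerate case; the symmetry argument buys you uniformity across poles for free. I would therefore replace your last paragraph by the $\Im$-form argument and keep the rest as is.
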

To be rigorous, one needs to write $tI$ here and in other similar formulas, but we will omit the identity operator $I$.

\begin{proof} Let us call operators (\ref{NN}), their shifts by $t$, and operator $N^{in}_n - N^{out}$ the direct operators and their inverses the inverse operators. The direct operators are meromorphic in $\lambda$ with finite ranges of residues due to Lemma \ref{mprl}. They are Fredholm since they are elliptic. For each direct operator, one can easily find a point $\lambda=\lambda_0$ where the kernel of the operator is trivial.
Indeed, for the last operator in (\ref{NN}), one can take any $\lambda_0$ that is not an ITE (see Lemma \ref{b}). Such a $\lambda_0$ exists since the set of ITEs is discrete (see \cite{lakvain7}). For other direct operators, except $N^{in}_n - N^{out}$, one can choose any $\lambda_0>0$ that is not an eigenvalue of the corresponding Dirichlet or impedance (with the boundary condition $u_\nu-tu=0$ on $\partial\mathcal O$) problem. Below we show that any $\lambda_0>0$ can be chosen for $N^{in}_n - N^{out}$. Since every elliptic operator on a compact manifold has index zero, all the direct operators are also onto when $\lambda=\lambda_0$, and Lemma \ref{blex} is applicable to these operators. Hence the inverse operators are also Fredholm and finitely meromorphic in $\lambda>0$.

It remains to show that operators $N^{in}_n - N^{out}$ and $N^{out}-t$ do not have kernels when $\lambda>0$ (which implies that their inverse operators are analytic).  Since operator $N^{in}_n$ is symmetric, (\ref{kvf}) implies that
\[
\Im((N^{in}_n - N^{out})f,~f)=-\Im( N^{out}f,~f)<0 \quad {\rm if}~~f\neq 0,
\]
 i.e., operator $N^{in}_n - N^{out},~\lambda>0,$ does not have a kernel. These arguments remain valid if $N^{in}_n$ is replaced by $t$.
\end{proof}

{\it (D). An alternative way to define ITEs}. Due to Lemma \ref{ites}, the factor before $\mathcal L$ in formula (\ref{FFF}) for $F$ establishes a connection between the scattering matrix (related to $F$ by (\ref{sk})) and ITEs. However, the existence of singular ITEs makes it difficult to work with this factorization of $F$ and to use the factor $(N^{in}_0 -N^{in}_n)$, which may have a pole and a kernel at the same ITE. To avoid this difficulty, $F$ will be represented in a different form, where the following operator will be used to relate $F$ and ITEs:
\begin{equation}\label{r2r}
R(\lambda)=(N^{in}_n-t)^{-1}-(N^{in}_0-t)^{-1}:H^s(\partial \mathcal O)\to H^{s+3}(\partial \mathcal O),
\end{equation}
 where $t$ is a constant.  We are going to study operator (\ref{r2r}) now and will discuss the relation between $F$ and $R$ later.

Obviously, boundary conditions (\ref{Antwo}) are equivalent to the following ones:
\begin{equation}\label{AntwoEBB}
\begin{array}{l}
u=v, \quad x \in \partial \mathcal O, \\
\frac{\partial u}{\partial \nu} - tu=\frac{\partial v}{\partial \nu}-tv, \quad x \in \partial \mathcal O,
\end{array}
\end{equation}
where $t>0$ is arbitrary. Consider the set $\{t_s(\lambda)\}$, $\lambda>0$, of values of $t$ for which the impedance problem
\begin{equation}\label{imppr}
-\Delta v - \lambda   n(x)v =0, \quad x \in \mathcal O, \quad v\in H^2(\mathcal O); \quad \frac{\partial v}{\partial \nu}-tv=o, \quad x \in \partial \mathcal O,
\end{equation}
has a non-trivial solution. This set is discrete and countable. Equivalently, one can define this set as the set of values of $t$ for which operator $N_n^{in}(\lambda)-t$ has a non-trivial kernel (see the definition of the kernel in Section 2 (B) if $N_n^{in}(\lambda)-t$ has a pole). One could also describe it as the set of eigenvalues of operator $N_n^{in}(\lambda)$, but it would be a little vague at points $\lambda$ where $N_n^{in}(\lambda)$ has a pole.

The equivalence of two definitions of the set $\{t_s\}$ (via the impedance problem and via the kernel of $N_n^{in}(\lambda)-t$) is obvious when $\lambda$ is not a pole of $N_n^{in}$ (i.e., $\lambda$ is not an eigenvalue of the Dirichlet problem for equation (\ref{Anone})). If $\lambda=\lambda_0$ is a pole of $N_n^{in}(\lambda)$, then it is obvious that the existence of the kernel of  $N_n^{in}(\lambda)-t$ implies the existence of the solution of the impedance problem. The converse statement needs a short justification. We will not provide it since we will not need this converse statement.

We will choose and fix an arbitrary value of $t$ in (\ref{AntwoEBB}) such that for all $s,i$,
\begin{equation}\label{ties}
t\notin \{t_s(\lambda_i^T)\}\bigcup\{t_s(\lambda_i^D)\}\bigcup\{t_s(\lambda_i^{n,D})\}.
\end{equation}
Since we are not going to vary $t$ (except in one insignificant place that will not affect any previous arguments), we usually will not mark explicitly the dependence of any operators or functions on $t$. Thus the value of $t$ in (\ref{r2r}) is fixed.
 \begin{lemma}\label{r5}
 Operator (\ref{r2r}) is meromorphic in $\lambda,~\lambda>0$. It is an elliptic pseudo-differential operator of order $-3$ with the principal symbol $\frac{\lambda(n(x)-1)}{2|\xi^*|^{3}}$. It has a non-trivial kernel only if $\lambda$ is an ITE and the dimension of the kernel coincides with the multiplicity of the ITE. It does not have poles at ITEs.
\end{lemma}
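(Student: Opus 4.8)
The plan is to read the analytic properties of $R(\lambda)$ (meromorphy, order, symbol) off the structure of the Dirichlet-to-Neumann maps established in Lemmas~\ref{mprl} and~\ref{b}, and to identify $\ker R(\lambda)$ with the space of transmission eigenfunctions at $\lambda$ by a Cauchy-data argument.

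Meromorphy is immediate: by Lemma~\ref{b} each of $(N^{in}_n-t)^{-1}$ and $(N^{in}_0-t)^{-1}$ is Fredholm finitely meromorphic in $\lambda>0$, hence so is their difference. For the order and the principal symbol I would use the elementary identity
$$
R(\lambda)=(N^{in}_n-t)^{-1}-(N^{in}_0-t)^{-1}=(N^{in}_n-t)^{-1}\,(N^{in}_0-N^{in}_n)\,(N^{in}_0-t)^{-1},
$$
valid, in the meromorphic sense, for all $\lambda>0$. By Lemma~\ref{mprl}(1) the operators $N^{in}_0-t$, $N^{in}_n-t$ are elliptic pseudodifferential operators of order $1$ with principal symbol $|\xi^*|$, so their inverses are elliptic of order $-1$ with principal symbol $|\xi^*|^{-1}$; by Lemma~\ref{mprl}(3) the middle factor is elliptic of order $-1$ with principal symbol $\lambda(n(x)-1)/(2|\xi^*|)$. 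Composing the three symbols yields order $-3$ and principal symbol $\lambda(n(x)-1)/(2|\xi^*|^{3})$, nonvanishing for $\xi\neq0$ because $n\neq1$ on $\partial\mathcal O$; thus $R(\lambda)$ is elliptic of order $-3$, and (at points where $R$ is analytic) $\ker R(\lambda)$ consists of smooth functions, independent of the Sobolev index in (\ref{r2r}).

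The essential observation for the remaining two assertions is that $(N^{in}_n-t)^{-1}$ and $(N^{in}_0-t)^{-1}$ are impedance-to-Dirichlet solution operators: $(N^{in}_n-t)^{-1}\varphi=v|_{\partial\mathcal O}$ with $-\Delta v-\lambda n v=0$ in $\mathcal O$ and $\partial_\nu v-tv=\varphi$ on $\partial\mathcal O$, and similarly for $(N^{in}_0-t)^{-1}$ with $n$ replaced by $1$. By the choice of $t$ in (\ref{ties}), $t$ is not an impedance eigenvalue of (\ref{Anone0}) or (\ref{Anone}) at any ITE, so both impedance problems are uniquely solvable for $\lambda$ near any ITE, the two solution operators are analytic there, and hence $R$ has no pole at an ITE. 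Now let $\lambda_0$ be an ITE, or any point at which $R$ is analytic. If $\varphi\in\ker R(\lambda_0)$, take $u,v\in H^2(\mathcal O)$ with $-\Delta u=\lambda_0 u$, $\partial_\nu u-tu=\varphi$ and $-\Delta v=\lambda_0 n v$, $\partial_\nu v-tv=\varphi$; since $R(\lambda_0)\varphi=0$ the functions $u$ and $v$ have the same Dirichlet trace, and as they also share the impedance data $\varphi$ they share the Neumann trace, so $(u,v)$ solves (\ref{Anone0})--(\ref{Antwo}). Conversely, a transmission eigenfunction $(u,v)$ produces $\varphi:=\partial_\nu u-tu|_{\partial\mathcal O}=\partial_\nu v-tv|_{\partial\mathcal O}\in\ker R(\lambda_0)$, and these two assignments are mutually inverse linear maps ($\varphi=0$ forces $u=v=0$ by unique solvability of the impedance problems). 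Therefore $\dim\ker R(\lambda_0)$ equals the dimension of the transmission eigenspace, which is zero when $\lambda_0$ is not an ITE and the multiplicity of the ITE otherwise, by Lemma~\ref{ites}.

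It remains to exclude a nontrivial (meromorphic) kernel at a pole $\lambda_0$ of $R$ that is not an ITE --- necessarily an impedance eigenvalue of (\ref{Anone0}) or (\ref{Anone}). Here I would use standard perturbation theory for the self-adjoint analytic families $N^{in}_0-t$, $N^{in}_n-t$: the residue $Q_n$ of $(N^{in}_n-t)^{-1}$ is self-adjoint with range $K_n:=\ker(N^{in}_n(\lambda_0)-t)$ and kernel $K_n^{\perp}$, and the analytic part of $(N^{in}_n-t)^{-1}$ is injective on $K_n^{\perp}$ (it coincides there with the reduced resolvent); similarly for $(N^{in}_0-t)^{-1}$ with $K_0:=\ker(N^{in}_0(\lambda_0)-t)$. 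Since $\lambda_0$ is not an ITE, $K_0\cap K_n=\{0\}$ (a common boundary value would produce a transmission eigenfunction as above), so for $\varphi\in\ker R(\lambda_0)$ the vanishing of the principal part forces $Q_n\varphi=Q_0\varphi\in K_n\cap K_0=\{0\}$, i.e.\ $\varphi\in K_0^{\perp}\cap K_n^{\perp}$; then the two impedance problems with data $\varphi$ are solvable, and the vanishing of the analytic part says their solutions have a common Dirichlet trace, so they form a transmission eigenfunction and $\varphi=0$. I expect this last step to be the only genuinely delicate point: tracking the principal and analytic parts of the non-analytic operators $(N^{in}_0-t)^{-1}$, $(N^{in}_n-t)^{-1}$ at their poles and checking that the analytic part returns the Dirichlet trace of a bona fide solution of the impedance problem; everything else reduces to Lemmas~\ref{mprl} and~\ref{b}, the symbol calculus, and the Cauchy-data characterization of ITEs.
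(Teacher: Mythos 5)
Your proof is correct and follows essentially the same route as the paper: the same factorization $R=(N^{in}_0-t)^{-1}(N^{in}_0-N^{in}_n)(N^{in}_n-t)^{-1}$ combined with Lemmas~\ref{mprl} and~\ref{b} for the order and symbol, the impedance-to-Dirichlet reading of $(N^{in}_j-t)^{-1}$ to identify $\ker R(\lambda_0)$ with the transmission eigenspace, and condition (\ref{ties}) to rule out poles at ITEs. Your treatment of the kernel at poles of $R$ (the singular case) is more explicit than the paper's one-sentence dismissal, but it is an elaboration of the same argument rather than a different approach.
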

\begin{proof} Operator (\ref{r2r}) can be written as
\[
R(\lambda)=(N^{in}_0-t)^{-1}(N^{in}_0-N^{in}_n)(N^{in}_n-t)^{-1}.
\]
This formula together with Lemmas \ref{mprl}, \ref{b} immediately imply the first two statements of Lemma \ref{r5}. Furthermore, $R(\lambda)\varphi=0$ is equivalent to
\[
(N^{in}_n-t)^{-1}\varphi=(N^{in}_0-t)^{-1}\varphi.
\]
Both sides here are zeroes if and only if $\varphi$ is the Dirichlet data of a singular interior transmission eigenfunction. If they are not zeroes, then $\psi=(N^{in}_0-t)^{-1}\varphi$ satisfies $N^{in}_n\psi=N^{in}_0\psi$ and defines non-singular ITEs. Finally, from (\ref{ties}) it follows that operator  (\ref{r2r}) does not have poles at ITEs.
\end{proof}

We will use the approach to counting ITEs that was developed in \cite{lakvain7}, but now we will use operator (\ref{r2r}) instead of $N^{in}_n-N^{in}_0$. We fix an arbitrary invertible symmetric elliptic operator $D$  of second order defined on $\partial\mathcal O$. Let
 \begin{equation}\label{rhat}
 \widehat{R}(\lambda)=\sigma DRD, \quad
\sigma={\rm sign}_{x\in \partial \mathcal O}(n(x)-1)
  \end{equation}
($\sigma$ is not to be confused with $\sigma _i$ defined in (\ref{defsi})), and let $\{\mu_j(\lambda)\}$ be the set of real eigenvalues of $\widehat{R}(\lambda)$ where $\lambda$ is not a pole of $\widehat{R}(\lambda)$. Let $n^-(\lambda) \geq 0$  be the number of negative eigenvalues $\mu_j(\lambda)$. From Lemma \ref{r5} it follows that $\widehat{R}(\lambda)$ is an elliptic operator of first order with a positive principal symbol. Thus $\mu_j(\lambda)\to\infty$ as $j\to\infty$ and $n^-(\lambda)$ is well defined if $\lambda$ is not a pole of $\widehat{R}(\lambda)$. We will work with operator $\widehat{R}$ instead of $R$ in order to deal with an operator whose eigenvalues converge to infinity, not to zero. On the other hand, operator $D$ establishes a one-to-one correspondence between the kernels of $R$ and the kernels of $\widehat{R}$. Thus ITEs can be defined as values of $\lambda=\lambda^T_i$ where $\widehat{R}(\lambda)$ has a kernel, and $m_i$ eigenvalues of  $\widehat{R}(\lambda)$ vanish at each ITE $\lambda=\lambda^T_i$ of multiplicity $m_i>0$.

Since operator (\ref{r2r}) is self-adjoint and analytic in $\lambda$ in a neighborhood of each ITE, its eigenvalues and eigenfunctions can be chosen to be analytic in these neighborhoods (see \cite[Example 3, XIII.12]{reed}). The following lemma follows from there,  Lemma \ref{r5}, and the theorem on the spectral decomposition of self-adjoint operators (after an appropriate enumeration of the eigenvalues $\mu_j$):
 \begin{lemma}\label{26}
 Let $\lambda=\lambda_0=\lambda_i^T$ be an ITE of order $m_i$. Then there exists $\delta>0$ such that
 \begin{equation}\label{mun1}
\widehat{R}(\lambda)=\sum_{j=1}^{m_i} \mu_j(\lambda)P_{\varphi_j(\lambda)}+K(\lambda), \quad |\lambda-\lambda_0|<\delta,
 \end{equation}
 where $\mu_j(\lambda)$ are analytic (when $|\lambda-\lambda_0|<\delta$) eigenvalues of $\widehat{R}(\lambda)$ such that $\mu(\lambda_0)=0$ and the corresponding eigenfunctions $\varphi_j(\lambda)$ are analytic and orthogonal, $P_{\varphi_j(\lambda) }$ is the projection on $\varphi_j$, and the kernel of $K(\lambda)$ coincides with ${\rm span}\{\varphi_j(\lambda)  \}$.

 The inverse operator has the form:
  \begin{equation}\label{mun2}
\widehat{R}^{-1}(\lambda)=\sum_{j=1}^{m_i} \mu_j^{-1}(\lambda)P_{\varphi_j(\lambda)}+K_1(\lambda), \quad |\lambda-\lambda_0|<\delta,
 \end{equation}
 where $K_1$ is analytic in $\lambda,~ |\lambda-\lambda_0|<\delta$ (it is inverse to $K$ on the subspace orthogonal to ${\rm span}\{\varphi_j(\lambda)  \}$).
 \end{lemma}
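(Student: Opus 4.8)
\emph{Proof idea.} The plan is to deduce the statement from the analytic perturbation theory of self-adjoint operators (Reed--Simon XIII.12, Example~3), for which it suffices to check that $\widehat R(\lambda)$ is a self-adjoint analytic family near $\lambda_0$ with discrete spectrum bounded below. Self-adjointness follows from Lemma~\ref{mprl}: $N^{in}_0$ and $N^{in}_n$ are self-adjoint and $t$ is real, so $R(\lambda)=(N^{in}_n-t)^{-1}-(N^{in}_0-t)^{-1}$ is self-adjoint, and since $D$ is symmetric and $\sigma=\pm1$, so is $\widehat R=\sigma DRD$. Analyticity near $\lambda_0$ is exactly the last assertion of Lemma~\ref{r5} ($R$ has no pole at an ITE), while $\widehat R(\lambda)$ is elliptic of order $1$ with positive principal symbol, hence for $\lambda$ near $\lambda_0$ it has discrete real spectrum accumulating only at $+\infty$. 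Finally, $D$ is an isomorphism, so $\dim\ker\widehat R(\lambda_0)=\dim\ker R(\lambda_0)=m_i$ by Lemma~\ref{r5}.

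First I would isolate the eigenvalue $0$ of $\widehat R(\lambda_0)$: choose a circle $\Gamma\subset\mathbb C$ around $0$ that avoids $\sigma(\widehat R(\lambda_0))$ and whose interior meets this spectrum only at $0$. By norm-analyticity of the resolvent $(\widehat R(\lambda)-z)^{-1}$ and compactness of $\Gamma$, there is $\delta>0$ such that for $|\lambda-\lambda_0|<\delta$ the contour $\Gamma$ avoids $\sigma(\widehat R(\lambda))$, the Riesz projection $P(\lambda)=-\tfrac1{2\pi i}\oint_\Gamma(\widehat R(\lambda)-z)^{-1}\,dz$ is analytic, self-adjoint, of rank $m_i$ and commutes with $\widehat R(\lambda)$, the part of $\sigma(\widehat R(\lambda))$ outside $\Gamma$ stays bounded away from $0$ uniformly in $\lambda$, and (by discreteness of the ITE set) $\widehat R(\lambda)$ is invertible for $0<|\lambda-\lambda_0|<\delta$. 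Rellich's theorem applied to the finite part of the family yields analytic eigenvalue branches $\mu_1(\lambda),\dots,\mu_{m_i}(\lambda)$ inside $\Gamma$ with analytic, mutually orthogonal, normalized eigenfunctions $\varphi_j(\lambda)$ spanning $\operatorname{Ran}P(\lambda)$; since these branches converge to the eigenvalue $0$ of $\widehat R(\lambda_0)$ we get $\mu_j(\lambda_0)=0$. I then enumerate the whole spectrum so that these are $\mu_1,\dots,\mu_{m_i}$ (the remaining eigenvalues still tend to $+\infty$).

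Next I would set $K(\lambda)=\widehat R(\lambda)(I-P(\lambda))=\widehat R(\lambda)-\sum_{j=1}^{m_i}\mu_j(\lambda)P_{\varphi_j(\lambda)}$, which is analytic and gives (\ref{mun1}); it annihilates $\operatorname{Ran}P(\lambda)=\operatorname{span}\{\varphi_j(\lambda)\}$, and conversely $K(\lambda)\varphi=0$ forces $(I-P(\lambda))\varphi=0$ because $\widehat R(\lambda)$ is injective on $\operatorname{Ran}(I-P(\lambda))$ (its spectrum there is bounded away from $0$), whence $\ker K(\lambda)=\operatorname{span}\{\varphi_j(\lambda)\}$. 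For the inverse, $\widehat R(\lambda)$ restricted to $\operatorname{Ran}(I-P(\lambda))$ has a bounded analytic inverse; extending it by $0$ on $\operatorname{Ran}P(\lambda)$ defines the analytic operator $K_1(\lambda)$, and splitting along $\operatorname{Ran}P(\lambda)\oplus\operatorname{Ran}(I-P(\lambda))$ gives (\ref{mun2}) for $0<|\lambda-\lambda_0|<\delta$, and as an identity of meromorphic operator functions on the full disc.

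I do not expect a serious obstacle: the content is bookkeeping with the Kato--Rellich machinery. The one point deserving care is the uniform spectral gap --- that after shrinking $\delta$ the eigenvalues of $\widehat R(\lambda)$ other than the branches $\mu_1,\dots,\mu_{m_i}$ stay bounded away from $0$, which is what makes $K_1(\lambda)$ a genuine bounded operator and $K(\lambda)$ injective on $\operatorname{Ran}(I-P(\lambda))$. This follows from discreteness of $\sigma(\widehat R(\lambda_0))$ together with norm-continuity of the resolvent, plus the fact that $\lambda_0$ is isolated among the ITEs.
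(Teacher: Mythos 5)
Your proposal is correct and follows essentially the same route as the paper: the paper simply notes that $\widehat R(\lambda)$ is self-adjoint and (by Lemma \ref{r5}) analytic near the ITE, and invokes \cite[Example 3, XIII.12]{reed} together with the spectral decomposition of self-adjoint operators. You have merely spelled out the standard Riesz-projection/Kato--Rellich details that this citation encapsulates, so there is nothing substantively different to flag.
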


Let us denote by $\alpha_i^+,~(\alpha_i^-)$ the number of eigenvalues $\mu_j(\lambda)$ whose Taylor expansion at $\lambda=\lambda^T_i$ starts with an odd power of $\lambda-\lambda^T_i$ and the coefficient for this power has the same sign as $-\sigma$ (respectively, $\sigma$), where $\sigma $ is defined by (\ref{rhat}).
\begin{theorem}\label{th2a}
Let $n(x) \neq 1, x \in \partial \mathcal O$. Then
\begin{equation}\label{theorem2a}
\sum_{i ~ : ~ 0< \lambda^T_i < \lambda}( \alpha^+_i- \alpha^-_i) = \frac{\omega_d}{(2\pi)^d} \gamma   \lambda^{\frac{d}{2}} + O(\lambda^{\frac{d}{2}-\delta}), \quad \lambda \rightarrow \infty,
\end{equation}
where $\delta=\frac{1}{2d}$.
\end{theorem}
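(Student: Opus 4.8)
The plan is to count, as $\lambda'\to\infty$, the signed number $n_2(\lambda')$ of positive ITEs below $\lambda'$ where the relevant eigenvalues $\mu_j(\lambda)$ of $\widehat R(\lambda)$ cross zero, by using the conservation law $n^-(\lambda') = n_1(\lambda') + n_2(\lambda')$ alluded to in Remark 2. Here $n^-(\lambda')$ is the total number of negative eigenvalues of $\widehat R(\lambda')$ at a (non-pole) point $\lambda'$, and $n_1(\lambda')$ is the signed count of eigenvalue branches that enter or leave the negative half-axis through $\mu=-\infty$, which happens precisely when $\lambda$ passes through a pole of $\widehat R$, i.e., a Dirichlet eigenvalue of $-\Delta$ or of $-\Delta - \lambda n$. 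The left side of \eqref{theorem2a}, $\sum (\alpha_i^+ - \alpha_i^-)$, is exactly $-n_2(\lambda')$ up to the bookkeeping of signs fixed by $\sigma$ in \eqref{rhat}: an eigenvalue branch whose leading odd-order coefficient has sign $-\sigma$ crosses zero from $+$ to $-$ as $\lambda$ increases (contributing to $\alpha_i^+$), so these crossings account, with sign, for the net change in $n^-$ contributed by ITEs. Thus \eqref{theorem2a} follows once we compute the asymptotics of $n^-(\lambda')$ and of $n_1(\lambda')$.

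First I would compute $n^-(\lambda')$ — the number of negative eigenvalues of $\widehat R(\lambda') = \sigma D R(\lambda') D$ — via its symbol. By Lemma \ref{r5}, $R(\lambda')$ is an elliptic $\Psi$DO of order $-3$ with principal symbol $\tfrac{\lambda'(n(x)-1)}{2|\xi^*|^3}$, so $\widehat R(\lambda')$ is elliptic of order $1$ with positive principal symbol; its negative eigenvalues come from the lower-order part. The standard way to get the leading asymptotics of the number of negative eigenvalues of a self-adjoint $\Psi$DO on a compact manifold is a Weyl-type argument for the associated spectral projection — one writes $n^-(\lambda') = \operatorname{Tr}\,\mathbf 1_{(-\infty,0)}(\widehat R(\lambda'))$ and uses that the symbol of $R(\lambda')$ along the boundary, expanded for large $\lambda'$, has a sign governed by $\operatorname{sign}(n(x)-1)$, which flips the sign of $\sigma D R D$ exactly on the region where $n<1$. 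This is the same computation performed in \cite{lakvain7}; the outcome should be $n^-(\lambda') = \frac{\omega_d}{(2\pi)^d}\big(\operatorname{Vol}(\mathcal O) - \int_{\mathcal O} n^{d/2}\big)\lambda'^{d/2} + O(\lambda'^{d/2-\delta})$ when this quantity is positive, and a corresponding lower-order quantity otherwise — in any case the main term is $\tfrac{\omega_d}{(2\pi)^d}|\gamma|\lambda'^{d/2}$ with a sign determined by $\operatorname{sign}\gamma$. Getting the remainder estimate with the stated $\delta = \tfrac{1}{2d}$ is the technically delicate part and requires a careful parametrix construction near $\lambda = \lambda'$.

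Next I would handle $n_1(\lambda')$, the signed count of eigenvalue branches escaping to or returning from $-\infty$ as $\lambda$ crosses a pole of $\widehat R$. Near such a pole $\lambda = \lambda_0$, Lemma \ref{b} gives the structure: $R(\lambda)$ (hence $\widehat R(\lambda)$) has a principal part of finite rank, and the number of eigenvalues passing through $\mu = -\infty$, with orientation, is determined by the rank and signature of the residue — this is exactly the Friedlander-type mechanism referenced in Remark 2 (cf.\ \cite{fried}). The residues are controlled by the eigenprojections of the Dirichlet Laplacian for $-\Delta$ and for $-\Delta - \lambda n$; summing the signed multiplicities of Dirichlet eigenvalues below $\lambda'$ of each operator and combining with the sign factor $\sigma$, the Weyl law for the Dirichlet Laplacian gives $n_1(\lambda') = -\frac{\omega_d}{(2\pi)^d}\big(\operatorname{Vol}(\mathcal O) + \int_{\mathcal O} n^{d/2}\big)\lambda'^{d/2}$ plus the correct lower-order remainder (up to the fixed sign $\sigma$), consistent with the two summands in \eqref{eqold}. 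Then \eqref{theorem2a} follows by subtraction: $\sum(\alpha_i^+ - \alpha_i^-) = -n_2(\lambda') = n_1(\lambda') - n^-(\lambda')$, and the two $\int n^{d/2}$ contributions combine so that $-n^-(\lambda') + n_1(\lambda')$ leaves $\frac{\omega_d}{(2\pi)^d}\big(\operatorname{Vol}(\mathcal O) - \int_{\mathcal O} n^{d/2}\big)\lambda'^{d/2} = \frac{\omega_d}{(2\pi)^d}\gamma\,\lambda'^{d/2}$, with the remainder $O(\lambda'^{d/2-\delta})$ inherited from the worse of the two.

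I expect the main obstacle to be the precise bookkeeping in the conservation law $n^-(\lambda') = n_1(\lambda') + n_2(\lambda')$: one must make rigorous that, as $\lambda$ increases through $[0,\lambda']$, every change in the integer $n^-(\lambda)$ is accounted for either by a finite eigenvalue crossing $\mu = 0$ (an ITE, contributing to $n_2$) or by an eigenvalue crossing $\mu = -\infty$ (a pole of $\widehat R$, contributing to $n_1$), that these two event types do not interact at shared $\lambda$ (this is where the genericity condition \eqref{ties} on $t$ and the fact from Lemma \ref{r5} that $R$ has no poles at ITEs are essential), and that the signs attached in the definitions of $\alpha_i^\pm$ match the direction of the $\mu=0$ crossing. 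A secondary difficulty is the uniformity needed for the remainder: the symbol expansion of $R(\lambda')$ used to estimate $n^-(\lambda')$, and the Weyl remainders for the two Dirichlet problems used in $n_1(\lambda')$, must all be controlled with error $O(\lambda'^{d/2-\delta})$, $\delta=\tfrac1{2d}$, which for the $C^2$-boundary case is exactly the level of regularity where such remainders are known to be attainable.
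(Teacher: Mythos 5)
Your skeleton is the paper's own (the conservation law $n^-(\lambda')=n_1(\lambda')+n_2(\lambda')$ from Remark 2, with $n_2$ carrying the signed sum $\sigma\sum(\alpha_i^+-\alpha_i^-)$), but both of the two asymptotic inputs you feed into it are wrong, and as a result your final arithmetic does not produce $\gamma$. First, $n^-(\lambda')$: you claim it has the \emph{leading-order} Weyl asymptotics $\frac{\omega_d}{(2\pi)^d}|\gamma|\lambda'^{d/2}$ obtained from the symbol of $\widehat R(\lambda')$. This cannot be right: for every fixed $\lambda'$ the operator $\widehat R(\lambda')=\sigma DRD$ is elliptic of order one with \emph{positive} principal symbol, so its negative spectrum is finite and is produced entirely by the competition between the positive first-order part and the $\lambda$-growing lower-order terms; there is no leading Weyl term for $n^-$. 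The actual content of the paper at this point is the estimate \eqref{n33}, $n^-(\lambda)=O(\lambda^{d/2-\delta})$ with $\delta=\tfrac{1}{2d}$, proved as in estimate (14) of \cite{lakvain8} --- this is precisely where the exponent $\delta$ in the theorem comes from, and it says $n^-$ is \emph{sub-leading}, so that the whole main term must come from $n_1$.

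Second, $n_1(\lambda')$: the poles of $R(\lambda)=(N^{in}_n-t)^{-1}-(N^{in}_0-t)^{-1}$ are not at the Dirichlet eigenvalues (those are poles of $N^{in}_n$, $N^{in}_0$ themselves, where the inverses are harmless); they are at the values of $\lambda$ for which $N^{in}_n-t$ or $N^{in}_0-t$ has a kernel, i.e., at the eigenvalues of the two \emph{impedance} problems $\partial_\nu u-tu=0$. Moreover the two families enter with \emph{opposite} signs (the residues of $(N^{in}_n-t)^{-1}$ and of $-(N^{in}_0-t)^{-1}$ push eigenvalues through $\mu=-\infty$ in opposite directions, cf.\ the monotonicity computation \eqref{0410B}), so $n_1(\lambda')=\sigma\bigl(N^t_n(\lambda')-N^t(\lambda')\bigr)$, whose main term is $\pm\frac{\omega_d}{(2\pi)^d}\sigma\gamma\lambda'^{d/2}$ --- the \emph{difference} of the two Weyl terms, not the sum $\operatorname{Vol}(\mathcal O)+\int n^{d/2}$ you wrote down. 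With your two values, $n_1-n^-$ gives a main term proportional to $\operatorname{Vol}(\mathcal O)$ alone (or worse, depending on the sign case), not to $\gamma$; the sentence ``the two $\int n^{d/2}$ contributions combine'' is asserted but false as written. The correct closing step is simply $n_2(\lambda')=-n_1(\lambda')+O(\lambda'^{d/2-\delta})$ by \eqref{nplus} and \eqref{n33}, combined with \eqref{n22} and \eqref{44}.
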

\begin{proof} Let us fix an arbitrary point $\alpha>0$ that is not a pole of $\widehat{R}(\lambda)$ and is smaller than the smallest eigenvalues for the Dirichlet problems for equations (\ref{Anone0}) and (\ref{Anone}). Let us evaluate the difference $n^-(\lambda')-n^-(\alpha)$ by moving $\lambda$ from $\lambda=\alpha$ to a value $\lambda=\lambda'>\alpha$. The eigenvalues $\mu_j(\lambda)$ are meromorphic in $\lambda$ and may enter/exit the negative semi-axis $R_\mu^-=\{\mu:\mu<0\}$ only through the end points of the semi-axis. Thus we can split ${n^-}(\lambda')-n^-(\alpha)$ into
\begin{equation}\label{nplus}
{n^-}(\lambda')-n^-(\alpha)=n_1(\lambda')+n_2(\lambda'),
\end{equation}
where $n_1(\lambda')$ is the number of eigenvalues $\mu_j(\lambda)$ that enter/exit the negative semi-axis $R_\mu^-$ through the point $\mu=-\infty$ (when $\lambda$ changes from $\alpha$ to $\lambda'>\alpha$) and $n_2(\lambda')$ is the number of eigenvalues $\mu_j(\lambda)$ that enter/exit the negative semi-axis $R_\mu^-$ through the point $\mu=0$.
Obviously,
\begin{equation}\label{n22}
n_2(\lambda)=\sum_{i ~ : ~ \alpha< \lambda^T_i < \lambda}\sigma( \alpha^+_i- \alpha^-_i).
\end{equation}

Next, one can show that
 \begin{equation}\label{n11}
{n_1}(\lambda')=\sigma(N^t_n(\lambda')-N^t(\lambda')),
\end{equation}
where $N^t_n(\lambda),N^t(\lambda)$ are counting functions for operators $\frac{-1}{n(x)}\Delta$ and $-\Delta$, respectively, with the impedance boundary condition $\frac{\partial u}{\partial \nu}-tu=0$. In order to obtain (\ref{n11}), one needs to note that $\mu_j(\lambda)\to -\infty$ only when $\lambda$ passes through the poles of operator (\ref{r2r}). These poles occur exactly at eigenvalues of the corresponding impedance boundary problem. A rigorous proof of (\ref{n11}) can be obtained exactly as formula (27) in \cite{lakvain7}. The main term in the standard Weyl formula \cite[Th.1.6.1]{safvas} for the counting function of a self-adjoint elliptic problem does not depend on the boundary condition, i.e.,  (\ref{n11}) implies that
\begin{equation}\label{44}
n_1(\lambda) = \frac{\omega_d}{(2\pi)^d}\sigma \gamma   \lambda^{d/2}+O( \lambda^{(d-1)/2}), \quad \lambda\to\infty.
\end{equation}
  An important part of the proof of Theorem \ref{th2a} is the following estimate from above for the number $n^-(\lambda)$ of negative eigenvalues of the operator $R(\lambda)$:
 \begin{equation}\label{n33}
 n^-(\lambda)= O(\lambda^{d/2-\delta}), \quad \lambda\to\infty.
 \end{equation}
The latter estimate can be justified absolutely similarly to an analogous estimate (14) in (\cite{lakvain8}). The statement of the theorem follows immediately from (\ref{n22}), (\ref{44}),  (\ref{n33}).
\end{proof}

We will show in Section 3 that $\alpha^+_i- \alpha^-_i=\sigma_i$. Then Theorem \ref{th} will
follow from Theorem  \ref{th2a}.

{\it (E). An alternative representation of the far-field operator.}
\begin{lemma}\label{lemma1610}
For each positive $\lambda=k^2>0$, operator $\frac{1}{\alpha}F$ can be written as
\begin{equation}\label{lemma0710}
\frac{1}{\alpha}F=Q^*[R_1(\lambda)+R(\lambda)^{-1}+iI(\lambda)]Q, \quad Q=(N^{in}_n-t)^{-1}(N^{in}_n-N^{in}_0)\mathcal L,
\end{equation}
where 1) operators $ R_1, R, I$ are symmetric; 2) operator $R$ is defined in (\ref{r2r});
3) operators $R_1,I$ are analytic in $\lambda$; 4) $R_1$ is an elliptic operator of order one; 5) operator $I$ is infinitely smoothing (has order $-\infty$) and non-negative.  It is strictly positive for all $\lambda>0$, except possibly at most countable set $\{\widehat{\lambda}_s\}$, which does not contain any ITEs $\lambda_i^T$ or eigenvalues of the Dirichlet problem for equation (\ref{Anone}) and does not have any finite limit points.
\end{lemma}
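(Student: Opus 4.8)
The plan is to turn the factorization of the far‑field operator from \thmref{t21} into the form \eqref{lemma0710} by pure algebra, and then to read off the properties of the three summands from Lemmas~\ref{mprl}, \ref{b}, \ref{r5}. Abbreviate $A=N^{in}_n-N^{out}$ and $B=N^{in}_0-N^{in}_n$, so that $N^{in}_0-N^{out}=A+B$. Then \thmref{t21} reads
\[
\tfrac1\alpha F=\mathcal L^*(A+B)A^{-1}B\,\mathcal L=\mathcal L^*\,B(B^{-1}+A^{-1})B\,\mathcal L .
\]
Rearranging, using $N^{in}_n-N^{in}_0=-B$ and the definition $Q=(N^{in}_n-t)^{-1}(N^{in}_n-N^{in}_0)\mathcal L$, this becomes $\tfrac1\alpha F=Q^*\big[(N^{in}_n-t)(B^{-1}+A^{-1})(N^{in}_n-t)\big]Q$. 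Now $B=(N^{in}_0-t)-(N^{in}_n-t)$ together with $R=(N^{in}_0-t)^{-1}B(N^{in}_n-t)^{-1}$ (from the proof of \lemref{r5}) gives $(N^{in}_n-t)B^{-1}(N^{in}_n-t)=R^{-1}-(N^{in}_n-t)$, and $N^{in}_n-t=A+(N^{out}-t)$ gives $(N^{in}_n-t)A^{-1}(N^{in}_n-t)=(N^{in}_n+N^{out}-2t)+(N^{out}-t)A^{-1}(N^{out}-t)$; adding, the bracket equals $R^{-1}+G$ with
\[
G(\lambda):=(N^{out}-t)+(N^{out}-t)(N^{in}_n-N^{out})^{-1}(N^{out}-t) .
\]
Thus \eqref{lemma0710} holds with $R_1:=\tfrac12(G+G^*)$ and $I:=\tfrac1{2i}(G-G^*)$: these are symmetric, $G=R_1+iI$, and $R$ is \eqref{r2r}, which is properties~1 and~2. (The algebra should first be done for $\lambda$ off the discrete pole set of $N^{in}_0,N^{in}_n$ and then extended; both sides of \eqref{lemma0710} are in fact defined for every $\lambda>0$, since $R_1,I$ are analytic and the pole of $R^{-1}$ at an ITE is annihilated by $Q,Q^*$ — by the proof of \lemref{r5}, $(N^{in}_n-N^{in}_0)(N^{in}_0-t)^{-1}\varphi=0$ for $\varphi\in\ker R$. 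Density of the ranges of $\mathcal L,\mathcal L^*$ (\lemref{llstar}) is used to pass between operators on $\partial\mathcal O$ and their sandwiches by $\mathcal L,\mathcal L^*$.) Property~3 is then immediate, since $N^{out}$ (\lemref{mprl}) and $(N^{in}_n-N^{out})^{-1}$ (\lemref{b}) are analytic in $\lambda>0$; and for property~4, as the principal symbols of $N^{out}$ and of $N^{in}_n-N^{out}$ are $-|\xi^*|$ and $2|\xi^*|$, the principal symbol of $G$ equals $-|\xi^*|+(-|\xi^*|)\tfrac1{2|\xi^*|}(-|\xi^*|)=-\tfrac12|\xi^*|$, which is real and nonzero, so $R_1$ is elliptic of order $1$ while $\Im G$ has order $\le0$.

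For property~5 I would prove the stronger fact that $I$ is infinitely smoothing. Put $W:=\Im N^{out}$; by \eqref{kvf}, $W\ge0$, and $W$ is smoothing because $(W\varphi,\varphi)=\sqrt\lambda\int_{|\theta|=1}|f(\lambda,\theta)|^2\,dS$ exhibits it as $\sqrt\lambda$ times the square of the (smoothing) Dirichlet‑data‑to‑far‑field map. Since $\Im(N^{in}_n-N^{out})=-W$, one has $\Im\big((N^{in}_n-N^{out})^{-1}\big)=(N^{in}_n-N^{out})^{-1}W\big((N^{in}_n-N^{out})^{*}\big)^{-1}$, again smoothing. Expanding each factor of $G$ into its self‑adjoint real and imaginary parts, every term of $\Im G$ other than a single ``fully real'' self‑adjoint term (which contributes $0$ to $\Im$) carries a factor $W$ or $\Im\big((N^{in}_n-N^{out})^{-1}\big)$ sandwiched between operators of order $\le1$, hence is smoothing; so $I=\Im G$ has order $-\infty$.

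Non‑negativity and the description of the exceptional set go together. Put $T(\lambda):=I+(N^{in}_n-N^{out})^{-1}(N^{out}-t)$ and, for a fixed $f$, $g:=T(\lambda)f$; then $Gf=(N^{out}-t)g$ and a short computation gives $(N^{out}-t)g=(N^{in}_n-t)(g-f)$, so $(Gf,f)=\big((N^{out}-t)g,g\big)-\big((N^{in}_n-t)(g-f),g-f\big)$ with the second term real. Hence $(I(\lambda)f,f)=\Im(N^{out}g,g)=\sqrt\lambda\int_{|\theta|=1}|f(\lambda,\theta)|^2\,dS\ge0$, $f$ being the far field of the exterior solution with Dirichlet data $g$; by Rellich's lemma and uniqueness of the exterior Dirichlet problem this vanishes iff $g=0$, i.e.\ iff $f\in\ker T(\lambda)$. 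So $I(\lambda)$ is strictly positive exactly when $T(\lambda)$ is injective. Now $T(\lambda)$ is an analytic family of elliptic pseudodifferential operators of order $0$ (principal symbol $1-\tfrac12=\tfrac12$), hence Fredholm of index $0$, and $(N^{in}_n-N^{out})T=N^{in}_n-t$, so $T(\lambda)$ is invertible whenever $N^{in}_n(\lambda)-t$ is — which fails only on a discrete set; by the analytic Fredholm theorem $\{\widehat\lambda_s\}:=\{\lambda:\ker T(\lambda)\neq0\}$ is discrete with no finite limit points.

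It remains to see that $\{\widehat\lambda_s\}$ avoids the ITEs and the Dirichlet eigenvalues of \eqref{Anone}, and this is where the choice \eqref{ties} of $t$ enters. If $\lambda_0$ is an ITE or a Dirichlet eigenvalue of \eqref{Anone} at which $N^{in}_n$ is regular, then $N^{in}_n(\lambda_0)-t$ is invertible because $t\notin\{t_s(\lambda^T_i)\}\cup\{t_s(\lambda^{n,D}_i)\}$, so $T(\lambda_0)$ is invertible. At a Dirichlet eigenvalue $\lambda_0$ of \eqref{Anone} (which includes the singular ITEs) one checks that $f\in\ker T(\lambda_0)$ forces $P_0f=0$, where $P_0$ is the residue of $N^{in}_n$ at $\lambda_0$; then the solution $v$ of \eqref{Anone} with Dirichlet data $f$ can be corrected by a Dirichlet eigenfunction so as to satisfy $\partial_\nu v-tv=0$ on $\partial\mathcal O$, making $\lambda_0$ a solution of the impedance problem \eqref{imppr} with parameter $t$, i.e.\ $t\in\{t_s(\lambda^{n,D}_i)\}$ — a contradiction unless $f=0$. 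The main difficulty of the whole lemma is precisely this last step together with the bookkeeping of the poles and residues of $N^{in}_0,N^{in}_n$ throughout the reduction (needed so that \eqref{lemma0710} is an honest identity for every $\lambda>0$); the latter can be handled exactly as in the parallel computations of \cite{lakvain7}.
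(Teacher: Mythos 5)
Your argument is correct and follows essentially the same route as the paper: the same algebraic reduction to $Q^*[(N^{in}_n-t)(B^{-1}+A^{-1})(N^{in}_n-t)]Q$, the same identification $(N^{in}_n-t)B^{-1}(N^{in}_n-t)=R^{-1}-(N^{in}_n-t)$, and the same analytic remainder $G=(N^{out}-t)+(N^{out}-t)(N^{in}_n-N^{out})^{-1}(N^{out}-t)$, with positivity of $I$ traced back to \eqref{kvf} and the exceptional set identified with the impedance eigenvalues excluded by \eqref{ties}. The only (harmless) variation is that you extract $\Im G$ via the substitution $g=T(\lambda)f$ and a quadratic-form identity, whereas the paper conjugates $\Im N^{out}$ directly to get $I=(N^{in}_n-t)(N^{in}_n-N^{out})^{-1}\,\Im N^{out}\,(N^{in}_n-N^{out})^{-1*}(N^{in}_n-t)$; both reduce strict positivity to injectivity of $N^{in}_n-t$.
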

\begin{proof}  We write $F$ (given in Theorem \ref{t21}) in the form
\[
\alpha^{-1}F= \mathcal L^*[(N_0^{in} - N_n^{in})(N^{in}_n - N^{out})^{-1} (N_0^{in} - N_n^{in}) + (N_0^{in} - N_n ^{in})]\mathcal L
\]
\[
=\mathcal L^*(N_0^{in} - N_n^{in})[(N^{in}_n - N^{out})^{-1}+(N_0^{in} - N_n ^{in})^{-1}](N_0^{in} - N_n^{in})\mathcal L=Q^*\widehat{F}Q,
\]
where
\begin{equation}\label{fhfh}
 \widehat{F}=(N^{in}_n-t)[(N^{in}_n - N^{out})^{-1}+(N_0^{in} - N_n ^{in})^{-1}](N^{in}_n-t).
\end{equation}
We need to show that $\widehat{F}=R_1+R^{-1}+ iI$. Let us split the right-hand side in (\ref{fhfh}) into two terms and rearrange the second one. We have
\[
(N^{in}_n-t)(N_0^{in} - N_n ^{in})^{-1}(N^{in}_n-t)=(N^{in}_n-N_0^{in}+N_0^{in}-t)(N_0^{in} - N_n ^{in})^{-1}(N^{in}_n-t)
\]
\[
=-N^{in}_n+t+(N_0^{in}-t)(N_0^{in}-t+t - N_n ^{in})^{-1}(N^{in}_n-t)  =  -N^{in}_n+t+( R)^{-1}.
\]
Hence it remains to show that
\begin{equation}\label{kF1}
F_1:=(N^{in}_n-t)(N^{in}_n - N^{out})^{-1}(N^{in}_n-t)- N^{in}_n+t
\end{equation}
has the form $R_1+iI$, where $R_1, I$ have the properties listed in Lemma \ref{lemma1610}.

One can easily single out the imaginary part of the operator $F_1$:
$$
I(\lambda)=\Im F_1= (N^{in}_n-t)(N^{in}_n - N^{out})^{-1} \Im N^{out} (N^{in}_n - N^{out})^{-1^*}(N^{in}_n-t).
$$
In order to obtain this formula, one can add the factor $(N^{in}_n - N^{out})^*(N^{in}_n - N^{out})^{-1^*}$ after the negative power in expression (\ref{kF1}) for $F_1$ and then use the symmetry of $N^{in}_n$ and the relation $\Im (N^{out})^*=-\Im N^{out}$. Let us justify all the properties of $I(\lambda)$. Operator $(N^{in}_n - N^{out})^{-1} $ is analytic in $\lambda$ due to Lemma \ref{b}. Operator $N^{in}_n$ has poles, but the product $P:=(N^{in}_n-t)(N^{in}_n - N^{out})^{-1}$ is analytic. The easiest way to see the latter property is to replace the first factor in the product $P$ by $(N^{in}_n-N^{out})+(N^{out}-t)$. Thus $I(\lambda)$ is analytic in $\lambda$. The product $P$ is an operator of order zero, and $\Im N^{out}$ has order $-\infty$. The latter follows from (\ref{kvf}),(\ref{FFF}) since the kernel of operator (\ref{FFF}) is infinitely smooth. Thus the order of $I(\lambda)$ is  $-\infty$. Finally,
\[
(I(\lambda)\varphi,\varphi)=\Im(N^{out}\psi,\psi),\quad \psi=(N^{in}_n - N^{out})^{-1*}(N^{in}_n-t)\varphi.
\]
The  latter expression is positive if $\psi\neq 0$ due to (\ref{kvf}).  Thus, in order to obtain the last property of $I(\lambda)$, it remains to find points $\lambda$ where operator $(N^{in}_n - N^{out})^{-1*}(N^{in}_n-t)$ has a non-trivial kernel, i.e., the inverse operator
\[
(N^{in}_n-t)^{-1}(N^{in}_n - N^{out})^*=(N^{in}_n-t)^{-1}(N^{in}_n-t+t - (N^{out})^*)
\]
\[
=I-(N^{in}_n-t)^{-1}( (N^{out})^*-t )
\]
(where $I$ is the identity operator) has a pole. The latter may occur only when $N^{in}_n-t$ has a non-trivial kernel. The corresponding set  $\{\widehat{\lambda}_s\}$ is the set of eigenvalues of the impedance problem (\ref{imppr}) (where $t$ is fixed), and it does not include the ITEs and eigenvalues of the Dirichlet problem due to (\ref{ties}). Hence all the properties of operator $I(\lambda)$ are justified.

To obtain the properties of operator $R_1$, we rewrite (\ref{kF1}) in the form
\[
F_1=(N^{in}_n- N^{out}+ N^{out}-t)(N^{in}_n - N^{out})^{-1}(N^{in}_n-t)- N^{in}_n+t
\]
\[
=(N^{out}-t)(N^{in}_n - N^{out})^{-1}(N^{in}_n-t)=(N^{out}-t)(N^{in}_n - N^{out})^{-1}(N^{in}_n- N^{out}+ N^{out}-t)
\]
\[
=(N^{out}-t)+(N^{out}-t)(N^{in}_n - N^{out})^{-1}( N^{out}-t).
\]
Then Lemmas \ref{mprl} and \ref{b} imply the analyticity of $F_1$. Since $F_1(\lambda)=R_1(\lambda)+iI(\lambda)$ and $I$ is analytic, operator $R_1$ is analytic. From  Lemma \ref{mprl} and formula (\ref{kF1}),
it follows that the principal symbol of $F_1$ is equal to $-|\xi^*|/2$. Thus operator $R_1$ has order one since $I(\lambda)$ is an infinitely smoothing operator.
The proof of Lemma \ref{lemma1610} is complete.
\end{proof}

As we mentioned earlier, it is more convenient for us to work with operator $\widehat{R}$ instead of $R$, and therefore we will use the following version of (\ref{lemma0710}):
\begin{equation}\label{verhat}
\frac{1}{\alpha}F=Q^*D[\widehat{R}_1(\lambda)+\sigma\widehat{R}(\lambda)^{-1}+i\widehat{I}(\lambda)]DQ, \quad Q=(N^{in}_n-t)^{-1}(N^{in}_n-N^{in}_0)\mathcal L,
\end{equation}
where operators $\widehat{R}_1=D^{-1}R_1D^{-1}, ~\widehat{I}=D^{-1}ID^{-1}$ have the same properties as operators $R_1,~I,$ respectively, with the only difference that $\widehat{R}_1$ has order $-3$.

\section{Proof of the main theorems}
{\bf Proof of Theorem \ref{lemmaup}.}
Let $n(x)<1$ on $\partial\mathcal O$ (i.e., $\sigma<0$).  Let $T^+=\overline{{\rm Span}\{\varphi_i^+\}}$, where $\varphi_i^+$ are the eigenfunctions of the scattering matrix $S(k)$ with the eigenvalues $z_i$ in the upper half complex plane $\Im z\geq 0$. In order to prove the first statement of the theorem, we need to show that the space $T^+$ is finite-dimensional.

From (\ref{sk}) it follows that $\Re(\alpha^{-1}F\varphi_i^+,\varphi_i^+)\geq0$. This and the orthogonality of functions $\varphi_i^+$ imply that
\begin{equation}\label{kvf1}
 \Re(\alpha^{-1}F\varphi,\varphi)\geq0, \quad \varphi\in T^+.
\end{equation}
On the other hand, from (\ref{FFF}) and Lemma \ref{mprl} it follows that $\alpha^{-1}F=\mathcal L^*\widehat{F}\mathcal L$, where $\widehat{F}$ is a pseudo-differential operator with the principal symbol $\lambda(n(x)-1)|\xi^*|/2$. For every $\varphi\in H^0(S^{d-1})$, we have
\begin{equation*}
(\alpha^{-1}F\varphi,\varphi)=(\widehat{F}\psi,\psi), \quad \psi=\mathcal L\varphi.
\end{equation*}
 Since $\widehat{F}$ is an elliptic operator of order one with a negative principal symbol, there exists $a>0$ such that
\begin{equation}\label{kvf2}
\Re(\widehat{F}\psi,\psi)\leq -a\|\psi\|^2_{H^{1/2}} +C\|\psi\|^2_{H^0(\partial\mathcal O)},
\end{equation}
and therefore
\begin{equation}\label{kvfMMM}
\Re(\alpha^{-1}F\varphi,\varphi)\leq -a\|\psi\|^2_{H^{1/2}} +C\|\psi\|^2_{L_2(\partial\mathcal O)}, \quad \psi\in\mathcal L T^+.
\end{equation}
From here, (\ref{kvf1}), and the Sobolev imbedding theorem it follows that the set
\[
\overline{\mathcal L T^+}\bigcap\{\|\psi\|_{L_2(\partial\mathcal O)}=1\}
 \]
 is compact in $L_2(\partial\mathcal O)$. Thus, the linear space $\mathcal L T^+$ is finite-dimensional. Now Lemma \ref{llstar} implies that the space $T^+$ is finite-dimensional.

 The first statement of the theorem is proved. To prove the second statement, one needs only to replace $T^+$ by $T^-=\overline{{\rm Span}\{\varphi_i^-\}}$, where $\varphi_i^-$ are the eigenfunctions with the eigenvalues $z_i$ in the lower half complex plane, and use the positivity of the principal symbol of $\widehat{F}$. Let us prove the last statement.

 Assume that the space $T^-={\rm Span}\{\varphi_i^-\}$ is finite-dimensional,
where $\varphi_i^-$ are the eigenfunctions of the scattering matrix $S(k)$ with the eigenvalues $z_i$ in the lower half complex plane $\Im z< 0$.
 Then, similarly to (\ref{kvf1}), we have
 \begin{eqnarray}\nonumber
  \Re(\alpha^{-1}F\varphi,\varphi)\geq0, \quad \varphi\in (T^-)^\bot,  \mbox{ and therefore,}
\\ \label{kvf5}
 \Re(\widehat{F}\psi,\psi)\geq0, \quad \psi\in \mathcal L((T^-)^\bot).
\end{eqnarray}

 We fix an $\varepsilon>0$ so small that the set $\Gamma^-=\partial\mathcal O\bigcap \{x:n(x)<1-\varepsilon\}$ is not empty. Let $n'$ be an infinitely smooth function in $\mathcal O$ such that $0<n'<1$ and $n'$ coincides with $n(x)$ in a $d$-dimensional neighborhood of $\Gamma^-$. From standard local a priori estimates for the solutions of elliptic equations it follows that the operator $G=N^{in}_n-N^{in}_{n'}$ is infinitely smoothing on functions $\psi\in L_2^-$. The latter space consists of functions from $L_2(\partial\mathcal O)$ with the support in $\overline{\Gamma^-}$. Denote by $\widehat{F}'$ operator (\ref{FFF}) with $n$ replaced by $n'$. Since (\ref{kvf2}) holds for  $\widehat{F}'$, it is valid for  $\widehat{F}$ when $\psi\in L_2^-$. This and (\ref{kvf5}) imply that
 \[
 0\leq  -a\|\psi\|^2_{H^{1/2}} +C\|\psi\|^2_{L_2(\partial\mathcal O)},  \quad \psi\in \mathcal L((T^-)^\bot) \bigcap L_2^-.
 \]
The inequality above and the Sobolev imbedding theorem lead to the compactness of the set $\overline{\mathcal L((T^-)^\bot) }\bigcap L_2^-\bigcap\{\|\psi\|_{L_2^-}=1\}$. The  compactness is possible only if the linear space $\overline{\mathcal L((T^-)^\bot)} \bigcap L_2^-$ is finite-dimensional. Since we assumed that $T^-$ is finite-dimensional, it follows that $\overline{\mathcal L(L_2(S^{d-1})) }\bigcap L_2^-$ is finite-dimensional. The latter contradicts Lemma \ref{llstar}. Hence our assumption is wrong, i.e., $T^-$ is infinite-dimensional. Similarly, one can prove that $T^+$ can not be finite-dimensional.

\qed

{\bf Proof of Theorems \ref{th} and \ref{t3}.} {\it Step 1. Quadratic forms related to $S(k)$.} The following general statement plays an important role in the proof of the main results.
Let $0<\alpha_1 < \alpha_2<\pi$. Denote by $S_{\alpha_1,\alpha_2}$ the closed domain in the upper half complex plane bounded by the arc and the chord of the unit circle with the end points at $e^{i\alpha_1},e^{i\alpha_2}$.

\begin{lemma}\label{lemmaold1}
Let a unitary operator $U$ in a Hilbert space $\mathcal H$ have a discrete spectrum. Let $\mathcal H_0 \subset \mathcal H$ be an $m$-dimensional subspace, and let $\mathcal H_1 \subset \mathcal H$ be a subspace of co-dimension $m$. Then the following hold:
\begin{enumerate}
\item The range (the set of values) of the quadratic form $(U\varphi,\varphi), \varphi \in \mathcal H, \|\varphi\|=1,$ coincides with the polygon with the vertices (there may be infinitely many of them) at the eigenvalues of
$U$.
\item If $(U\varphi,\varphi)\in S_{\alpha_1,\alpha_2}$ for each $\varphi \in \mathcal H_0, \|\varphi\|=1,$
then $U$ has at least $m$ eigenvalues $z$ (with the multiplicities taken into account) with ${\rm arg}z\in(\alpha_1,\alpha_2)$.
\item If  $(U\varphi,\varphi), \varphi \in \mathcal H_1, \|\varphi\|=1,$ does not have values in $S_{\alpha_1,\alpha_2}$,
then $U$ has at most $m$ eigenvalues $z$ (with the multiplicities taken into account) with ${\rm arg}z\in(\alpha_1,\alpha_2)$.
\end{enumerate}
\end{lemma}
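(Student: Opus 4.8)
The plan is to exploit the spectral decomposition of the unitary operator $U$. Since $U$ is unitary with discrete spectrum, there is an orthonormal basis $\{e_k\}$ of eigenvectors, $Ue_k=z_ke_k$ with $|z_k|=1$, and for a unit vector $\varphi=\sum_k c_ke_k$ we have $(U\varphi,\varphi)=\sum_k |c_k|^2 z_k$ with $\sum_k|c_k|^2=1$. Thus the numerical range of $U$ is exactly the set of all convex combinations (with possibly infinitely many terms, in the closure) of the eigenvalues $\{z_k\}$, which is the closed convex hull of the spectrum; this is precisely the polygon (or limit of polygons) described in part~1. For part~1 I would also recall the standard fact that the numerical range of a bounded operator is convex (Toeplitz--Hausdorff), and observe that the extreme points of the closed numerical range of a normal operator lie in the spectrum, while every spectral point is attained; this pins down the numerical range as the closed convex hull of $\{z_k\}$.

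For part~2, I would argue by contradiction (or, more directly, by a dimension count). Suppose $U$ has fewer than $m$ eigenvalues $z$ with $\arg z\in(\alpha_1,\alpha_2)$, say only $z_{k}$ for $k$ in a finite index set $J$ with $|J|<m$. Let $\mathcal{P}$ be the orthogonal projection onto $\mathrm{Span}\{e_k : k\in J\}$, so $\dim\mathrm{Ran}\,\mathcal{P}=|J|<m$. Since $\dim\mathcal{H}_0=m>|J|$, there is a nonzero $\varphi\in\mathcal{H}_0$ with $\mathcal{P}\varphi=0$, i.e.\ $\varphi$ is a combination only of eigenvectors $e_k$ with $\arg z_k\notin(\alpha_1,\alpha_2)$. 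Normalizing, $(U\varphi,\varphi)=\sum_{k\notin J}|c_k|^2 z_k$ is a convex combination of points of the unit circle none of which lies on the open arc from $e^{i\alpha_1}$ to $e^{i\alpha_2}$. The key geometric observation is that such a convex combination cannot lie in the interior of $S_{\alpha_1,\alpha_2}$ relative to the arc, and in particular cannot lie in $S_{\alpha_1,\alpha_2}$ unless it lands on one of the two points $e^{i\alpha_1},e^{i\alpha_2}$ or on the chord — a more careful statement is that a convex combination of circle points avoiding the open arc $\{e^{i\alpha}:\alpha_1<\alpha<\alpha_2\}$ lies in the closed region on the \emph{other} side of the chord, whose intersection with $S_{\alpha_1,\alpha_2}$ is just the chord itself. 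Hence $(U\varphi,\varphi)$ lies on the chord, which forces $\varphi$ to be supported on $\{e^{i\alpha_1},e^{i\alpha_2}\}$ — contradicting that there are no eigenvalues with $\arg z\in(\alpha_1,\alpha_2)$ unless one allows the endpoints, and one can rule this out by shrinking the arc slightly or by the hypothesis that $(U\varphi,\varphi)\in S_{\alpha_1,\alpha_2}$ genuinely (I will need to be a little careful about whether the hypothesis puts the value on the boundary). This yields at least $m$ eigenvalues in the open arc.

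Part~3 is the dual statement and I would prove it the same way. Suppose $U$ had at least $m+1$ eigenvalues $z_{k_0},\dots,z_{k_m}$ (counted with multiplicity) with $\arg z_{k_j}\in(\alpha_1,\alpha_2)$. The span $\mathcal{E}$ of the corresponding eigenvectors has dimension $m+1$, while $\mathcal{H}_1$ has codimension $m$, so $\mathcal{E}\cap\mathcal{H}_1\neq\{0\}$. A unit vector $\varphi$ in this intersection is a convex combination of eigenvectors all of whose eigenvalues have argument in $(\alpha_1,\alpha_2)$; by the same convexity/geometry argument — a convex combination of points on the open arc lies in the open region $S_{\alpha_1,\alpha_2}$ bounded by that arc and its chord — we get $(U\varphi,\varphi)\in S_{\alpha_1,\alpha_2}$ for some $\varphi\in\mathcal{H}_1$, contradicting the hypothesis. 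Hence $U$ has at most $m$ such eigenvalues.

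I expect the main obstacle to be the careful bookkeeping of the boundary of $S_{\alpha_1,\alpha_2}$: distinguishing the open arc, the chord, and the endpoints $e^{i\alpha_1},e^{i\alpha_2}$, and checking in each of parts~2 and~3 exactly which of these a convex combination of circle points can or cannot reach. The clean underlying fact — ``a (possibly infinite) convex combination of points on the unit circle lies in $S_{\alpha_1,\alpha_2}$ iff all the mass sits on the closed arc, and lies in its interior iff some mass sits on the open arc'' — is elementary plane geometry, but stating and using it with the right strict/non-strict inclusions is where the proof needs care; the functional-analytic content (spectral theorem for unitaries, intersecting a subspace of dimension $>k$ with one of codimension $k$) is routine.
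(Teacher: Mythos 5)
Your proposal is correct and follows essentially the same route as the paper: the spectral decomposition writes the form as a convex combination of the eigenvalues (part 1), and parts 2 and 3 are dimension counts pairing the span of the relevant eigenvectors against $\mathcal H_0$ and $\mathcal H_1$. The only cosmetic difference is that in part 2 you use a single projection/dimension argument where the paper peels off eigenvalues one at a time by induction, and the chord/endpoint subtlety you flag is present, and equally unaddressed, in the paper's own proof.
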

{\bf Proof.} The form has values
$$
\sum_{i} t^2_i e^{i\gamma_i}, \mbox{ where } \sum t^2_i =1.
$$
Here $e^{i \gamma_i}$ are the eigenvalues of $U$. This implies the first statement. If the second assumption holds, then the existence of at least one eigenvalue follows immediately from the first statement. If there are only $m_1<m$ linearly independent normalized eigenfunctions $\varphi_j, 1\leq j\leq m_1,$ with eigenvalues on the arc that bounds $S_{\alpha_1,\alpha_2}$, then one can apply the first statement of the lemma to the same quadratic form on the space orthogonal to ${\rm span}\varphi_j$ and prove the existence of one more eigenvalue on the same arc.

Let us prove the last statement. Assume that $U$ has more than $m$ eigenvalues with  ${\rm arg}z\in(\alpha_1,\alpha_2)$. Then there exists a linear combination $\varphi$ of the corresponding eigenfunctions that belongs to $\mathcal H_1$. Since $(U\varphi,\varphi) \in S_{\alpha_1,\alpha_2}$, we arrive at a contradiction, which proves the statement.

\qed

The next lemma contains some statements on relations between the far-field operator $F$ (see (\ref{farf})) and the eigenvalues $z_j(k)$ of the scattering matrix $S(k)$.
 {\begin{lemma}\label{lll} (A) Let $n(x)<1$  on $\partial \mathcal O$. Then the following hold:

1) If there exist $m^\pm$-dimensional subspaces $\Phi^\pm=\Phi^\pm(k)$ in $L_2(S^{d-1})$ such that the following relations hold for the far-field operator $F$  when $k\to k_0\mp0$,
\begin{equation}\label{fh2}
0<{\rm arg}(\alpha^{-1}F\varphi,\varphi)<\delta(k),~~0\neq\varphi\in \Phi^\pm,~~ where~~\lim_{k\to k_0\mp 0}\delta(k)=0,
\end{equation}
then the scattering matrix $S(k),~\pm(k_0-k)>0,$ has at least $m^\pm$ eigenvalues $z_j(k)=z_j^\pm(k), 1\leq j\leq m^\pm,$ on $C_+$ that approach $z=1$ moving  clockwise (counterclockwise, respectively), i.e.,
\[
\lim_{k\to k_0\mp 0}z_j(k)=1\! + \!i0.
\]

2) If there exist subspaces $\Phi'_\pm=\Phi_\pm'(k)$ in $L_2(S^{d-1})$ of co-dimensions $m^\pm$ such that
\begin{equation}\label{abs}
{\rm arg}(\alpha^{-1}F\psi,\psi)\notin (0,\delta),\quad 0\neq \psi \in\Phi'_\pm,\quad \varepsilon>\pm(k_0-k)>0,
\end{equation}
then the scattering matrix $S(k)$ has at most $m^\pm$ eigenvalues on the arc $0<{\rm arg}z<\delta$ of the unit circle when $\varepsilon>\pm(k_0-k)>0$.

3) Thus if both assumptions 1) and 2) hold, then $S(k)$ has $m^+~(m^-)$ eigenvalues on $C_+$ that approach $z=1$ moving clockwise (counterclockwise, respectively) when $k\to k_0\mp 0$, and all other eigenvalues on $C_+$ are separated from $z=1$ when $k$ is close enough to $k_0$.

(B) Let $n(x)>1$  on $\partial \mathcal O$. Then the following hold:

1')  If there exist $m^\pm$-dimensional subspaces $\Phi^\pm=\Phi^\pm(k)\subset L_2(S^{d-1})$ such that
\begin{equation}\label{absab}
\pi-\delta(k)<{\rm arg}(\alpha^{-1}F\psi,\psi)<\pi,~~0\neq\varphi\in \Phi^\pm,~~ where~~\lim_{k\to k_0\pm 0}\delta(k)=0,
\end{equation}
then the scattering matrix $S(k)$ has at least $m^\pm$ eigenvalues on $C_-$ that approach $z=1$ moving clockwise (counterclockwise, respectively).

2') If there exist subspaces $\Phi'_\pm=\Phi_\pm'(k)\subset L_2(S^{d-1})$ of co-dimension $m^\pm$ such that
\begin{equation*}
{\rm arg}(\alpha^{-1}F\psi,\psi)\notin(\pi-\delta,\pi),\quad 0\neq \psi \in\Phi'_\pm,\quad \varepsilon>\pm(k-k_0)>0,
\end{equation*}
then the scattering matrix $S(k)$ has at most $m^\pm$ eigenvalues on the arc $\pi-\delta<{\rm arg}z<\pi$ of the unit circle when $\varepsilon>\pm(k-k_0)>0$.
\end{lemma}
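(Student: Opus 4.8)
The plan is to reduce every assertion of Lemma \ref{lll} to the abstract facts about quadratic forms of unitary operators collected in Lemma \ref{lemmaold1}, applied with $U=S(k)$. First I would fix $k$ close to $k_0$ and observe that, by (\ref{sk}), $S(k)-I = 2ik\overline{\alpha}F$, so for any unit vector $\varphi$ one has $(S(k)\varphi,\varphi) = 1 + 2ik\overline{\alpha}(F\varphi,\varphi) = 1 + 2ik\overline{\alpha}\cdot\alpha\,(\alpha^{-1}F\varphi,\varphi)$. Since $|\alpha|^2$ is a positive constant and $2ik\overline{\alpha}\alpha = 2ik|\alpha|^2$ is a positive multiple of $i$, the map $w\mapsto 1+2ik|\alpha|^2 w$ sends the sector $\{0<\arg w<\delta\}$ (a thin wedge off the positive real axis) to a thin lens-shaped region hugging the unit circle just above $z=1$ on the half-circle $C_+$; more precisely, because $S(k)$ is unitary, the image point $(S(k)\varphi,\varphi)$ automatically lies in the closed unit disk, and the constraint $\Re(\alpha^{-1}F\varphi,\varphi)\ge 0$ together with $0<\arg(\alpha^{-1}F\varphi,\varphi)<\delta$ forces $(S(k)\varphi,\varphi)$ into a region $S_{\alpha_1,\alpha_2}$ of the type in Lemma \ref{lemmaold1} with $\alpha_1\to 0^+$ and $\alpha_2=\alpha_2(\delta)\to 0^+$ as $\delta\to 0$. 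I would make this elementary geometric correspondence precise once and reuse it.

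With that dictionary in hand, part (A1) is immediate: under (\ref{fh2}) the subspace $\mathcal H_0:=\Phi^{\pm}(k)$ has dimension $m^{\pm}$ and the quadratic form of $U=S(k)$ restricted to the unit sphere of $\mathcal H_0$ lands in $S_{\alpha_1,\alpha_2(\delta(k))}$, so by statement 2 of Lemma \ref{lemmaold1} the operator $S(k)$ has at least $m^{\pm}$ eigenvalues $z_j(k)$ with $\arg z_j(k)\in(\alpha_1,\alpha_2(\delta(k)))\subset C_+$. Letting $k\to k_0\mp 0$ forces $\delta(k)\to 0$, hence $\alpha_2(\delta(k))\to 0$, so these eigenvalues are squeezed toward $z=1$; the clockwise (resp. counterclockwise) description is just the statement that $\arg z_j(k)\to 0^+$ as $k$ increases (resp. decreases) to $k_0$, which follows because the arc $(\alpha_1,\alpha_2(\delta(k)))$ shrinks monotonically onto $1+i0$. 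Part (A2) uses statement 3 of Lemma \ref{lemmaold1} dually: the subspace $\mathcal H_1:=\Phi'_{\pm}(k)$ has co-dimension $m^{\pm}$, and the hypothesis (\ref{abs}) that $\arg(\alpha^{-1}F\psi,\psi)\notin(0,\delta)$ translates, via the same affine map, to $(S(k)\psi,\psi)\notin S_{\alpha_1,\alpha_2(\delta)}$ for all nonzero $\psi\in\mathcal H_1$; hence $S(k)$ has at most $m^{\pm}$ eigenvalues on the arc $\{0<\arg z<\delta\}$. Part (A3) is then just the combination of (A1) and (A2): the lower bound $m^{\pm}$ and the upper bound $m^{\pm}$ pin the count exactly at $m^{\pm}$, and any other eigenvalue on $C_+$ lies outside the shrinking arc, hence stays bounded away from $z=1$. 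Parts (B1$'$) and (B2$'$) are proved verbatim, except that now $n(x)>1$ on $\partial\mathcal O$ reverses the sign of the principal symbol of the relevant pseudo-differential operator, so $\Re(\alpha^{-1}F\varphi,\varphi)\le 0$ on the eigenfunctions in question, which places the image $(S(k)\psi,\psi)$ near $z=-1$ on the half-circle $C_-$; the sector $\{\pi-\delta<\arg w<\pi\}$ maps to a thin region of $C_-$ hugging $z=1$ from below, and one applies statements 2 and 3 of Lemma \ref{lemmaold1} exactly as before.

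The routine but slightly delicate point to handle carefully is the affine-map geometry: one must check that $w\mapsto 1+2ik|\alpha|^2w$, composed with the a priori constraint $|1+2ik|\alpha|^2w|\le 1$ coming from unitarity of $S(k)$, genuinely sends the thin wedge $\{0<\arg w<\delta,\ \Re w\ge 0\}$ into a set of the form $S_{\alpha_1,\alpha_2}$ with $\alpha_1>0$ and with $\alpha_2\to 0$ as $\delta\to 0$, and that the endpoints $e^{i\alpha_1},e^{i\alpha_2}$ can be chosen uniformly in $k$ for $k$ near $k_0$. I expect this is the main obstacle — not because it is hard, but because it requires being scrupulous about the direction of rotation (clockwise versus counterclockwise) and about which half-circle one lands on, since a sign error in $\Re(\alpha^{-1}F\varphi,\varphi)$ flips $C_+$ to $C_-$. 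Everything else is a direct invocation of Lemma \ref{lemmaold1} with the correct choice of $\mathcal H_0$ (dimension $m^{\pm}$, for the lower bounds) or $\mathcal H_1$ (co-dimension $m^{\pm}$, for the upper bounds), so once the geometric dictionary is fixed the six assertions follow mechanically.
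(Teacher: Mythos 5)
Your proposal is correct and follows essentially the same route as the paper: the paper likewise uses the affine relation (\ref{sk}) to identify the values of $(S(k)\varphi,\varphi)$ lying in a small region $S_{\alpha_1,\alpha_2}$ adjacent to $z=1$ with the wedge conditions on $\arg(\alpha^{-1}F\varphi,\varphi)$, and then invokes Lemma \ref{lemmaold1} for both the lower and upper bounds on the eigenvalue count. The geometric dictionary you flag as delicate is exactly the one-line observation the paper makes (eigenvalues of $\alpha^{-1}F$ lie on the circle of radius $1/(2k|\alpha|^2)$ through the origin), so no new idea is missing.
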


{\bf Proof.} The eigenvalues of the unitary operator $S(k)$ belong to the unit circle. Therefore, from (\ref{sk}) it follows that the eigenvalues of the operator $ \alpha^{-1}F(k)$ belong to the circle of radius $1/(2k|\alpha|^2)$  centered at $1/(2ik|\alpha|^2)$ and, moreover, if $\sigma=-1$, then the values of the quadratic form $(S(k)\varphi,\varphi)$ with $\|\varphi\|=1$ belong to the set $S_{0,\gamma}$ if and only if (\ref{fh2}) holds with $\delta(k)=\gamma.$ Similarly, if $\sigma=1$, then the values of the quadratic form $(S(k)\varphi,\varphi),~\|\varphi\|=1,$ belong to the set $S_{-\gamma,0}$ if and only if (\ref{absab}) holds with $\delta(k)=\gamma.$ Thus Lemma \ref{lll} is the direct consequence of Lemma \ref{lemmaold1}.

\qed

{\it Step 2. Plan to complete the proofs of Theorems \ref{th} and \ref{t3}.} Let $\lambda_0=\lambda_i^T$ be an ITE of multiplicity $m_i$, and let $\beta_i^\pm$ be the number of eigenvalues $\mu_j,~j\leq m_i,$ in formulas (\ref{mun1}), (\ref{mun2})  that are negative when $1 \gg \varepsilon>\pm\sigma(\lambda-\lambda_0)>0$. Let us stress that we consider only those $\mu_j(\lambda)$ that vanish at $\lambda=\lambda_0$. Obviously, $\beta_i^\pm=\alpha_i^\pm+r$, where $\alpha_i^\pm$ are defined in (\ref{theorem2a}) and $r$ is the number of eigenvalues whose Taylor expansion starts with an even positive power of $\lambda-\lambda_0$ and has a negative coefficient for this power. In particular,
\begin{equation}\label{alphab}
\beta_i^+-\beta_i^-=\alpha_i^+-\alpha_i^-.
\end{equation}

 In Step 3, we are going to show that there exists a $\beta_i^+$-dimensional subspace $\Phi=\Phi^+(\lambda)$ in $L_2(S^{d-1})$ on which (\ref{fh2}) holds when $k\to k_0-0$ and there is a $\beta_i^-$-dimensional subspace $\Phi=\Phi^-(\lambda)$ in $L_2(S^{d-1})$ on which (\ref{fh2}) holds when $k\to k_0+0$. We refer to relation (\ref{fh2}) below, but in fact we are going to justify simultaneously (\ref{fh2}) when $\sigma<0$ and its analogue for $\sigma>0$ stated in part (B) of Lemma \ref{lll}. In Step 4, we will prove (\ref{abs}) with $\Phi'=\Phi'_\pm$ of co-dimension $\beta^\pm_i$ when $k\to k_0\mp 0$ (and its analogue from part (B) of the same lemma). Then Lemma  \ref{lll} will justify all the statements of Theorem \ref{t3} with $m^\pm_i=\beta^\pm_i$. In particular, the last statement of the theorem will be justified because the arguments below are valid when $m_i=0$, i.e., $\lambda=\lambda_0$ is not an ITE (more details will be given in Step 4). Since the relation $m^\pm_i=\beta^\pm_i$ will be established, (\ref{alphab}) will imply that $\alpha_i^+-\alpha_i^-=m_i^+-m_i^-$. Thus Theorem \ref{th} will be a consequence of Theorem \ref{th2a}. Hence the proofs of Theorems \ref{th} and \ref{t3} will be completed as soon as (\ref{fh2}) and (\ref{abs}) are established.

 Let us make one more remark concerning the next steps. The ITEs are values of $\lambda$, and the Dirichlet-to-Neumann maps are functions of $\lambda$, while it is customary to consider the far-field operator and the scattering matrix as functions of $k$.  Many formulas below will contain simultaneously $k$ and $\lambda$. It will always be assumed (without reminders) that $\lambda=k^2$.

{\it Step 3. Establishing (\ref{fh2}).}
We will consider only the case of $\sigma(\lambda-\lambda_0)\to + 0$ since the arguments in the case of $\sigma(\lambda-\lambda_0)\to - 0$ are no different (only the pluses in the indices must be replaced by minuses in the latter case).

Denote by $\widehat{F}$ the operator in the square brackets in the right-hand side of formula (\ref{verhat}). Let
\begin{equation}\label{fh5}
 \widehat{\Phi}^+=\widehat{\Phi}^+(\lambda):={\rm span}\{\varphi_j,~1\leq j\leq \beta_i^+\},
\end{equation}
where $\varphi_j=\varphi_j(\lambda)$ are functions defined in (\ref{mun1}). The enumeration is such that the functions in  (\ref{mun1}) with $\mu_j(\lambda)<0$  when $\sigma(\lambda-\lambda_0)\to + 0$ are listed first. Then from (\ref{verhat}) it follows that
\begin{equation}\label{fh}
(\widehat{F}\varphi,\varphi)=\sigma\sum_{j=1}^{\beta_i^+}c_j^2\mu_j^{-1}(\lambda)+O(1), \quad \varphi=\sum_{j=1}^{\beta_i^+}c_j\varphi_j\in \widehat{\Phi}^+, \quad \sigma(\lambda-\lambda_0)\to + 0.
\end{equation}
This implies that $|\Im (\widehat{F}\varphi,\varphi)|=O(1)$ and
\begin{equation}\label{sss}
\frac{|\Im (\widehat{F}\varphi,\varphi)|}{\Re(\widehat{F}\varphi,\varphi)}\to 0,~~\sigma(\lambda-\lambda_0)\to + 0, \quad 0\neq \varphi\in \Phi^-.
\end{equation}

The imaginary part of the form  (\ref{fh}) is positive (due to Lemma \ref{lemma1610}), and the real part has the same sign as $-\sigma$. Thus (\ref{sss}) justifies (\ref{fh2}) for $\sigma<0$ and its analogue for $\sigma>0$ from part (B) of Lemma \ref{lll}, but both relations are justified for the operator $\widehat{F}$ (on the space $ \widehat{\Phi}^-$) instead of the operator $\alpha^{-1}F$.
 Then the same relations for
$\widehat{F}$ hold for an arbitrary  $\beta_i^+$-dimensional subspace $\widehat{\Phi}^+_\varepsilon(\lambda)$ in $L_2(S^{d-1})$ if it is close enough to $\widehat{\Phi}^+(\lambda)$, where the
distance between these subspaces may depend on $\lambda$. Since operator $DQ$ for each $\lambda\in(\lambda_0-\varepsilon,\lambda_0)$ and small enough $\varepsilon$ has a dense range (see Lemma \ref{llstar}), one can find functions $\psi_j$ such that
$DQ\psi_j$ are so close to $\varphi_j$ that (\ref{fh2}) holds for operator $\widehat{F}$ on the subspace $\widehat{\Phi}^+_\varepsilon(\lambda)
={\rm span}\{DQ \psi_j\}$. Then (\ref{fh2}) and its analogue for $\sigma>0$ hold for $\alpha^{-1}F$ with $\Phi^+={\rm span}\psi_j$.

{\it Step 4. Establishing (\ref{abs}).} As in the previous step, we could prove simultaneously (\ref{abs}) and its analogue for $\sigma>0$. However, we will assume that $\sigma<0$ to make the text more transparent. Also, we are going to consider only the case of $\lambda<\lambda_0=\lambda_i^T$ since the case of $\lambda>\lambda_0$ is no different.

Let us show that (\ref{abs}) holds with  $\Phi'= (Q^* D\widehat{\Phi}^+)^\bot$, where $\widehat{\Phi}^+$ is defined in (\ref{fh5}).
Due to Lemma \ref{llstar}, it is possible to choose $\varepsilon>0$ small enough so that the kernel of the operator $Q^* D$ is trivial when $\lambda_0-\varepsilon<\lambda<\lambda_0$, and therefore the dimension
of $ Q^* D\widehat{\Phi}^+$ is $\beta^+_i$.

If $\psi\in\Phi'$, then $\varphi:=Q\psi$ is smooth enough since $Q$ contains the factor $\mathcal L$, which is an infinitely smoothing operator. In particular, $\varphi\in H^{1}(\partial \mathcal O)$. Furthermore, $D\varphi\bot\Phi^+$, and (due to (\ref{lemma0710}))
\[
(\alpha^{-1}F\psi,\psi)=([R_1(\lambda)+R(\lambda)^{-1}+iI(\lambda)]\varphi,\varphi).
\]
 Hence it is enough to show that
\begin{equation}\label{ad}
{\rm arg}([R_1+ R^{-1}+iI]\varphi,\varphi)\notin (0,\delta)\quad {\rm when} ~~ \lambda_0-\varepsilon<\lambda<\lambda_0
\end{equation}
for smooth functions $\varphi\neq 0$ such that $D\varphi\bot\widehat{\Phi}^+$. Obviously, it is enough to consider smooth functions $\varphi\neq 0$ from the space
\[
\Phi_1=(D\widehat{\Phi}^+)^\bot\bigcap\{ \|\varphi\|_{H^{1}(\partial \mathcal O)}=1\}\bigcap\{\Re(\widehat{F}\varphi,\varphi)>0\},
\]
and  (\ref{ad}) will be proved if we show the existence of constants $\gamma_1,\gamma_2>0 $ such that the following estimates are valid for the real and imaginary parts of the form (\ref{ad}):
\begin{equation}\label{25}
([R_1+R^{-1}]\varphi,\varphi)<\gamma_1, \quad (I\varphi,\varphi)>\gamma_2>0~~~{\rm for}~~\varphi\in \Phi_1.
\end{equation}

From (\ref{rhat}) and (\ref{mun2}) it follows that
\[
(R^{-1}\varphi,\varphi)=\sigma(\widehat{R}^{-1}D\varphi,D\varphi)=\sum_{j=\beta_i^-+1}^{m_i}\frac{\sigma}{\mu_j(\lambda)}\|P_{\varphi_j}D\varphi\|^2+\sigma(K_1D\varphi,D\varphi)
\]
\[
\leq(\sigma K_1(\lambda)D\varphi,D\varphi).
\]
We omitted the terms under the summation sign here since $\mu_j(\lambda)>0$ (and $\sigma\mu_j(\lambda)<0$) when $j>\beta_i^+,~\lambda_0-\varepsilon<\lambda<\lambda_0$. For each $\lambda\neq \lambda_0$, operator $\sigma\widehat{R}(\lambda)$ is an elliptic pseudo-differential
operator of first order with a negative principal symbol (see Lemma \ref{r5} and formula (\ref{rhat})), and therefore $\widehat{R}^{-1}(\lambda)$ is an elliptic pseudo-differential operator of order $-1$ with a negative
principal symbol. Operator $K_1(\lambda)$ differs from $\widehat{R}^{-1}(\lambda)$  by a projection on a finite-dimensional space spanned by $C^\infty$ functions. Thus it has the same
properties as  $\widehat{R}^{-1}(\lambda)$, but additionally it is analytic in $\lambda$ (see Lemma \ref{26}). Hence there is a constant $a_1>0$ such that
\[
(R^{-1}(\lambda)\varphi,\varphi)\leq (\sigma K_1(\lambda)D\varphi,D\varphi)\leq -a_1\|D\varphi\|^2_{H^{-1/2}}+O(\|D\varphi\|^2_{H^{-1}})
\]
\[
\leq -a_1\|\varphi\|^2_{H^{3/2}(\partial \mathcal O)}+O(\|\varphi\|^2_{H^{1}(\partial \mathcal O)}), \quad \lambda_0-\varepsilon<\lambda<\lambda_0.
\]
Operator $R_1$ is an elliptic operator of first order, and it is analytic in
$\lambda $ in a neighborhood of $\lambda_0 $ (see Lemma \ref{lemma1610}). Thus
\[
([R_1+R^{-1}]\varphi,\varphi)\leq -a_1\|\varphi\|^2_{H^{3/2}}+a_2(\|\varphi\|^2_{H^{1}}), \quad \lambda_0-\varepsilon<\lambda<\lambda_0.
\]
This implies the first estimate in (\ref{25}) and also the compactness of the set $\Phi_1$ in $H^1(\partial \mathcal O)$. Indeed, since $\|\varphi\|_{H^1}=1$ in $\Phi_1$,
from the line above it follows that $\Re(\widehat{F}\varphi,\varphi)>0$ on $\Phi_1$ only if $\|\varphi\|_{H^{3/2}}$ is bounded. Thus the set $\Phi_1$ is compact in $H^1(\partial \mathcal O)$ due to the
Sobolev imbedding theorem.

Further, due to (\ref{kvf}), $\Im(\widehat{F}\varphi,\varphi)>0$ on each element $0\neq \varphi\in H^{1}(\partial \mathcal O)$ for $\lambda_0-\varepsilon/2\leq\lambda\leq\lambda_0$ (the end points are included). Then
the compactness of $\Phi_1$ in $H^{1}(\partial \mathcal O)$ implies that $\Im(\widehat{F}\varphi,\varphi)$ has a positive lower bound on $\Phi_1$, i.e., the second estimate in (\ref{25}) holds. Thus  (\ref{abs}) is justified.

 {\it Step 5.} All the arguments in the previous step, used to prove (\ref{abs}), are valid when $m_i=0$ (i.e., for $\lambda=\lambda_0$, which is not an ITE) if $\lambda_0$ does not belong to the exceptional set $\{\widehat{\lambda}_s\}$ defined in
Lemma \ref{lemma1610}. This set is discrete and consists of eigenvalues of the impedance problem (\ref{imppr}) where $t$ was fixed at an earlier stage (see the proof of Lemma \ref{lemma1610}). After  (\ref{abs}) is proved for all $\lambda$ except a fixed exceptional set $\{\widehat{\lambda}_s\}$, we can change the value of $t$ to another value $t=t_s$ for which $\lambda=\widehat{\lambda}_s$ is not an eigenvalue of the impedance problem (\ref{imppr}) with $t=t_s$. Then  (\ref{abs}) will be justified for $\lambda=\widehat{\lambda}_s$. In fact, we can find a value of $t=\widehat{t}$ that can be used simultaneously for all points  $\widehat{\lambda}_s$, but we do not need to do it.

The proof of Theorems  \ref{th} and \ref{t3} is complete.

\qed

{\bf Proof of Theorem \ref{th14}.} First, let us show that the eigenvalues of the operator $R(\lambda)$, defined in (\ref{rhat}), can only have simple zeroes. Indeed, if $\lambda>0$ is not a pole of the operator $R^{-1}(\lambda)$, then $R^{-1}(\lambda)$ maps an arbitrary function $f \in H^{3/2}(\partial\mathcal O)$ into
\begin{equation}\label{1212}
R^{-1}(\lambda)f=(\frac{\partial u}{\partial \nu} - tu)|_{x\in \partial\mathcal O} =  (\frac{\partial v}{\partial \nu} - tv)|_{x\in \partial\mathcal O},
\end{equation}
where $(u,v)$ is the solution of the problem
\begin{eqnarray}\label{AnoneM}
\Delta u +\lambda u=0,  \quad u\in H^2(\mathcal O), \\ \label{AnoneM1}
\Delta v  + \lambda n(x)v=0, \quad v\in H^2(\mathcal O), \\ \label{AntwoM2}
\begin{array}{l}
u-v=f, \quad x \in \partial \mathcal O, \\
\frac{\partial u}{\partial \nu} - tu =  \frac{\partial v}{\partial \nu} - tv, \quad x \in \partial \mathcal O.
\end{array}
\end{eqnarray}
One can express solution $(u,v)$
of (\ref{AnoneM})-(\ref{AntwoM2}) through the resolvent of the ITE problem by looking for $(u,v)$ as a sum of  two terms, where the first term $(u_1,v_1)$ satisfies only the boundary conditions, and the second term is the solution of problem (\ref{AnoneM})-(\ref{AntwoM2}) with homogeneous boundary conditions and the right-hand side in the equations defined by the first term. Hence the operator $f\to (u,v)$ has a pole of at most first order at $\lambda=\lambda_0$ if the resolvent of the ITE problem has a pole of first order at $\lambda_0$.
Therefore, (\ref{1212}) implies that the eigenvalues of the operator $R(\lambda)$ may have zeroes only of first order at $\lambda=\lambda_0$.

Now let $\lambda=\lambda_0$ be an ITE, and let $\{\varphi_j(\lambda)\}$ be an analytic in $\lambda,~ |\lambda-\lambda_0|\ll 1,$ orthonormal system of eigenfunctions of the operator $\widehat{R}(\lambda)$ with the eigenvalues $\mu_j(\lambda),~\mu_j(\lambda_0)=0,$ defined in Lemma \ref{26}.  Such a system exists \cite[Example 3, XIII.12]{reed} for an arbitrary self-adjoint and analytic family of  operators when $\lambda_0$ is an isolated eigenvalue of finite multiplicity. Formula (\ref{rhat}) implies that functions $\{\psi_j=D\varphi_j(\lambda_0)\}$ form a basis in the kernel of operator $R(\lambda_0)$.  Since $\|\varphi_j\| \equiv 1$, we have that}
 \begin{equation}\label{lastlast}
 \mu_j'(\lambda_0) = (\widehat{R}'(\lambda)\varphi_j,\varphi_j)|_{\lambda=\lambda_0}=\sigma (R'(\lambda)\psi_j,\psi_j)|_{\lambda=\lambda_0}, \quad 1 \leq j \leq m_i.
\end{equation}

Further, due to (\ref{1212})-(\ref{AntwoM2}), the functions $\psi_j=\psi_j(\lambda_0)$ in the kernel of $R(\lambda_0)$ are the impedance  values (\ref{AntwoM2}) of the components of the eigenfunctions $(u_j,v_j)$ of the ITE problem with the eigenvalue $\lambda=\lambda_0$, i.e. (for transparency, we omit index $j$ below),
\begin{equation}\label{AA1}
\psi=(\frac{\partial u}{\partial \nu} - tu)|_{x\in \partial\mathcal O} =  (\frac{\partial v}{\partial \nu} - tv)|_{x\in \partial\mathcal O}, \quad \lambda=\lambda_0.
\end{equation}
Let $u(\lambda)$ and $v(\lambda)$ be the solutions of (\ref{AnoneM}) and (\ref{AnoneM1}), respectively, with the boundary conditions (\ref{AA1}).  Using the Green formula  and the fact that $\frac{\partial v' }{ \partial \nu} -tv'=0$ at the boundary and $\Delta v' + \lambda n v'=-nv$ in the domain, we obtain that
\begin{equation*}
\frac{d}{d\lambda} ((N^{in}_n(\lambda) - t)^{-1}\psi,\psi)= \int_{\partial \mathcal O} v'\overline{\psi}dS=
\int_{\partial \mathcal O} v'\overline{\psi}dS- \int_{\partial \mathcal O} (\frac{\partial v' }{ \partial \nu} - tv')\overline{v}dS
\end{equation*}
\begin{equation}\label{0410B}
=
\int_{\mathcal O}v'(\overline{\Delta v +\lambda n(x) v})-\int_{\mathcal O} (\Delta v'+\lambda n(x) v')\overline{v} = \int_{\mathcal O} n(x)|v|^2dx.
\end{equation}
A similar relation (with $v$ replaced by $u$) is valid when $n(x)\equiv 1$. Thus
$$
\nu'(\lambda_0) = (\frac{d}{d\lambda}((N^{in}_n- t)^{-1}- (N^{in}_0 - t)^{-1})(\lambda)\psi,\psi)|_{\lambda=\lambda_0} =  \int_{\mathcal O} (n|v|^2-|u|^2)dx,
$$
and therefore (see the last sentence of the first paragraph of the proof)
\begin{equation}\label{l4}
\mu_j'(\lambda_0) = \sigma \int_{\mathcal O} (n|v_j|^2-|u_j|^2)dx\neq 0, \quad 1\leq j\leq m_i.
\end{equation}

Let us show that $\int_{\mathcal O} (nv_i \overline{v}_j- u_i \overline{u}_j)dx = 0$ if $i\neq j, ~\lambda=\lambda_0$. Indeed, similarly to (\ref{0410B}) we obtain
$$
\int_{\mathcal O} (nv_i \overline{v}_j- u_i \overline{u}_j)dx = \sigma(R'(\lambda_0)\psi_i(\lambda_0),\psi_j(\lambda_0)).
$$
The right-hand side can be written as
$$
\sigma(R'(\lambda_0)\psi_i(\lambda_0),\psi_j(\lambda_0))= (\widehat{R}'(\lambda_0)\varphi_i(\lambda_0),\varphi_j(\lambda_0))
$$
$$
=(\widehat{R}(\lambda)\varphi_i(\lambda),\varphi_j(\lambda))'|_{\lambda=\lambda_0}- [(\varphi_i'(\lambda_0),\widehat{R}(\lambda_0)\varphi_j(\lambda_0))+(\widehat{R}(\lambda_0)\varphi_i(\lambda_0),\varphi_j'(\lambda_0))].
$$
The first term here is zero due to the orthogonality of $\varphi_i$ and $\varphi_j$, and the second term is zero since the
eigenvalues of functions $\varphi_i,\varphi_j$ vanish at $\lambda=\lambda_0$.

 Since the signature of the form does not depend on the choice of the basis, we have
\[
{\rm sgn}A=-\sigma\sum_{j=1}^{m_i}{\rm sign}\mu_j'(\lambda_0).
\]
Since $\mu_j'(\lambda_0)\neq 0$, from the definition of $\beta^\pm_i$ given in Step 2 of the proof of Theorem \ref{t3} it follows that the right-hand side in the formula above is equal to $\beta^+_i-\beta^-_i$. It was also shown in the proof of Theorem \ref{t3} that $m^\pm_i=\beta^\pm_i$. This completes the proof of Theorem~\ref{th14}.

\qed

Let us provide some arguments supporting the conjecture stated at the end of the introduction. For simplicity let us assume that there is a unique Jordan block of size $m>1$ corresponding to an ITE $\lambda^T_i$. The relation between the operator $R^{-1}(\lambda)$ and the resolvent of the ITE problem that was established at the beginning of the proof of Theorem \ref{th14} implies that there is a single eigenvalue $\mu(\lambda)$ of the operator $\widehat{R}(\lambda)$  that vanishes at $\lambda=\lambda^T_i$, and this eigenvalue has zero of order $m $ at $\lambda=\lambda^T_i$.
Then from the proof of Theorem \ref{t3} it follows that $\sigma_i=\pm 1$ if $m $ is odd and $\sigma_i=0$ if $m$ is even. Since the ITE problem is symmetric with respect to the indefinite metric  $J(u,v)= \int_{\mathcal O} \left (|u|^2- n|v|^2 \right ) dx$, the same relation is valid for the signature of the matrix $A$, see \cite{Gohberg}.


\begin{thebibliography}{102}

\bibitem{apz} {\sc M.F. Atiyah, V. Patodi, I.M. Singer}, {\em Spectral asymmetry and Riemannian geometry III},  Math. Proc. Cambridge Phil. Soc., 79 (1976), 71-99.

\bibitem{birman}
{\sc M. Birman, D. Yafaev}, {\em Spectral properties of the scattering matrix,} St.
Petersburg. Math. J. 4, no. 6 (1993), pp. 1-27.

   \bibitem{BPaiv} {\sc  E. Blasten, L. Paivarinta}, {\em Completeness of generalized transmission eigenstates,} Inverse Problems, 29, (2013), 104002.

\bibitem{BPaiv2}  {\sc E. Blasten, L. Paivarinta, J. Sylvester,} {\em
Do corners always scatter?,}  Comm. Math. Phys., 331(2), (2014), pp. 725-753.

\bibitem{bleher} {\sc P. Bleher}, {\em  Operators that depend meromorphically on a parameter}, Vestnik Moskov. Univ., Ser. I, Mat. Meh., 24, no. 5 (1969), pp. 30-36 (in Russian).

\bibitem{cch} {\sc D. Colton, L. Paivarinta,} {\em Transmission eigenvalues and a problem of Hans Lewy},  J. of Computational and Applied Math., 117(2), (2000), pp. 91-104.

\bibitem{Cak2010}
{\sc  F. Cakoni, D. Colton, D. Gintides}, {\em The interior transmission eigenvalue
problem}, SIAM J. Math. Anal. 42:6, (2010), pp. 2912-2921.

\bibitem{HadCak} {\sc  F. Cakoni, H. Haddar}, {\em Transmission eigenvalues in inverse scattering theory, } in Inverse
Problems and Applications: Inside Out II, Math. Sci. Res. Inst. Publ. 60, Cambridge
University Press, Cambridge, UK, 2012, pp. 529-580.


\bibitem{cm} {\sc D. Colton, P. Monk}, Quarterly Journal of Mechanics and Applied
Mathematics,41 (1988), pp. 97-125.



\bibitem{Petkov} {\sc M. Dimassi, V. Petkov}, {\em Upper bound for the counting function of interior transmission eigenvalues},  arXiv:1308.2594, 2013.

\bibitem{DorSmil} {\sc E. Doron and U. Smilansky}, {\em Semiclassical quantization of chaotic billiards: A
scattering theory approach},  Nonlinearity 5 (1992), pp. 1055-1084.

\bibitem{EP}{\sc  J.P.Eckmann, C.-A. Pillet}, {\em Spectral Duality for planar billiards,} Commun. Math. Phys. 170, (1995), pp. 283-313.

\bibitem{EP2} {\sc J.P. Eckmann, C.-A. Pillet,} {\em Zeta functions with Dirichlet and Neumann boundary conditions for exterior domains}, Helv. Phys. Acta, 70, {1997}, pp. 44-65.

\bibitem{Faierman} {\sc M. Faierman}, {\em The interior transmission problem: Spectral theory}, SIAM J. Math. Anal., 46
(2014), pp. 803-819.

\bibitem{fried}{\sc L. Friedlander}, {\em Some inequalities between Dirichlet and Neumann eigenvalues, } Archive for Rational Mechanics and Analysis, 116, (1991), pp.153-160.


\bibitem{Gohberg} {\sc I. Gohberg, P. Lancaster, L. Rodman}, {\em Indefinite Linear Algebra and Applications, } Birkhauser, 2005.

\bibitem{jk}{\sc  A. Jensen, T. Kato}, {\em Asymptotic behaviour of the scattering phase for exterior domains. } Comm. Part.
Diff. Equ., 3, (1978), pp. 1165-1195.

\bibitem{kato} {\sc T. Kato}, {\em Monotonicity therems in scattering theory}, Hadronic J., 1 (1978), 134-154.

\bibitem{K86}    {\sc A. Kirsch,} {\em The denseness of the far field patterns for the transmission problem}, IMA J. Appl. Math., 37, (1986), pp. 213-223.

\bibitem{kirschL} {\sc  A. Kirsch,  A. Lechleiter,} {\em The inside--outside duality for scattering problems by inhomogeneous media,} Inverse Problems, 29, 2013, 104011.

\bibitem{LV4}
{\sc E. Lakshtanov, B. Vainberg,} Ellipticity in the interior transmission problem in anisotropic media, SIAM J. Math. Anal., 44 (2012), 1165-1174.

\bibitem{lakvain5} {\sc E. Lakshtanov, B. Vainberg, } {\em Remarks on interior transmission eigenvalues, Weyl formula and branching billiards,} J. Phys. A: Math. Theor., 45, (2012), 125202.


\bibitem{lakvain6}{\sc E. Lakshtanov, B. Vainberg, } {\em Bounds on positive interior transmission eigenvalues, } Inverse Problems, 28, (2012), 105005.


\bibitem{lakvain7}{\sc E. Lakshtanov, B. Vainberg, }{\em Applications of elliptic operator theory to the isotropic interior transmission eigenvalue problem, } Inverse Problems, 29, (2013), 104003.

\bibitem{lakvain8}{\sc E. Lakshtanov, B. Vainberg, } {\em Weyl type bound on positive Interior Transmission Eigenvalues, } Comm. Partial Diff. Equations, 39, N 9, (2014), pp.1729-1740.

\bibitem{nonreal} {\sc Y.J. Leung, D. Colton, } {\em Complex transmission eigenvalues for spherically stratified media,}
 Inverse Problems, 28 (2012), 075005.


\bibitem{mr}{\sc A. Majda and J. V. Ralston, } {\em An analogue of Weyl's theorem for unbounded domains.} I, Duke Math. J., 45, no. 1, (1978), pp.183-196.


\bibitem{pv} {\sc V. Petkov, G. Vodev,} {\em Asymptotics of the number of the interior transmission eigenvalues,}  arXiv:1403.3949, 2014.

\bibitem{stefanov} {\sc H. Pham, P. Stefanov, }{\em Weyl asymptotics of the transmission eigenvalues for a constant index of refraction,} Inverse Problems and Imaging, Volume 8, No. 3, (2014), 795-810.


\bibitem{push} {\sc A. Pushnitski, }{\em The spectral shift function and the invariance principle}, J. Func. Analysis, 183, No 2 (2001), 269-320.


\bibitem{reed}{\sc M. Reed, B. Simon, } {\em Methods of Modern Mathematical Physics}, IV, Academic Press, 1978.

\bibitem{Robert} {\sc D. Robert, } {\em Asymptotique de la phase de diffusion a haute energie pour des perturbations du second ordre du Laplacien,}
Annales scientifiques de l'École Normale Supérieure, 25, No 2, (1992), 107-134.


\bibitem{Rob2}{\sc L. Robbiano, }Spectral analysis on interior transmission eigenvalues, Inverse Problems, 29(10), (2013), 104001.


\bibitem{RSleeman}{\sc     B.P. Rynne, B.D. Sleeman, }The interior transmission problem and inverse scattering from inhomogeneous media, SIAM J. Math. Anal., 22 (1991), pp. 1755-1762.

\bibitem{safvas}{\sc Yu. Safarov and D. Vassiliev, }{\it The Asymptotic Distribution of Eigenvalues of Partial Differential Operators}, American Mathematical Society, 1997.


\bibitem{saf} {\sc O. Safronov, }The discrete spectrum in gaps of the continuous spectrum for indefinite-sign perturbations with a large coupling constant, St. Petersburg Math. J., 8,  No. 2, (1997), pp. 307-331.

\bibitem{saf1} {\sc O. Safronov, }Spectral shift function in the large coupling constant limit, J. Funct. Anal. 182 (2001), no. 1, pp. 151-169.

\bibitem{Smil} {\sc U. Smilansky, }Semiclassical Quantization of Chaotic Billiards - a
Scattering Approach, in {\it Mesoscopice Quantum Physics}, Proc. of the Les Houches Summer School on Mesoscopic Quantum Physics, Elsevier Science Publ (1995) Ed. E. Akkermans, et al., 373-434.


\bibitem{sylv}{\sc J. Sylvester, }Discreteness of Transmission Eigenvalues via Upper Triangular Compact Operators, SIAM J. Math. Anal., 44(1) (2012), pp. 341-354.



\end{thebibliography}
\end{document}